\newtheorem{proposition}{Proposition}
\newtheorem{theorem}{Theorem}
\newtheorem{lemma}{Lemma}
\newtheorem{corollary}{Corollary}
\theoremstyle{definition}
\newtheorem{remark}{Remark}
\newtheorem{definition}{Definition}
\newcommand{\R}{\mathbb{R}} 
\newcommand{\C}{\mathbb{C}} 
\newcommand{\F}{\mathbb{F}} 
\newcommand{\Z}{\mathbb Z} 
\newcommand{\T}{\mathbb T} 
\newcommand{\Tr}[1]{{\rm Tr} #1} 
\newcommand{\abs}[1]{\left| #1 \right|} 
\newcommand{\tr}[1]{\mathrm{tr}\left[#1\right]} 
\let\baraccent=\= 
\renewcommand{\=}[1]{\stackrel{#1}{=}} 
\newcommand{\hi}{\mathcal{H}} 
\newcommand{\lh}{\mathcal{L(H)}} 
\newcommand{\elle}[1]{\mathcal{L}(#1)} 
\newcommand{\kb}[2]{|#1\rangle\langle#2|} 
\newcommand{\no}[1]{\left\|#1\right\|} 
\newcommand{\id}{I} 
\newcommand{\fii}{\varphi}
\newcommand{\cp}{\mathcal{P}}
\newcommand{\cpk}{\mathcal{P}^{(k)}}
\newcommand{\cpl}{\mathcal{P}^{(\ell)}}
\newcommand{\sfe}{{\sf E}}
\newcommand{\sfp}{{\sf P}}
\newcommand{\sfg}{{\sf G}}
\newcommand{\Eo}{\mathsf{E}}
\newcommand{\Go}{\mathsf{G}}
\newcommand{\Po}{\mathsf{P}}
\newcommand{\Qo}{\mathsf{Q}}
\newcommand{\vl}{\mathbf{l}} 
\newcommand{\vu}{\mathbf{u}} 
\newcommand{\vv}{\mathbf{v}} 
\newcommand{\vw}{\mathbf{w}} 
\newcommand{\vnull}{\mathbf{0}}
\newcommand{\pair}[2]{\langle #1, #2\rangle}
\newcommand{\sym}[2]{\left[\,#1\, , \,#2\,\right]} 
\newcommand{\dual}[2]{\left\langle\,#1\, , \,#2\,\right\rangle} 
\newcommand{\ff}{\mathcal{F}}
\newcommand{\sas}{\mathrm{S}}
\begin{document}

\title{An operational link between MUBs and SICs}

\author{Roberto Beneduci}
\email{roberto.beneduci@unical.it}
\affiliation{\small Dipartimento di Matematica, Universit\`a della Calabria, Cosenza, 
and INFN gruppo collegato Cosenza, Cosenza, Italy}

\author{Thomas J Bullock}
\email{tjb525@york.ac.uk}
\affiliation{\small Department of Mathematics, University of York, York, UK}

\author{Paul Busch}
\email{paul.busch@york.ac.uk}
\affiliation{\small Department of Mathematics, University of York, York, UK}

\author{Claudio Carmeli}
\email{claudio.carmeli@gmail.com}
\affiliation{\small D.I.M.E., Universit\`a di Genova, Savona, Italy}

\author{Teiko Heinosaari}
\email{teiko.heinosaari@utu.fi}
\affiliation{\small Turku Centre for Quantum Physics, Department of Physics and Astronomy, University of Turku, Finland}

\author{Alessandro Toigo}
\email{alessandro.toigo@polimi.it}
\affiliation{\small Dipartimento di Matematica, Politecnico di Milano, Milano, and I.N.F.N.,
Sezione di Milano, Italy}

\date{\small \today}

\begin{abstract}
We exhibit an operational connection between mutually unbiased bases  and symmetric 
informationally complete positive operator-valued measures.
Assuming that the latter exists, we show that there is a strong link between these two structures in all prime power dimensions.
We also demonstrate that a similar link cannot exist in dimension 6.
\end{abstract}

\maketitle

\section{Introduction}

The existence of {\em symmetric informationally complete positive operator valued 
measures} (SIC POVMs, or simply  SICs) in arbitrary dimensions and the existence of complete 
systems of $d+1$ mutually unbiased bases (MUBs) in complex Hilbert spaces of dimensions 
$d$ other than prime power have remained unsolved despite much effort expended on 
them over the last ten years or so. Connections with finite geometries have been exhibited
for both SICs and MUBs, including the intriguing fact that the relevant geometries for either 
of them are equivalent (where they exist) \cite{Grassl2004,Wootters2006}. The construction 
schemes proposed so far for SICs and MUBs are essentially mathematically motivated and 
based on combinatorial, group theoretic, algebraic or finite geometric structures 
\cite{Zauner1999,Bandyopadhyay2002,Gibbons2004,Renes2004,Saniga2004,Howe2005,Gurevich2008,Appleby2009,
Paterek2009,Durt2010,ScottGrassl2010,Revzen2012}.

Here we elucidate a physically motivated connection between SICs and MUBs, which 
emerges quite naturally in some respects.
Our main tools will be the generalisation of a SIC POVM into a \emph{SIC system} and of MUBs into \emph{mutually unbiased POVMs}.

The basic idea is easily established by 
considering the qubit case: there a SIC is found to be a natural joint observable of three 
POVMs that are smearings of three mutually unbiased binary observables. 
It turns out that these POVMs have a similar relation as mutually unbiased bases and they will therefore be referred to as mutually unbiased POVMs.  
Characteristic properties of mutually unbiased POVMs arising as marginals of SICs in dimension $d$ are established; these properties will be 
referred to as SIC compatibility conditions. 
We show that whenever a complete system of mutually unbiased commutative POVMs is obtained from a SIC, it is possible to extract a complete system of MUBs.

The condition for a SIC possessing $d+1$ mutually unbiased POVMs as marginals is found to be equivalent to 
there being $d-1$ orthogonal Latin squares of size $d$. Conversely, given a complete 
system of mutually unbiased POVMs for dimension $d$, a system of operators can be canonically constructed 
such that it contains these mutually unbiased POVMs as marginals whenever there are $d-1$ orthogonal Latin squares 
of order $d$. This system possesses all properties of a SIC except, possibly, positivity. Such a
collection of operators will be called a SIC system. For each complete MUB system there is an infinite
variety of SIC systems. It is an open problem to identify SICs among them where they exist.

Finite covariant phase space observables constitute a concrete application of the general framework sketched above. Here, we recall that finite covariant phase space observables are POVMs based on finite phase spaces (see \cite{Gibbons2004}) and are covariant with respect to the discrete Weyl-Heisenberg representation acting on a $d=p^n$-dimensional Hilbert space (with $p$ a prime number different from $2$, and $n$ a positive integer). We show that every finite covariant phase space observable admits $d+1$ mutually unbiased and commutative marginal POVMs that are smearings of $d+1$ MUBs. As an application of the general results developed in the first part of the paper, in Theorem \ref{theo:opcharct} we then characterize SIC covariant phase space observables in terms of a particular property of their mutually unbiased marginal POVMs.

Our construction scheme for MUBs from SICs and vice versa may be seen as a possible realisation 
of the construction scheme outlined by Wootters \cite{Wootters2006}; 
the equivalence of the associated finite geometries pointed out there is found here
to systematically underpin the SIC-MUB bridge via the marginality relation employed.

\section{MUBs, SICs, and their connection in $d=2$}

\subsection{MUBs and their generalisation}

Throughout the paper the underlying complex Hilbert space will be of dimension $d$ and denoted $\hi_d$
or occasionally $\mathbb{C}^d$. 
A basic structure to be used is a function $k\mapsto \Eo(k)$ from some finite set $\Omega$ of outcomes to the set of positive operators on $\hi_d$ with the property that $\sum_k \Eo(k)=\id$, the identity operator on $\hi_d$. 
This kind of function determines a \emph{positive operator valued measure}, in short, POVM. 
We refrain from using the appropriate measure
theoretic language and simply refer to $\sfe$ as a POVM. If the elements of a POVM are projections we will
call $\sfe$ a projection valued, or spectral, measure (PVM).

We begin by recalling the basic definition.
\begin{definition}\label{def:mub}
A family of orthonormal bases $\{ \fii^{k}_i \}_{i=0}^{d-1}$, $k=1,2,\dots, n$ of  a 
$d$-dimensional Hilbert space $\mathcal{H}_d$ are said to be \emph{mutually unbiased} if, for any $i,j$, 
\begin{equation}\label{eqn:mu-bases}
\bigl|{\bigl\langle{\fii^{k}_i}\big|{\fii^{\ell}_j}\bigr\rangle}\bigr|^2 = \frac 1d\,(1-\delta_{k\ell})+\delta_{k\ell}\delta_{ij}.
\end{equation}
\end{definition}

Let us note that if $\{\varphi^1_j\}$ and $\{\varphi^2_k\}$ are mutually unbiased orthonormal bases, then $\{\alpha_j\varphi^1_j\}$ and $\{\beta_k\varphi^2_k\}$
are also mutually unbiased  for all complex numbers $\alpha_j,\beta_k\in\T:=\{z\in\C\mid |z|=1\}$.
Mutual unbiasedness is thus actually a property of the corresponding projections.
Namely, define $\Po^1(j)=\kb{\varphi^1_j}{\varphi^1_j}$ and $\Po^2(k)=\kb{\varphi^2_k}{\varphi^2_k}$.
Then \eqref{eqn:mu-bases} translates into
\begin{align}\label{eqn:mu-pvms}
\tr{\Po^1(j)\Po^2(k)}=1/d \, .
\end{align}
We will say that a family of PVMs whose elements satisfy \eqref{eqn:mu-pvms} are mutually unbiased.
Actually, from this we see that mutual unbiasedness admits a direct generalisation to all $d$-outcome POVMs.

\begin{definition}
Let $J$ be an index set and $\{\Eo^j \mid j\in J\}$ a set of POVMs, with each $\Eo^j$ being based on a space $I_j$, where $|I_j| = d$. We say that the POVMs $\Eo^j$, $j\in J$, are \emph{mutually unbiased (MU)} if
\begin{align*}
\tr{\Eo^j(i_j)\Eo^k(i_k)}=1/d
\end{align*}
for all $j,k\in J$ with $j\neq k$, and $i_j \in I_j$, $i_k \in I_k$.
\end{definition}

It is known that there are at most $d+1$ 
mutually unbiased bases (MUBs) in $\hi_d$ \cite{Wootters1989} and that
sets of three MUBs can be constructed in any dimension \cite{KlappRot2004}.
If $d$ is a prime power, that is, $d=p^n$ for some prime number $p$ and $n\geq 1$, 
then $d+1$ MUBs can be constructed in $\mathcal{H}_d$ \cite{Wootters1989}. However, for values of $d$ 
that are not prime powers, this limit does not seem to be reached. For example, there
is extensive numerical evidence suggesting that in $d=6$ no set of  
three MUBs may be extended to a set with more MUBs \cite{Butterley2007}.

\subsection{SICs and their generalisation}
	
\begin{definition}\label{def:SIC}
A POVM $\Go$, with $d^2$ outcomes, acting on $\hi_d$ is \emph{symmetric informationally complete} (SIC POVM or simply SIC) if
\begin{align}
& \tr{\Go(k)\Go(\ell)} = \frac{1}{d^2(1+d)} \qquad \forall k\neq\ell , \label{eqn:SIC1} \\
& \tr{\Go(k)^2} = \frac{1}{d^2} \qquad \forall k  \, . \label{eqn:SIC2}
\end{align}
\end{definition}

It follows from this definition that
\begin{align}
\tr{\Go(k)} = \sum_\ell \tr{\Go(k)\Go(\ell)}=\frac{1}{d} \quad \forall  k  \, .\label{eqn:SIC3} 
\end{align}
This, together with \eqref{eqn:SIC2}, implies that for each $k$, $d\Go(k)$ is a rank-1 projection.

A SIC POVM $\Go$ is automatically informationally complete, meaning that no two states give the same expectations for all $\Go(k)$.
This property follows from the fact that the operators $\Go(k)$ form a basis of the vector space $\mathcal{L}(\hi_d)$ of linear operators on $\hi_d$, which can be shown
directly from the definition or obtained as a consequence of the following interesting lemma \cite{ColCorbDurtGross2005}.

\begin{lemma}
Let $\Go$ be a SIC on $\hi_d$. 
Then the $d^2$ operators 
\begin{equation}
d\Go(k)- t I\, , \quad \textrm{where} \quad t=\frac{1}{d}\left(1-\sqrt{\frac 1{d+1}}\right) \, , 
\end{equation}
 form an orthogonal basis with respect to the Hilbert-Schmidt inner product in $\mathcal{L}(\hi_d)$. 
\end{lemma}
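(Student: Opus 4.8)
The plan is to verify directly that the $d^2$ self-adjoint operators $A_k := d\Go(k) - tI$ are pairwise orthogonal and of nonzero norm with respect to the Hilbert--Schmidt inner product $\langle A,B\rangle = \tr{A^\dagger B}$. Since $\mathcal{L}(\hi_d)$ has dimension $d^2$, any collection of $d^2$ pairwise orthogonal nonzero vectors is automatically a basis, so establishing these two facts suffices. Each $\Go(k)$ is positive and $tI$ is self-adjoint, so every $A_k$ is self-adjoint and each inner product reduces to a trace of a product of SIC elements and the identity.

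First I would expand, for $k\neq\ell$,
\[
\tr{A_kA_\ell} = d^2\tr{\Go(k)\Go(\ell)} - dt\,\tr{\Go(k)} - dt\,\tr{\Go(\ell)} + t^2\tr{I},
\]
and then substitute the SIC relations \eqref{eqn:SIC1} and \eqref{eqn:SIC3} together with $\tr{I}=d$. This collapses to
\[
\tr{A_kA_\ell} = \frac{1}{1+d} - 2t + dt^2 .
\]
The key observation is that the stated value of $t$ is precisely a root of the quadratic $dt^2 - 2t + \tfrac{1}{1+d} = 0$: its two roots are $t = \tfrac1d\bigl(1\pm\sqrt{1/(d+1)}\,\bigr)$, and the quoted $t$ is the one carrying the minus sign. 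Hence all off-diagonal inner products vanish.

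For the diagonal terms I would use \eqref{eqn:SIC2}, $\tr{\Go(k)^2}=1/d^2$, to obtain $\tr{A_k^2} = 1 - 2t + dt^2$. Invoking the same quadratic relation $dt^2 - 2t = -\tfrac{1}{1+d}$ then gives
\[
\tr{A_k^2} = 1 - \frac{1}{1+d} = \frac{d}{1+d} > 0,
\]
so no $A_k$ is zero. Combining orthogonality with this strict positivity, the $d^2$ operators $A_k$ are linearly independent and therefore form an orthogonal basis of $\mathcal{L}(\hi_d)$.

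There is no genuine obstacle here: the argument is a short computation driven entirely by the two defining SIC identities. The only point requiring care is selecting the correct root of the quadratic in $t$ to match the sign in the stated value. It is worth noting that either root in fact annihilates the off-diagonal inner product, and both yield the same squared norm $d/(1+d)$, since the quantity $dt^2 - 2t$ is fixed by the quadratic relation independently of the chosen sign.
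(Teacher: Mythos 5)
Your proof is correct and follows essentially the same route as the paper: expand the Hilbert--Schmidt inner product for $k\neq\ell$ using the SIC relations and observe that the stated $t$ is a root of $dt^2-2t+\tfrac{1}{d+1}=0$. The only difference is that you also verify the diagonal norms $\tr{A_k^2}=d/(d+1)>0$ explicitly, a point the paper leaves implicit; this is a welcome bit of completeness rather than a different argument.
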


\begin{proof}
Consider  $k\neq\ell$, then
\begin{align*}
\tr{(d\Go(k)- t I)(d\Go(\ell)- t I)}=dt^2-2t+\frac 1{d+1}=0
\end{align*}
since it is easily verified that the above value of $t$ is a root of this polynomial.
\end{proof}

The existence of a SIC POVM has been proved analytically for a number of low-dimensional cases \cite{ScottGrassl2010,Appleby2012a,Appleby2012b} and numerically verified (as of 2010) up to the dimension $67$ \cite{ScottGrassl2010}.

Since the existence of SIC POVMs in an arbitrary dimension is an open problem, we find it useful to introduce the following generalisation. We emphasize, however, that unlike the previous definitions this one has only mathematical motivation.

\begin{definition} 
A collection of selfadjoint operators $\{\Go(i)\}_{i=0}^{d^2-1}$ on $\hi_d$ that satisfiy the SIC conditions \eqref{eqn:SIC1}-\eqref{eqn:SIC2}, and the normalisation $\sum_i \Go(i)=\id$, is called a {\em SIC system}.
\end{definition}

Note that positivity of the $\Go(i)$'s is not required in the definition of a SIC system.

\subsection{The SIC-MUB connection in $\hi_2$}\label{sec:qubit}

We show how a complete set of MUBs can be extracted from a SIC POVM in the case of a qubit system. 
A SIC $\Go$ consists of four positive rank-1 operators 
\begin{equation}
	\Go(k)=\frac{1}{4}(I + \boldsymbol{s}_k\cdot \boldsymbol{\sigma}),\label{eq:SICform}
\end{equation}
where $k\in \mathbb{Z}_4$, $\no{\boldsymbol{s}_k} =1$, and $\boldsymbol{\sigma}= (\sigma_x,\sigma_y,\sigma_z)$ 
is the vector whose elements are the three Pauli matrices. 
The SIC condition reads 
\begin{equation*}
\tr{\Go(i) \Go(j)} = \frac{1}{12} = \frac{1}{8} (1+ \boldsymbol{s}_i\cdot \boldsymbol{s}_j)
\end{equation*}
that is,
$$
\boldsymbol{s}_i \cdot \boldsymbol{s}_j = -\frac{1}{3}\quad \text{if}\quad i\neq j.
$$
This means that the vectors $\boldsymbol{s}_i$ point to the vertices of a regular tetrahedron that lies on the
surface of the Bloch sphere. In particular, their sum is the null vector, which ensures that  $\sum_k \Go(k)=\id$. 

One can form three pairs of binary marginal POVMs:
\begin{equation}\label{eqn:sic-mub-d2}
\begin{split}
\Eo^{1}(+)=\Go(0)+\Go(1),&\quad \Eo^{1}(-)=\Go(2)+\Go(3),\\
\Eo^{2}(+)=\Go(0)+\Go(2),&\quad \Eo^{2}(-)=\Go(1)+\Go(3),\\
\Eo^{3}(+)=\Go(0)+\Go(3),&\quad \Eo^{3}(-)=\Go(1)+\Go(2).
\end{split}
\end{equation}
The positive operators $\Eo^{k}(\pm)$ are of the form
\begin{equation}
	\Eo^{k}(\pm) = \frac{1}{2}(I \pm \boldsymbol{m}_k \cdot \boldsymbol{\sigma}).
\end{equation}
where
\begin{equation}
	\boldsymbol{m}_k := \frac{1}{2}(\boldsymbol{s}_0 + \boldsymbol{s}_k),\quad k=1,2,3.
\end{equation}
It is easily seen that
\begin{equation}
	\boldsymbol{m}_k \cdot \boldsymbol{m}_\ell =\frac13\delta_{k\ell}.
\end{equation}

In order to diagonalize the $\Eo^{k}(\pm)$, we introduce the unit vectors 
$\boldsymbol{n}_k$ via $\boldsymbol{m}_k = \frac{1}{\sqrt{3}}\boldsymbol{n}_k\equiv\lambda\boldsymbol{n}_k$, 
and PVMs $\Po^{k}(\pm) := \frac{1}{2}(I \pm \boldsymbol{n}_k\cdot\boldsymbol{\sigma})$.  
Then	
\begin{align}
	\Eo^{k}(\pm) &= \lambda\Po^k(\pm) + \frac{1}{2}(1-\lambda)I\label{eqn:smearing}\\
	&= \frac{1}{2} (1+\lambda)\Po^{k}(\pm) + \frac{1}{2}(1-\lambda)\Po^{k}(\mp).\label{eqn:spectral}
\end{align}
The  POVMs $\sfe^{k}$ are thus found to arise as \emph{smearings} 
of the PVMs $\sfp^{k}$. 
Furthermore, one has
\begin{align}
\tr{\Po^{k}(\pm) \Po^{\ell}(\pm)}=\frac{1}{2}\quad \mathrm{for} \; k\neq \ell.		
\end{align}
Hence, the three PVMs $\Po^{1}$, $\Po^{2}$ and $\Po^{3}$ are mutually unbiased.
	 
In order to construct a SIC  from a  complete set of MUBs, one can reverse the steps taken 
in the above construction. Given a system of three MUBs, $\{\fii_+^{k},\fii_-^{k}\}$, $k=1,2,3$, define the PVMs $\kb{\fii_\pm^{k}}{\fii_\pm^{k}}=:\Po^{k}(\pm)\equiv\frac12(I\pm\boldsymbol{n}_k \cdot \boldsymbol{\sigma})$. 
Then apply the smearing according to \eqref{eqn:smearing} for some $\lambda\in[0,1]$ to define the three POVMs
$\Eo^{k}$.  
We note that these POVMs are mutually unbiased independently of the value of $\lambda$, i.e.,
\begin{align}
\tr{\Eo^{k}(\pm) \Eo^{\ell}(\pm)} = \frac 12\quad\text{for}\quad k\ne \ell \, .
\end{align}

Next, observe that the system of equations \eqref{eqn:sic-mub-d2}
can be solved; by summing all those triples of the positive operators $\Eo^{k}(\pm)$ that share one of 
the four operators $\Go(i)$ and repeatedly using the fact that $\Go(0)+\Go(1)+\Go(2)+\Go(3)=I$ one finds:
\begin{equation}
\begin{split}
\Eo^{1}(+)+\Eo^{2}(+)+\Eo^{3}(+)&=2\Go(0)+I,\\
\Eo^{1}(+)+\Eo^{2}(-)+\Eo^{3}(-)&=2\Go(1)+I,\\
\Eo^{1}(-)+\Eo^{2}(+)+\Eo^{3}(-)&=2\Go(2)+I,\\
\Eo^{1}(-)+\Eo^{2}(-)+\Eo^{3}(+)&=2\Go(3)+I.
\end{split}
\end{equation}
Thus, given the POVMs $\Eo^{k}$, one can define operators $\Go(i)$ as follows:
\begin{equation}\label{eqn:mub-sic-d2}
\begin{split}
\Go(0)&:=\frac12\left(\Eo^{1}(+)+\Eo^{2}(+)+\Eo^{3}(+)-I\right),\\
\Go(1)&:=\frac12\left(\Eo^{1}(+)+\Eo^{2}(-)+\Eo^{3}(-)-I\right),\\
\Go(2)&:=\frac12\left(\Eo^{1}(-)+\Eo^{2}(+)+\Eo^{3}(-)-I\right),\\
\Go(3)&:=\frac12\left(\Eo^{1}(-)+\Eo^{2}(-)+\Eo^{3}(+)-I\right).
\end{split}
\end{equation}
Equations \eqref{eqn:mub-sic-d2} entail immediately that  the operators $\Go(k)$ are selfadjoint and $\sum_k\Go(k)=I$.
But at this point, we need to choose $\lambda={1}/{\surd{3}}$ to make the operators $\Go(k)$ positive.
With that choice, it is straightforward to verify that $\Go$ is actually a SIC.

\section{From SICs to MUBs}\label{sec:SIC->MUB}

\subsection{Marginals of a SIC POVM}\label{sec:margPOVM}
	
We now consider the route of constructing (up to) $d+1$ MUBs from the $d^2$ 
elements of a SIC $\sfg$ on $\hi_d$. First, note that
any partition of $\sfg$ into disjoint bins constitutes a marginal POVM, where each of its positive
operators is given as the sum of the SIC elements contained in one of the bins.
We will focus on partitions into $d$ bins of size $d$; these have the property that the positive
operators obtained by summing the SIC elements of each bin are unit trace.
\begin{definition}\label{def:d-part}
A partition of a set of $d^2$ elements, $\{a_1,a_2,\dots,a_{d^2} \}$, into $d$ bins of $d$ elements will be called a {\em $d$-partition}.
\end{definition}
\noindent The structure of a $d$-partition will be required as it is the only kind of partition such 
that the positive operators obtained as the sums of the SIC elements within each bin have trace equal to 1.
As we deal only with $d$-partitions we will also occasionally refer to them simply as {\em partitions}.

Clearly there exist many $d$-partitions. If we place the $d^2$ elements of $\{a_1,a_2,\dots,a_{d^2} \}$ 
into a $d \times d$-array,  then 
letting the rows be bins gives a first $d$-partition; and choosing the columns as bins 
gives a second (see Fig.~\ref{fig:1-overlap}). These two $d$-partitions, $\cp^{(1)}$ and $\cp^{(2)}$, which we 
shall  refer to as the {\em Cartesian partitions}, have the following relevant 
property: any pair of bins, one from each of the two partitions, share exactly one element of 
the $d\times d$ array. 
\begin{definition}
Any two distinct $d$-partitions, $\cp^{(k)}=\{\cpk_1\dots,\cpk_d\}$ and 
$\cp^{(\ell)}=\{\cpl_1,\dots,\cpl_d\}$, satisfying 
the condition that the cardinalities of all bin intersections $|\cpk_\mu\cap\cpl_\nu| =1$ will be said to 
have (or share) the {\em 1-overlap property}.
\end{definition}
For a given bin $\cpk_\nu$ we define the trace-one operator 
\begin{align}
\Eo^{k}(\nu) = \sum_{\Go(i) \in \cpk_\nu} \Go(i). \label{eq:margeffects}
\end{align}
For each $d$-partition $\cpk$ of $\sfg$, the set of positive operators $\{\Eo^{k}(\nu)\}_{\nu=1}^d$
forms a marginal POVM $\sfe^{k}$ of the given SIC.

\begin{figure}
\begin{center}
\includegraphics[scale=0.2]{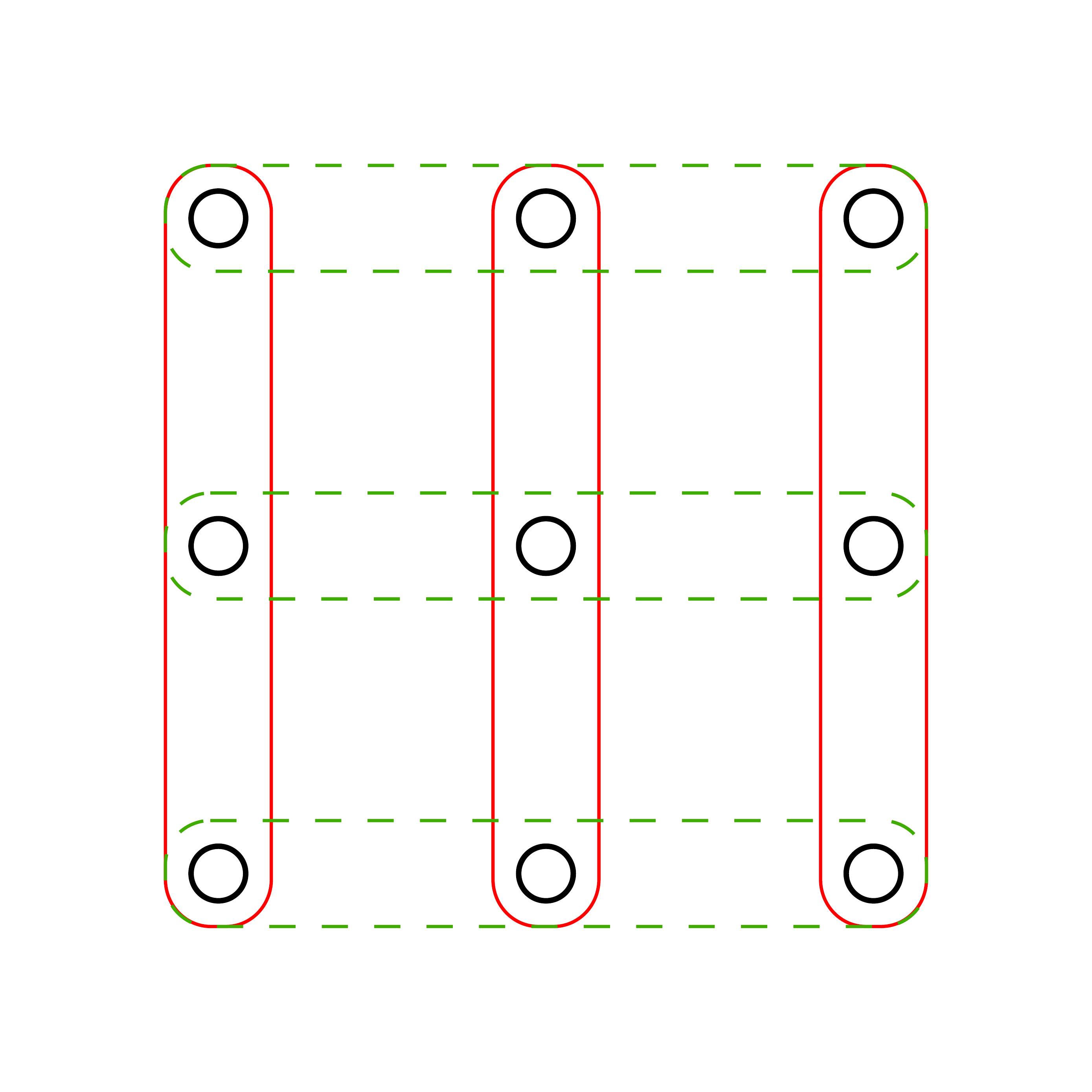}
\end{center}
\caption{Cartesian partitions. (Color online.)}
\label{fig:1-overlap}
\end{figure}

The following is an immediate  consequence of the defining properties of a SIC.

\begin{theorem}\label{thm:mup-marg}
Let $\sfg$ be a SIC, and let $\cpk,\,\cpl$ be two $d$-partitions of $\sfg$ with the 1-overlap property, 
with associated marginal POVMs $\sfe^{k}$ and $\sfe^{\ell}$.
Then
\begin{align}
\tr{\Eo^{k}(\mu)}&=1,\quad \mu=1,\dots,d;\label{eqn:muptr1}\\
\tr{(\Eo^{k}(\mu))^2}&=\frac2{d+1},\quad \mu=1,\dots,d;\label{eqn:siccompatd-1}\\
\tr{\Eo^{k}(\mu) \Eo^{k}(\nu)}&=\frac 1{d+1},\quad\mu\ne\nu;\label{eqn:siccompatd-2}\\
\tr{\Eo^{k}(\mu) \Eo^{\ell}(\nu)}&=\frac 1d,\quad \mu,\nu=1,\dots,d\quad \text{for }k\ne\ell.\label{eqn:mupdef}
\end{align}
\end{theorem}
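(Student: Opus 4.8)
The plan is to prove all four identities by the same elementary strategy: expand each marginal operator $\Eo^k(\mu)=\sum_{\Go(i)\in\cpk_\mu}\Go(i)$ as a sum of $d$ SIC elements, substitute into the relevant trace, and then split the resulting double sum into diagonal ($i=j$) and off-diagonal ($i\neq j$) contributions, evaluating each via the SIC relations \eqref{eqn:SIC1} and \eqref{eqn:SIC2} (with the derived trace \eqref{eqn:SIC3}). The only nontrivial ingredients are the bin size $d$ and the 1-overlap hypothesis; everything else is bookkeeping.

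First, \eqref{eqn:muptr1} follows immediately from linearity of the trace: a bin holds $d$ elements, each of trace $1/d$ by \eqref{eqn:SIC3}, so the sum is $1$. For \eqref{eqn:siccompatd-1} I would write $(\Eo^k(\mu))^2=\sum_{i,j\in\cpk_\mu}\Go(i)\Go(j)$ and separate the $d$ diagonal terms, each contributing $1/d^2$ via \eqref{eqn:SIC2}, from the $d(d-1)$ off-diagonal terms, each contributing $\frac{1}{d^2(d+1)}$ via \eqref{eqn:SIC1}; these combine to $2/(d+1)$.

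Next, for \eqref{eqn:siccompatd-2} the key observation is that two distinct bins of the \emph{same} partition are disjoint, so in the expansion of $\tr{\Eo^k(\mu)\Eo^k(\nu)}$ all $d^2$ cross terms have $i\neq j$; applying \eqref{eqn:SIC1} to each yields $d^2\cdot\frac{1}{d^2(d+1)}=1/(d+1)$. This contrasts with \eqref{eqn:siccompatd-1}, where the two indices range over the \emph{same} bin and so diagonal terms survive---keeping the two cases straight is the first place to be careful.

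Finally, \eqref{eqn:mupdef} is the step where the 1-overlap property does the real work, and I expect it to be the main point requiring attention, since it is the only identity whose value is not forced by disjointness or self-intersection alone. Here $\cpk_\mu$ and $\cpl_\nu$ come from different partitions and, by hypothesis, meet in exactly one SIC element. Consequently, among the $d^2$ pairs $(i,j)$ in the expansion of $\tr{\Eo^k(\mu)\Eo^\ell(\nu)}$, precisely one is diagonal and $d^2-1$ are off-diagonal. The single diagonal term gives $1/d^2$ by \eqref{eqn:SIC2}, and the remaining $d^2-1$ terms give $(d^2-1)\cdot\frac{1}{d^2(d+1)}=(d-1)/d^2$ by \eqref{eqn:SIC1}, which sum to exactly $1/d$. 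The one thing worth double-checking is that the 1-overlap condition guarantees exactly one diagonal pair---neither zero nor more---since this single term is precisely what shifts the value from the intra-partition $1/(d+1)$ to the mutually unbiased value $1/d$.
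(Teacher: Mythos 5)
Your proof is correct and follows essentially the same route as the paper's: expand each marginal into its $d$ SIC elements, split the resulting double sums into diagonal and off-diagonal contributions, and evaluate them via \eqref{eqn:SIC1}--\eqref{eqn:SIC3}, with the 1-overlap hypothesis supplying exactly one diagonal term in the cross-partition case. Your term counts ($d^2$ off-diagonal pairs for distinct bins of one partition; one diagonal plus $d^2-1$ off-diagonal pairs across partitions) match the paper's computation exactly.
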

\begin{proof}
The trace-1 property is immediate.
\begin{widetext}
\begin{align*}
\tr{\left(\Eo^{k}(\nu)\right)^2} &= \sum_{\Go(i) \in \cpk_\nu} \tr{(\Go(i))^2}+ \sum_{\substack{\Go(i),\Go(j)\in\cpk_\nu\\ i\neq j}} \tr{\Go(i) \Go(j)} 
					=  \frac{2}{d+1};\\
\tr{\Eo^{k}(\nu) \Eo^{k}(\mu)} &= \sum_{\Go(i) \in \cpk_\nu} \sum_{\Go(j) \in \cpk_\mu} \tr{\Go(i) \Go(j)}  
					= \frac{d^2}{d^2(d+1)} = \frac{1}{d+1}\quad \mathrm{where} \ \mu\neq \nu;\\
\tr{\Eo^{k}(\nu) \Eo^{\ell}(\mu)}=& \tr{\Go\big(i(\nu,\mu)\big)^2} +
					 \sum_{\substack{\Go(n) \in \cpk_\nu\\ n\ne i(\nu,\mu)}}
					 \sum_{\substack{\Go(m) \in \cpl_\mu\\m\ne i(\nu,\mu)}} \tr{\Go(n) \Go(m)} \notag \\
					 &+\sum_{\substack{\Go(m) \in \cpl_\mu\\m\ne i(\nu,\mu)}} \tr{\Go\big(i(\nu,\mu)\big) \Go(m) }
					 + \sum_{\substack{\Go(n) \in \cpk_\nu\\ n\ne i(\nu,\mu)}} \tr{\Go\big(i(\nu,\mu)\big)\Go(n)} \notag\\
					=& \frac{1 +d - 1}{d^2} = \frac{1}{d},\quad\text{where}\ \nu\ne\mu.
\end{align*}
\end{widetext}
Here $\Go\big(i(\nu,\mu)\big)$ is the SIC element for which $i(\nu,\mu)$ is the unique shared label in 
$\cpk_\nu \cap \cpl_\mu$.  
\end{proof}

\begin{remark}\label{rem:sicmupprops}
We note for later use that (\ref{eqn:mupdef}) implies (\ref{eqn:muptr1}) and (\ref{eqn:siccompatd-2}) 
implies (\ref{eqn:siccompatd-1}) due to the normalisation of the POVMs.
\end{remark}

It is worth noting that the effects $\Eo^{k}(\mu)$ in the marginal POVMs $\sfe^{k}$ are linearly independent.

\begin{proposition}\label{li}
Let $\sfe^{k}$ be a POVM satisfying conditions $(\ref{eqn:muptr1})$-$(\ref{eqn:siccompatd-2})$. The effects $\Eo^{k}(1),\ldots,\Eo^{k}(d)$ are linearly independent.
\end{proposition}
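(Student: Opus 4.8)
The plan is to translate linear independence into the nonsingularity of the Gram matrix of the $d$ effects with respect to the Hilbert--Schmidt inner product $\langle A,B\rangle=\tr{AB}$ on the real vector space of self-adjoint operators on $\hi_d$. Recall that a finite family of vectors in an inner product space is linearly independent precisely when its Gram matrix is invertible, so it suffices to exhibit this matrix and check that it is positive definite.

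First I would assemble the pairwise Hilbert--Schmidt products into the $d\times d$ matrix $G$ with entries $G_{\mu\nu}=\tr{\Eo^{k}(\mu)\Eo^{k}(\nu)}$. By \eqref{eqn:siccompatd-1} its diagonal entries all equal $2/(d+1)$, and by \eqref{eqn:siccompatd-2} its off-diagonal entries all equal $1/(d+1)$; hence $G=\tfrac{1}{d+1}\bigl(\mathbbm{1}_d+J\bigr)$, where $\mathbbm{1}_d$ is the identity and $J$ the all-ones matrix of size $d$. Next I would diagonalise: $J$ has eigenvalue $d$ on the constant vector and $0$ on its orthogonal complement, so $\mathbbm{1}_d+J$ has eigenvalues $d+1$ (simple) and $1$ (multiplicity $d-1$), and therefore $G$ has eigenvalues $1$ and $1/(d+1)$, all strictly positive. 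Thus $G$ is positive definite, in particular invertible, and the effects $\Eo^{k}(1),\dots,\Eo^{k}(d)$ are linearly independent.

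An equivalent, more hands-on route avoids the Gram matrix altogether. Assuming $\sum_{\mu}c_{\mu}\Eo^{k}(\mu)=0$, taking the trace and using \eqref{eqn:muptr1} gives $\sum_{\mu}c_{\mu}=0$, while multiplying by $\Eo^{k}(\nu)$, taking the trace, and inserting \eqref{eqn:siccompatd-1}--\eqref{eqn:siccompatd-2} yields $\tfrac{1}{d+1}\bigl(c_{\nu}+\sum_{\mu}c_{\mu}\bigr)=0$, whence $c_{\nu}=0$ for every $\nu$. Either way the argument is a short computation, so I expect no genuine obstacle; the only point worth flagging is that it relies solely on the trace relations \eqref{eqn:muptr1}--\eqref{eqn:siccompatd-2}, so that neither positivity of the effects nor the underlying SIC structure is used---any family of operators obeying these three conditions is automatically linearly independent.
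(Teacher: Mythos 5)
Both of your routes are correct, and your second, ``hands-on'' computation is essentially the paper's own proof: the paper merely packages your two steps (tracing the relation $\sum_\mu c_\mu\Eo^{k}(\mu)=0$ against $I$ and against $\Eo^{k}(\nu)$) into one, by introducing $Y_\nu=\Eo^{k}(\nu)-\tfrac{1}{d+1}I$ and checking that $\tr{Y_\nu\,\Eo^{k}(\mu)}=\tfrac{1}{d+1}\delta_{\nu\mu}$, i.e.\ the $Y_\nu$ form a biorthogonal system for the effects. Your first route is a genuine, if closely related, variant: identifying the Gram matrix as $G=\tfrac{1}{d+1}\bigl(\mathbbm{1}_d+J\bigr)$ and diagonalising $J$ amounts to inverting $G$ explicitly, whereas the paper's $Y_\nu$ trick inverts it implicitly. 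The Gram route buys two things: it uses only the overlap conditions \eqref{eqn:siccompatd-1}--\eqref{eqn:siccompatd-2}, with the unit-trace condition \eqref{eqn:muptr1} playing no role (the paper's proof does use it, in computing $\tr{Y_\nu\Eo^{k}(\mu)}$), and it gives the stronger quantitative conclusion that $G$ is positive definite with smallest eigenvalue $1/(d+1)$, not merely invertible. One small point of care for the Gram argument: $\tr{AB}$ is an inner product on the \emph{real} vector space of self-adjoint operators, while linear independence in $\mathcal{L}(\hi_d)$ is over $\C$; this is harmless either because real independence of self-adjoint operators implies complex independence, or, more directly, because $\sum_\mu c_\mu\Eo^{k}(\mu)=0$ forces $Gc=0$ even for complex $c$, and a real invertible matrix remains invertible over $\C$. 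Your closing observation that positivity is never used is accurate and is exactly why the paper later speaks of SIC \emph{systems}.
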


\begin{proof}
Let $Y_\nu:=\Eo^{k}(\nu)-\frac{1}{d+1}I$, $\nu=1,2,\dots,n$. By equations (\ref{eqn:muptr1}), (\ref{eqn:siccompatd-1}) and (\ref{eqn:siccompatd-2}), 
\begin{equation*}
\tr{Y_\nu \Eo^{k}(\mu)}=
\begin{cases}
\frac{1}{d+1} & \mathrm{if} \quad\nu=\mu,\\
0 & \mathrm{if} \quad\nu\neq\mu.
\end{cases}
\end{equation*}
\noindent
Suppose that $\sum_{\mu=1}^n c_\mu \Eo^{k}(\mu)=0$. Then, 
\begin{align*}
0&={\rm tr}\Big[Y_\nu\sum_{\mu=1}^n c_\mu \Eo^{k}(\mu)\Big]=\sum_{\mu=1}^n c_\mu \tr{Y_\nu \Eo^{k}(\mu)}\\
&=c_\nu\frac{1}{d+1},\quad \nu=1,\dots,n,
\end{align*}
which implies $c_\nu=0$ for all $\nu=1,\dots,n$. Hence the linear independence follows. 
\end{proof}

\begin{definition}
We will say that a collection of $d$-outcome POVMs is {\em SIC-compatible} if they are marginals of a common SIC. 
\end{definition}

A collection of SIC-compatible POVMs is {\em jointly measurable} or {\em compatible}, in the probabilistic sense that 
the SIC constitutes a joint observable. This scenario constitutes a remarkably strong form of joint measurability:
a set of $n$ random variables with (say) $d$ outcomes  gives rise to $d^n$ joint event combinations, so that
normally one would have to construct a joint observable with $d^n$ outcomes. In the present case one can
have up to $d+1$ marginals but the SIC constitutes a joint observable for all of them with only $d^2$ outcomes.

Theorem \ref{thm:mup-marg} 
thus states  significant necessary conditions for the SIC-compatibility of a collection of mutually unbiased POVMs. 
A point of significance is that the 1-overlap property postulated for the system of $d$-partitions 
is necessary for the resulting marginal POVMs to be mutually unbiased. 
This will be proven below.

\subsection{More on the 1-overlap property}\label{sec:o-op}

\begin{lemma}\label{lem:numer-of-partitions}
The number of $d$-partitions of $d^2$ elements sharing the 1-overlap property is at most $d+1$ and  at least 3.
\end{lemma}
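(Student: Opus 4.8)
The plan is to treat the two bounds separately, since they rest on different ideas: the upper bound $d+1$ comes from a short double-counting argument, while the lower bound of $3$ is established by an explicit construction valid in every dimension. Throughout I regard a $d$-partition as a partition of the $d^2$-element set into $d$ bins of size $d$, and I consider a family of such partitions any two of which share the 1-overlap property.

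For the upper bound I would fix an arbitrary element $x$ of the $d^2$-element set and, for each partition $\cp^{(k)}$ in the family, let $B_k$ denote the unique bin of $\cp^{(k)}$ containing $x$. The key observation is that the punctured bins $B_k\setminus\{x\}$ are pairwise disjoint: if some $y\neq x$ were to lie in both $B_k\setminus\{x\}$ and $B_\ell\setminus\{x\}$ with $k\neq\ell$, then $\{x,y\}\subseteq B_k\cap B_\ell$, contradicting $\abs{B_k\cap B_\ell}=1$. Each $B_k\setminus\{x\}$ has exactly $d-1$ elements, and all of these lie among the $d^2-1$ elements distinct from $x$. Hence, if the family contains $m$ partitions, then $m(d-1)\le d^2-1=(d-1)(d+1)$, which gives $m\le d+1$.

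For the lower bound I would exhibit three partitions with the required property directly. Arranging the $d^2$ elements in a $d\times d$ array indexed by $(i,j)\in\Z_d\times\Z_d$, I take $\cp^{(1)}$ to be the partition into rows, $\cp^{(2)}$ the partition into columns, and $\cp^{(3)}$ the partition into broken diagonals, whose bins are $\{(i,j) : i+j\equiv\nu \pmod d\}$ for $\nu\in\Z_d$. A direct check then shows that any two bins from distinct partitions meet in exactly one cell; for instance a row $i=\mu$ and a diagonal $i+j\equiv\nu$ share only the cell $(\mu,\nu-\mu)$, and the other two pairings are handled the same way. Since this construction uses only the cyclic group $\Z_d$, it is available for every $d\ge 2$, so there are always at least $3$ partitions with the 1-overlap property.

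The main subtlety lies in the upper bound: the point is to notice that the 1-overlap condition forces the punctured bins through a common point to be disjoint, which converts the problem into a one-line counting estimate rather than a more elaborate global argument. The lower bound is then essentially a verification, the only thing worth appreciating being that the ``add-a-diagonal'' construction---equivalently, the Latin square $L(i,j)=i+j$---exists unconditionally, whereas pushing the family beyond three members would require mutually orthogonal Latin squares, which need not exist for general $d$.
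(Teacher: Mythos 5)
Your proposal is correct and follows essentially the same route as the paper: the upper bound by fixing a common element and counting the pairwise-disjoint punctured bins through it, and the lower bound via rows, columns, and wrap-around diagonals (the paper's bins $S_{1\nu}\cup S_{\nu 1}$ are exactly your broken diagonals $i+j\equiv\nu \pmod d$ up to a reflection of the array). Your write-up is if anything slightly crisper, making explicit the disjointness observation that the paper leaves implicit.
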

	
\begin{proof}		
Suppose there are $n$ $d$-partitions of a label set $\{a_1,a_2,\dots,a_{d^2}\}$. 
Consider all bins of the $n$ partitions that contain element $a_1$, say, and 
denote them by $\cpk_1$, where $1 \leq k \leq n$; so $\cpk_1 \cap \cpl_1 = \{a_1 \}$ 
for all $k\neq \ell$. Hence all other elements 
within these bins cannot be repeated. There are $d^2-1$  other elements  in $\{a_2,a_3,\dots,a_{d^2}\}$ 
remaining to be distributed over the $\cpk_1$, and each of these bins 
must contain another $d-1$ elements, so the number of partitions $n \leq (d^2-1)/(d-1) = d+1$.

Next, if the elements of $\{a_1,a_2,\dots,a_{d^2}\}$ (or their labels, for simplicity) are arranged in a $d\times d$ array,
			\begin{align*}
				\begin{matrix}
					1 & \dots & d\\
					\vdots & \ddots & \vdots \\
					(d-1)d+1 & \dots & d^2
				\end{matrix}
			\end{align*}
then, as noted after Definition \ref{def:d-part}, the Cartesian partitions of rows and columns,
$\cp^{(1)}$ and $\cp^{(2)}$, are $d$-partitions that share the 1-overlap property.
By choosing these Cartesian partitions first, we are now restricted in the construction of any further partition: 
in order to satisfy the 1-overlap property with respect to the Cartesian partitions, no bin, besides 
those in $\cp^{(1)}$ and $\cp^{(2)}$, can contain two elements from the same row or 
column within this array.\\
A third $d$-partition, $\cp^{(3)}$,  that shares the 1-overlap property with the two Cartesian 
partitions can be obtained as follows. Relabel the elements of the $d\times d$-array in matrix form, $(a_{ij})$. 
Let $\cp^{(3)}_1$ be the bin consisting of the main left-to-right
diagonal, $\cp^{(3)}_1=\{a_{11},a_{22},\dots,a_{dd}\}$. Now consider the $2(d-1)$ subsets of the
array that are given by as many diagonals parallel to this main diagonal. Denote by $S_{1j}$ the diagonal subsets 
starting with the
elements $a_{1j}$, $j\ge 1$, and by $S_{d+2-i,1}$ those starting with  $a_{i1}$, $i\ge 2$. 
Then the disjoint bins $\cp^{(3)}_1=S_{11}$
and $\cp^{(3)}_\nu=S_{1\nu}\cup S_{\nu 1}$, $\nu\ge 2$, form a $d$-partition that shares 
the 1-overlap property with $\cp^{(1)}$ and $\cp^{(2)}$.
\end{proof}

From now 
on, any partition said to share the 1-overlap property is immediately assumed to share it with respect to the 
Cartesian partitions.
			
			We recall the definition of an important structure within combinatorics. 
A \emph{Latin square of order $d$} is a $d \times d$ array $A$ in which each point on 
$A$ contains an element from a symbol set $S$, where $\abs{S}=d$, such that each row and column within $A$ 
contains every element from $S$. In other words, no column or row within $A$ has repetition of any element from $S$. 
			
			Consider two Latin squares of order $d$ with symbol sets $S$ and $T$, and construct a $d\times d$ 
array of ordered pairs $(s,t)$ placed at point $(i,j)$ (where $i$ and $j$ are the respective row and 
column indices), where $s\in S$ is the element found at that point in the array in the first Latin square, 
and likewise $t\in T$ is the element found at that point in the array in the second Latin square. These two 
Latin squares are \emph{mutually orthogonal} if every ordered pair $(s,t)$ is unique, that is no two points in the 
array have the same ordered pair.
			
With these definitions at hand we obtain the following proposition. We note that the second part concerning the connection
with orthogonal Latin squares was noted and proven by Mikl\'os Hartmann \cite{Hartmann2012}.

\begin{proposition}\label{prop:L-sq}
There is a one-to-one correspondence between the set of $d$-partitions $\cpk$ of a 
$d\times d$ array that share the 1-overlap property with both Cartesian partitions  
$\cp^{(1)},\,\cp^{(2)}$ and
the set of Latin squares of order $d$. Furthermore, two $d$-partitions 
with the 1-overlap property with respect to each other and the Cartesian partitions correspond to mutually 
orthogonal Latin squares.
\end{proposition}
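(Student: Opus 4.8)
The plan is to set up an explicit dictionary that turns the bin‑membership function of a partition into the symbol array of a Latin square, and then to read off both assertions by unwinding definitions. Throughout, write the array entries as $a_{ij}$ with $i$ the row and $j$ the column index, so that the Cartesian partitions are $\cp^{(1)}=\{\text{rows}\}$ and $\cp^{(2)}=\{\text{columns}\}$. For a $d$-partition $\cpk=\{\cpk_1,\dots,\cpk_d\}$, every cell $a_{ij}$ lies in exactly one bin, so I can define an array $L^{(k)}$ by setting $L^{(k)}_{ij}=\nu$ precisely when $a_{ij}\in\cpk_\nu$; the symbol set is the set $\{1,\dots,d\}$ of bin labels.

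First I would check that this array is a Latin square exactly when $\cpk$ has the 1-overlap property with both Cartesian partitions. The 1-overlap property with $\cp^{(1)}$ says that each bin $\cpk_\nu$ meets each row in exactly one cell; equivalently, each symbol $\nu$ occurs exactly once in each row of $L^{(k)}$. Likewise, 1-overlap with $\cp^{(2)}$ is equivalent to each symbol occurring exactly once in each column. These two conditions together are precisely the defining property of a Latin square of order $d$. Conversely, given any Latin square $L$ with symbol set $\{1,\dots,d\}$, the level sets $\cpk_\nu:=\{a_{ij}:L_{ij}=\nu\}$ form $d$ bins; the Latin condition on rows forces $|\cpk_\nu|=d$ (one cell per row), so this is a genuine $d$-partition, and the same two conditions as above give it the 1-overlap property with $\cp^{(1)}$ and $\cp^{(2)}$. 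The two assignments $\cpk\mapsto L^{(k)}$ and $L\mapsto\{\cpk_\nu\}$ are mutually inverse by construction, which yields the claimed bijection.

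For the orthogonality statement I would take two such partitions $\cp^{(k)},\cp^{(\ell)}$ with associated Latin squares $L^{(k)},L^{(\ell)}$ and simply rewrite the intersection of bins in terms of symbols: $\cpk_\mu\cap\cpl_\nu$ is exactly the set of cells $a_{ij}$ with $L^{(k)}_{ij}=\mu$ and $L^{(\ell)}_{ij}=\nu$, i.e.\ the cells carrying the ordered pair $(\mu,\nu)$. Thus $\cp^{(k)}$ and $\cp^{(\ell)}$ have the 1-overlap property with each other, meaning $|\cpk_\mu\cap\cpl_\nu|=1$ for all $\mu,\nu$, if and only if every ordered pair $(\mu,\nu)$ is carried by exactly one cell. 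Since there are $d^2$ cells and $d^2$ possible ordered pairs, ``each pair exactly once'' is equivalent to ``each pair at most once,'' i.e.\ to no two cells sharing the same ordered pair, which is the definition of mutual orthogonality of $L^{(k)}$ and $L^{(\ell)}$.

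I do not expect any genuinely hard step here: the proposition is an identification of two equivalent combinatorial encodings, so the whole argument is a careful unwinding of definitions. The only points that call for attention are (i) keeping track of which Cartesian partition supplies the row condition and which the column condition --- both are needed for $L^{(k)}$ to be Latin, not merely Latin in one direction; and (ii) the small counting argument that upgrades ``at most once per ordered pair'' to ``exactly once,'' which is what lets the 1-overlap property match mutual orthogonality rather than a weaker relation. A minor bookkeeping subtlety is whether a $d$-partition is regarded as an ordered or unordered family of bins: if the bins are unordered, the correspondence is a bijection onto Latin squares up to a relabelling of the symbol set, and one should phrase it in whichever convention matches the rest of the paper.
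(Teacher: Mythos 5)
Your proof is correct and takes essentially the same route as the paper's: both rest on the dictionary that labels each cell by the bin containing it, identify the 1-overlap property with the Cartesian partitions as exactly the row/column Latin conditions, and read off orthogonality from the identification of bin intersections $\cpk_\mu\cap\cpl_\nu$ with the cells carrying the ordered pair $(\mu,\nu)$. Your version is, if anything, slightly tighter --- the paper phrases the argument in terms of non-intersecting ``paths'' and proves only the implication from 1-overlap to orthogonality by contradiction, whereas you state the clean equivalence and supply the pigeonhole step upgrading ``at most once per pair'' to ``exactly once.''
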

\begin{proof}
Consider a single partition $\cpk$ of a $d \times d$ array  
sharing the $1$-overlap property with the Cartesian partitions. 
Each bin $\cpk_\nu$ can be viewed as a path through this array,
and due to the disjoint nature of the bins no two paths intersect at any point. 
Since each point of the array 
belongs to a bin, it  is incident with a unique path. Thus one can attach to each point 
the label of the path it belongs to; in other words, we have a mapping $(i,j)\mapsto \nu^{(k)}_{i,j}$ from  point labels $(i,j)$
to path labels $\nu^{(k)}_{i,j}\in \{1,\dots,d \}$.  In order to share the 1-overlap property with the Cartesian partitions, 
no bin/path  $\cpk_\nu$ can contain more than one element from any column or row within this array, and this 
corresponds to having no paths  which can move strictly vertically or 
horizontally at any point, nor can they return to a row or column they have already been incident on. This means 
that no value of $\nu^{(k)}_{i,j}$ may be repeated within any row or column 
in the $d\times d$ array $\{\nu^{(k)}_{i,j}\}$. In other words we have proved that the array  $\{\nu^{(k)}_{i,j}\}$ is a Latin square.

Conversely any Latin square of order $d$ induces a path labelling on the elements of the given $d\times d$ array,
and it is immediate that the resulting paths form a partition that shares the 1-overlap property with the Cartesian 
partitions.
				
Consider now two partitions $\cpk, \cpl$ that share the 1-overlap 
property between each other and the Cartesian partitions.  Since they share the 1-overlap 
property with the 
Cartesian partitions they both correspond to Latin squares of order $d$, which we denote by 
$\{\nu^{(k)}_{i,j}\}$ and $\{\nu^{(\ell)}_{i,j}\}$ respectively. If we consider the $d\times d$ array 
$\{(\nu^{(k)}_{i,j},\nu^{(\ell)}_{i,j})\}$, the 1-overlap property implies that the ordered pairs $(\nu^{(k)}_{i,j},
\nu^{(\ell)}_{i,j})$ are distinct for all $i,j$; i.e., the Latin squares are orthogonal. Indeed, if it were that $(\nu^{(k)}
_{i,j},\nu^{(\ell)}_{i,j})=(\nu^{(k)}_{n,m},\nu^{(\ell)}_{n,m})$, then we would have two paths, $\cpk_{\nu^{(k)}_{i,j}}$ 
and $\cpl_{\nu^{(\ell)}_{i,j}}$, from the two different partitions, $\cpk$ and $\cpl$ 
respectively, which intersect at both
$(i,j)$ and $(n,m)$, which contradicts the 1-overlap property.
We can therefore conclude that two partitions 
$\cpk$ and $\cpl$ that share the 1-overlap property with 
each other and the Cartesian partitions correspond to mutually orthogonal Latin 
squares.
			\end{proof}

\subsection{SIC-compatibility for mutually unbiased POVMs}\label{Subsec:SIC-MUPs}

We now turn to the problem of determining conditions under which a given set of mutually unbiased POVMs can be marginals of a common SIC.	
Recall that any marginal POVM $\sfe^{k}$ is specified by a partition $\cpk$ of $\sfg$ such that each positive operator 
$\Eo^{k}(\nu)$ is the sum of the SIC elements from one of the bins:
\[
\Eo^{k}(\nu) = \sum_{\Go(i) \in \cpk_\nu} \Go(i).  
\]

\begin{proposition}\label{prop:1-ov}
	If a collection of at least three mutually unbiased POVMs $\sfe^{k}$ are given as marginals of a SIC then the partition associated with each 
	$\sfe^{k}$ is a $d$-partition $\cpk$. Moreover, $d$-partitions associated with different marginals 
	$\sfe^{k}$ and $\sfe^{\ell}$ satisfy the 1-overlap property.
\end{proposition}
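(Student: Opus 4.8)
The plan is to translate both assertions into a counting problem about the intersection pattern of the two partitions and to extract everything from a single algebraic identity obtained by expanding the mutual-unbiasedness condition using the SIC overlap values. Since the $\sfe^{k}$ are mutually unbiased POVMs, each is by definition based on an outcome set of cardinality $d$; hence the partition $\cpk$ underlying the marginal $\sfe^{k}$ consists of exactly $d$ bins $\cpk_1,\dots,\cpk_d$, although a priori of unknown sizes $b^{(k)}_\mu := |\cpk_\mu|$ subject only to $\sum_{\mu=1}^d b^{(k)}_\mu = d^2$. The goal is then to show that $b^{(k)}_\mu = d$ for every $\mu$ (the $d$-partition claim) and that $|\cpk_\mu \cap \cpl_\nu| = 1$ for every $\mu,\nu$ and every pair $k\neq\ell$ (the 1-overlap claim).

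First I would fix two distinct marginals $\sfe^{k},\sfe^{\ell}$ and compute $\tr{\Eo^{k}(\mu)\Eo^{\ell}(\nu)}$ directly from $\Eo^{k}(\mu)=\sum_{\Go(i)\in\cpk_\mu}\Go(i)$. Writing $c_{\mu\nu}:=|\cpk_\mu\cap\cpl_\nu|$ for the number of SIC elements shared by the two bins, I would split the double sum into the $c_{\mu\nu}$ diagonal pairs, each contributing $\tr{\Go(i)^2}=1/d^2$ by \eqref{eqn:SIC2}, and the remaining $b^{(k)}_\mu b^{(\ell)}_\nu - c_{\mu\nu}$ off-diagonal pairs, each contributing $1/(d^2(d+1))$ by \eqref{eqn:SIC1}. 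Setting the result equal to the mutual-unbiasedness value $1/d$ and clearing denominators yields the single identity
\begin{equation*}
c_{\mu\nu} = (d+1) - \frac{b^{(k)}_\mu b^{(\ell)}_\nu}{d},
\end{equation*}
which carries all the information I need.

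To obtain the $d$-partition claim I would sum this identity over $\nu$ for fixed $\mu$, using $\sum_{\nu} c_{\mu\nu} = |\cpk_\mu| = b^{(k)}_\mu$ (the bins of $\cpl$ partition $\cpk_\mu$) together with $\sum_\nu b^{(\ell)}_\nu = d^2$. This collapses to $b^{(k)}_\mu(1+d) = d(d+1)$, forcing $b^{(k)}_\mu = d$ for each $\mu$; by symmetry the same holds for $\sfe^{\ell}$, so both partitions are $d$-partitions. Substituting $b^{(k)}_\mu = b^{(\ell)}_\nu = d$ back into the displayed identity immediately gives $c_{\mu\nu}=1$, i.e. the 1-overlap property. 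Note that only pairwise mutual unbiasedness is actually invoked, so the conclusion holds for every pair drawn from the collection.

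The only delicate point, and the step I expect to require the most care, is the bookkeeping in the expansion of $\tr{\Eo^{k}(\mu)\Eo^{\ell}(\nu)}$: one must correctly identify the diagonal contributions as precisely those indexed by $\cpk_\mu\cap\cpl_\nu$ (which is where the combinatorics of the two partitions enters and where the overlap number $c_{\mu\nu}$ is born) and keep the count of off-diagonal pairs exact. Once that accounting is right the remainder is elementary linear algebra; the crucial simplification is that summing the identity over one index eliminates $c_{\mu\nu}$ and leaves a single-variable equation that pins down the bin size.
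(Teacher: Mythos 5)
Your algebraic core is sound and is essentially the same computation as the paper's: expand $\tr{\Eo^{k}(\mu)\Eo^{\ell}(\nu)}$ into diagonal and off-diagonal SIC overlaps, obtain a linear identity relating the overlap number to the bin sizes, and sum over one index to pin the bin sizes down. However, there is a genuine gap relative to what the proposition is meant to assert and what the paper actually proves. You fix at the outset that each partition has exactly $d$ bins, appealing to the definition of mutually unbiased POVMs. The paper's proof deliberately does not do this: it starts from POVMs ``which do not have a fixed number of elements,'' treating both the number of bins $m_k$ and the bin sizes $n^{(k)}_\nu$ as unknowns, and proves that all of them equal $d$. In that more general setting, your summation step only yields that all bins of $\cpk$ have the common size $n^{(k)}=m_\ell$ (and symmetrically $n^{(\ell)}=m_k$), together with $m_k m_\ell = d^2$; it cannot force $m_k=m_\ell=d$. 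That is precisely where the hypothesis of at least three POVMs enters the paper's argument: running the same computation for the pairs $(k,j)$ and $(\ell,j)$ gives $m_\ell=m_j=m_k$, whence $m_k^2=d^2$, and only then does substituting back yield the 1-overlap property.

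The red flag in your own write-up is the closing remark that ``only pairwise mutual unbiasedness is actually invoked.'' If that were enough, the authors' ``at least three'' hypothesis would be superfluous, and in the generalized setting it is not. Take $\Eo^{1}$ to be the trivial POVM with single element $I$ (one bin containing all $d^2$ SIC elements) and $\Eo^{2}=\Go$ itself (all singleton bins): then $\tr{\Eo^{1}(1)\,\Go(i)}=\tr{\Go(i)}=1/d$ for every $i$, so the unbiasedness value is satisfied for this pair, yet neither partition is a $d$-partition. More generally, for any factorization $m_1m_2=d^2$, arranging the SIC elements in an $m_1\times m_2$ grid and taking rows and columns as the two partitions gives a pair satisfying $\tr{\Eo^{1}(\mu)\Eo^{2}(\nu)}=1/d$ without being $d$-partitions. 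So your proof is valid only under the literal reading that every POVM in the collection has exactly $d$ outcomes; it does not establish the proposition in the generality in which the paper proves and uses it, and it bypasses the one idea --- the role of the third POVM in determining the bin count --- that constitutes the real content of the paper's proof.
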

			
\begin{proof}
	We begin with a collection of mutually unbiased POVMs $\sfe^{k}$ which do not have a fixed number of elements, and each 
$\Eo^{k}(\nu)$ is composed of an unspecified number of elements from the SIC. We show first that in 
order to satisfy the mutual unbiasedness condition ${\rm tr}\bigl[{\Eo^{k}(\nu) \Eo^{\ell}(\mu) }\bigr]= 1/d$,  
each of the $\sfe^{k}_\nu$ must be composed of $d$ SIC elements; from this we deduce the 1-overlap property.
				
	Consider two partitions $\cpk$ and $\cpl$ containing $m_k$ and $m_\ell$ 
bins, respectively. We do not make any restrictions on the number of elements from the SIC that are in any of the bins 
in $\cpk$ and $\cpl$, and so we denote $\bigl|{\cpk_\nu}\bigr|= n^{(k)}_\nu$, and likewise 
$\bigl|{\cpl_\mu}\bigr|= n^{(\ell)}_\mu$, where $\sum_\nu n^{(k)}_\nu = \sum_\mu n^{(\ell)}_\mu = d^2$. We denote 
$\bigl|{\cpk_\nu \cap \cpl_\mu}\bigr|=a^{k,\ell}_{\nu,\mu}$, that is $a^{k,\ell}_{\nu,\mu}$ is the number 
of elements that overlap between $\cpk_\nu$ and $\cpl_\mu$. These numbers must satisfy 
$\sum_\nu a^{k,\ell}_{\nu,\mu} = n^{(\ell)}_\mu$ and $\sum_\mu a^{k,\ell}_{\nu,\mu} = n^{(k)}_\nu$ because every element 
in $\cpl_\mu$ must appear in exactly one of the bins in $\cpk$, and likewise for 
$\cpk_\nu$. With this notation we obtain
				
\begin{equation}\begin{split}\label{eq:a-ov}
	\frac{1}{d} &= \tr{\Eo^{k}(\nu) \Eo^{\ell}(\mu)} \\
	&= \frac{1}{d^2}a^{k,\ell}_{\nu,\mu} 
				+ \frac{1}{d^2(d+1)}(n^{(k)}_\nu n^{(\ell)}_\mu - a^{k,\ell}_{\nu,\mu})\\
				&= \frac{1}{d^2(d+1)} \left(d a^{k,\ell}_{\nu,\mu}+n^{(k)}_\nu n^{(\ell)}_\mu \right).
\end{split}
\end{equation}
Summing over $\mu$ we obtain		
\begin{align*}
	\sum_\mu \frac{1}{d} = \frac{m_\ell}{d} = \frac{1}{d^2(d+1)} \left( d n^{(k)}_\nu + d^2 n^{(k)}_\nu \right) = \frac{1}{d} n^{(k)}_\nu,
\end{align*} 
hence, $m_\ell=n^{(k)}_\nu$, and likewise by summing over $\nu$ we find that $m_k = n^{(\ell)}_\mu$. 
This means that all bins of $\cpk$ have the same number of elements, $m_\ell$, the number of bins in 
$\cpl$, and likewise all bins in $\cpl$ have the same number of elements, $m_k$,
the number of bins in $\cpk$. 
We may thus drop the $\nu$ and $\mu$ subscripts from $n^{(k)}_\nu$ and $n^{(\ell)}_\mu$, 
respectively, and simply consider $n^{(k)}$ and $n^{(\ell)}$. With this new notation we have 
$n^{(k)} m_k = n^{(\ell)} m_\ell =m_k m_\ell= d^2$. 

We now consider three marginals:
${\rm tr}\bigl[{\Eo^{k}(\nu) \Eo^{j}(\sigma)} \bigr] = {\rm tr}\bigl[{\Eo^{\ell}(\mu) \Eo^{j}(\sigma)}\bigr] =1/d$ for all $\nu,\mu,\sigma$. 
By repeating the preceding argument we find that $n^{(k)} =n^{(\ell)} = m_j$, and so the bins in partitions 
$\cpk$ and $\cpl$ contain the same numbers of elements; this also shows us that 
$m_k=m_\ell$, and so $d^2 = m_k m_\ell = m_k^2$, hence $m_k = n^{(k)} = d$, and likewise for 
the other partitions.
				
We can now return to equation \eqref{eq:a-ov}:
\begin{align*}
	\frac{1}{d}=\frac{1}{d^2(d+1)}(d a^{k,\ell}_{\nu,\mu} + &n^{(k)} n^{(\ell)}) = \frac{1}{d^2(d+1)}(d a^{k,\ell}_{\nu,\mu} + d^2). 					
\end{align*}
We conclude that $a^{k,\ell}_{\nu,\mu} = 1$.
Since  the choice of bins was arbitrary, this result holds in general.
\end{proof}

\begin{corollary}\label{cor:3+mups}
For every SIC in $\hi_d$ there exist sets of at least three and at most $d+1$ marginals that are mutually unbiased POVMs. 
For $d=6$ there are no more than three marginal POVMs that are mutually unbiased.
\end{corollary}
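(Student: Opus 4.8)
The plan is to read both assertions off the combinatorial machinery already assembled, converting the 1-overlap property into mutual unbiasedness through Theorem~\ref{thm:mup-marg} and bounding the number of admissible partitions through Lemma~\ref{lem:numer-of-partitions} and the Latin-square dictionary of Proposition~\ref{prop:L-sq}.

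For the first assertion I would treat the two bounds separately. For the lower bound, Lemma~\ref{lem:numer-of-partitions} already exhibits three pairwise 1-overlapping $d$-partitions (the two Cartesian partitions together with the diagonal partition $\cp^{(3)}$). Feeding any two 1-overlapping $d$-partitions into Theorem~\ref{thm:mup-marg}, equation~\eqref{eqn:mupdef} gives $\tr{\Eo^{k}(\mu)\Eo^{\ell}(\nu)}=1/d$ for $k\ne\ell$, so the three associated marginal POVMs are pairwise mutually unbiased; this produces a MU family of size three for every SIC. For the upper bound, take an arbitrary MU family of marginals. If it has at least three members, Proposition~\ref{prop:1-ov} forces each member to arise from a $d$-partition and forces these $d$-partitions to share the 1-overlap property pairwise, whereupon Lemma~\ref{lem:numer-of-partitions} caps their number at $d+1$; families with fewer than three members are bounded by $d+1$ trivially.

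For the $d=6$ assertion I would reduce to the combinatorics of Latin squares. Given a MU family of at least three marginals, Proposition~\ref{prop:1-ov} again yields a collection of $6$-partitions of the $36$ SIC elements with pairwise 1-overlap. Fixing any two of them, the 1-overlap property lets me index each SIC element by the unique pair of bins containing it, i.e.\ arrange the $36$ elements in a $6\times 6$ array so that these two partitions become the row and column (Cartesian) partitions. Proposition~\ref{prop:L-sq} then identifies every further partition with a Latin square of order $6$, and the pairwise 1-overlap among the further partitions translates into mutual orthogonality of the corresponding Latin squares.

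The crux — and the only genuinely non-elementary ingredient — is the classical fact, Euler's $36$ officers problem as settled by Tarry, that there exist no two mutually orthogonal Latin squares of order $6$. This forces the set of Latin squares arising above to have size at most one, so beyond the two Cartesian partitions at most one further partition can appear, giving at most $2+1=3$ partitions and hence at most three mutually unbiased marginal POVMs. Everything else is bookkeeping that transports this nonexistence result through the partition/Latin-square correspondence of Propositions~\ref{prop:1-ov} and~\ref{prop:L-sq}.
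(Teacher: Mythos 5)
Your proof is correct and follows essentially the same route as the paper: the lower bound via the three explicit $1$-overlapping partitions of Lemma~\ref{lem:numer-of-partitions} combined with Theorem~\ref{thm:mup-marg}, the upper bound via Proposition~\ref{prop:1-ov} and Lemma~\ref{lem:numer-of-partitions}, and the $d=6$ claim via Proposition~\ref{prop:1-ov}, the Latin-square correspondence of Proposition~\ref{prop:L-sq}, and the nonexistence of a pair of mutually orthogonal Latin squares of order $6$. Your write-up merely spells out the re-coordinatization and the Prop.~\ref{prop:1-ov} step that the paper leaves implicit, which is a matter of detail rather than of approach.
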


\noindent The first statement is a direct consequence of Lemma \ref{lem:numer-of-partitions} and 
Theorem \ref{thm:mup-marg}. The latter statement follows from the fact that there are no pairs of mutually 
orthogonal Latin squares of order 6 \cite{Stinson1984}, so that the 1-overlap property required by Proposition \ref{prop:1-ov}
cannot be satisfied for more than three partitions.

\subsection{Commutative case}\label{sec:commmups}
\label{subsec:comm}

We are interested in constructing three or more  mutually unbiased POVMs $\sfe^{k}$ that arise as marginals from a SIC $\sfg$. We have seen so far
that it is necessary to define such marginals via $d$-partitions of $\sfg$ that satisfy the 1-overlap property. 
Thus the mutually unbiased POVMs $\sfe^{k}$ will satisfy conditions 
\eqref{eqn:muptr1}-\eqref{eqn:mupdef}. If one could show that such mutually unbiased POVMs can be 
constructed so that each POVM
$\sfe^{k}$ is commutative, that is the elements $\Eo^{k}(\nu)$ are mutually 
commuting for fixed $k$, then there is a possibility to extract a system of MUBs from these mutually unbiased 
POVMs.
 
In this commutative case, the elements $\Eo^{k}(\nu)$ of each $\sfe^{k}$ share an orthonormal eigenbasis and so have
a decomposition in terms of the projections $\Po^{k}(i)$ onto the elements of the eigenbasis:
\begin{align}
	\Eo^{k}(\nu) =\sum_{i=1}^d \lambda^{k}_{\nu,i} \Po^k(i)\label{eq:E-k-nuspec},
\end{align}
where the $\lambda^{k}_{\nu,i}$ are the eigenvalues of $\Eo^{k}(\nu)$ satisfying $\Eo^{k}(\nu) \Po^{k}(i) = \lambda^{k}_{\nu,i} \Po^{k}(i)$. Equation \eqref{eq:E-k-nuspec} shows that the commutative POVM 
$\sfe^{k}$ is a smearing of the PVM $\Po^{k}:=\{\Po^{k}(i)\}_{i=1}^d$. 

 By making use of equation \eqref{eq:E-k-nuspec}, and the orthogonality of the $\Po^{k}(i)$, we can find 
some requirements the eigenvalues $\big(\lambda^{k}_{\nu,i}\big)$ must satisfy.
			
Firstly, we consider the trace-one property \eqref{eqn:muptr1} and the normalisation  $\sum_\nu \Eo^{k}(\nu) = I$. 

\begin{subequations}
\begin{align}
	1&= \tr{\Eo^{k}(\nu)} = \sum_i \lambda^{k}_{\nu,i} {\rm tr}\bigl[{\Po^{k}(i)}\bigr] \nonumber\\
	&= \sum_i \lambda^{k}_{\nu,i} = \boldsymbol{\lambda}^{k}_\nu \cdot \mathbb{I};\label{eq:doubstoch1}\\
	I&= \sum_\nu \Eo^{k}(\nu) = \sum_\nu  \sum_i\lambda^{k}_{\nu,i} \Po^{k}(i) 
	= \sum_i \sum_\nu \lambda^{k}_{\nu,i}  \Po^{k}(i),\notag \\
	&\text {so } \sum_\nu \lambda^{k}_{\nu,i} = 1,\quad \mathrm{or, equivalently} 
	\quad \sum_\nu \boldsymbol{\lambda}^{k}_\nu = \mathbb{I}. \label{eq:doubstoch2}
\end{align}
\end{subequations}
where $\boldsymbol{\lambda}^{k}_\nu=(\lambda^{k}_{\nu,1},\dots,\lambda^{k}_{\nu,d})$ is the vector of eigenvalues of $\Eo^k(\nu)$, $\mathbb{I}=(1,\dots,1)$ in equation \eqref{eq:doubstoch1}, and in equation \eqref{eq:doubstoch2} 
we have made use of the linear independence of the $\Po^{k}(i)$. If we form a matrix, 
$\Lambda^{k}=\bigl(\lambda^{k}_{\nu,i} \bigr)$, of eigenvalues for the elements of the $k$-th mutually unbiased POVM, 
then equations \eqref{eq:doubstoch1}, \eqref{eq:doubstoch2} are seen to entail that $\Lambda^{k}$ 
is a doubly stochastic matrix.
			
If we examine equations \eqref{eqn:siccompatd-1} and \eqref{eqn:siccompatd-2} using equation \eqref{eq:E-k-nuspec} we find that
\begin{subequations}
\begin{align}
	\frac{2}{d+1}&= \tr{\left(\Eo^{k}(\nu)\right)^2} = \sum_i \left(\lambda^{k}_{\nu,i}\right)^2 
	= \abs{\boldsymbol{\lambda}^{k}_\nu}^2 \label{eq:dotprod1},\quad\ \ \, \\
			\frac{1}{d+1}&= \tr{\Eo^{k}(\nu) \Eo^{k}(\mu)} = \sum_i \lambda^{k}_{\nu,i}\lambda^{k}_{\mu,i} 
			= \boldsymbol{\lambda}^{k}_\nu \cdot \boldsymbol{\lambda}^{k}_\mu\label{eq:dotprod2}.
\end{align}
\end{subequations}
Note that, in line with Remark \ref{rem:sicmupprops}, the unit trace property \eqref{eq:doubstoch1} 
is an immediate consequence of the (self-)overlap
conditions \eqref{eq:dotprod1}, \eqref{eq:dotprod2} together with the normalisation condition $\sum_\nu \Eo^{k}(\nu) = I$.
Alternatively, the self-overlap property \eqref{eq:dotprod1} follows from the mutual overlap property \eqref{eq:dotprod2} 
and the unit trace and normalisation conditions.

Making use of equations \eqref{eq:doubstoch1} and \eqref{eq:dotprod1} one can see  that the endpoints 
of the eigenvalue 
vectors $\boldsymbol{\lambda}^{k}_\nu$ lie on the intersection between the $d-1$-dimensional 
hyperplane $\boldsymbol{\lambda}^{k}_\nu \cdot \mathbb{I} =1$
and the $d-1$-sphere centred at the origin with radius $[2/(d+1)]^{1/2}$. Since the $\boldsymbol{\lambda}^{k}_\nu$
have equal lengths and overlaps, their endpoints span the vertices of a regular 
simplex with centre $\frac 1d\mathbb{I}$ in that hyperplane. This can be confirmed by considering the vectors
\begin{align}
\boldsymbol{r}_\nu = \boldsymbol{\lambda}^{k}_\nu - \tfrac1d\mathbb{I},
\end{align}
These vectors connect the centre of the simplex with its vertices and they have constant lengths and overlaps:
\begin{subequations}
	\begin{align}
		\abs{\boldsymbol{r}_\nu}^2 & = \abs{\boldsymbol{\lambda}^{k}_\nu}^2 
		+ \abs{(1/d)\mathbb{I}}^2 - (2/d) \boldsymbol{\lambda}^{k}_\nu \cdot \mathbb{I} \notag\\
		&= \frac{2}{d+1} - \frac{1}{d} = \frac{d-1}{d(d+1)},\\
		\boldsymbol{r}_\nu \cdot \boldsymbol{r}_\mu &= \boldsymbol{\lambda}^{k}_\nu \cdot \boldsymbol{\lambda}^{k}_\mu 
		+ \abs{(1/d)\mathbb{I}}^2 - (1/d)(\boldsymbol{\lambda}^{k}_\nu + \boldsymbol{\lambda}^{k}_\mu)\cdot \mathbb{I}\notag\\
		&= \frac{1}{d+1} - \frac{1}{d} = -\frac{1}{d(d+1)}.\label{eq:rdot}
\end{align}
\end{subequations}
Equation \eqref{eq:rdot} can be rewritten as 
\begin{align*}
	\boldsymbol{r}_\nu \cdot \boldsymbol{r}_\mu = \abs{\boldsymbol{r}_\nu}\abs{\boldsymbol{r}_\mu} \cos\theta 
	&= \frac{d-1}{d(d+1)}\cos\theta = -\frac{1}{d(d+1)}\\
	\Rightarrow \cos \theta &= -\frac{1}{d-1},
\end{align*}
where the constancy of the angle $\theta$ between $\boldsymbol{r}_\nu$ and $\boldsymbol{r}_\mu$ reflects the
regularity of the simplex. We summarize:

\begin{proposition}\label{prop:simplex}
      Let $\sfe^{k}$ be a commutative POVM whose elements 
      $\Eo^{k}(\nu) =\sum_i \lambda^{k}_{\nu,i} \Po^k(i)$ have unit trace (hence the eigenvalue
      matrix $\Lambda^{k}$ is doubly stochastic) and 
      satisfy the overlap condition \eqref{eq:dotprod2}.
      Then the $d$ eigenvalue vectors $\boldsymbol{\lambda}^{k}_\nu\in\mathbb{R}^d$ point  to the vertices of a regular $d-1$-simplex 
      lying on the intersection of the $d-1$-dimensional hyperplane $\boldsymbol{\lambda}^{k}_\nu \cdot \mathbb{I}=1$  
       and the $d-1$-sphere of radius $(2/(d+1))^{1/2}$ centred at the origin.
\end{proposition}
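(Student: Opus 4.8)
The plan is to verify that the $d$ eigenvalue vectors $\boldsymbol{\lambda}^{k}_\nu$ exhibit the three hallmarks of the vertices of a regular simplex: they lie in the prescribed affine hyperplane, they are equidistant from a common centre, and all pairwise distances coincide. All of the ingredients are already assembled in the discussion preceding the statement, so the proof amounts to organising them into these three conclusions.

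First I would record the two metric constraints on each $\boldsymbol{\lambda}^{k}_\nu$. The unit-trace hypothesis gives \eqref{eq:doubstoch1}, i.e.\ $\boldsymbol{\lambda}^{k}_\nu\cdot\mathbb{I}=1$, which places every endpoint on the hyperplane. For the sphere condition I would not assume \eqref{eq:dotprod1} but derive it from the stated hypotheses, as flagged in Remark~\ref{rem:sicmupprops}: summing the overlap condition \eqref{eq:dotprod2} over $\mu$ and using the column-sum relation $\sum_\mu\boldsymbol{\lambda}^{k}_\mu=\mathbb{I}$ from \eqref{eq:doubstoch2} yields $\abs{\boldsymbol{\lambda}^{k}_\nu}^2=1-(d-1)/(d+1)=2/(d+1)$, so each endpoint lies on the sphere of radius $(2/(d+1))^{1/2}$ centred at the origin.

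Next I would centre the configuration. Setting $\boldsymbol{r}_\nu=\boldsymbol{\lambda}^{k}_\nu-\tfrac1d\mathbb{I}$, the column-sum relation \eqref{eq:doubstoch2} gives $\sum_\nu\boldsymbol{r}_\nu=\mathbb{I}-\mathbb{I}=0$, confirming that $\tfrac1d\mathbb{I}$ is the true centroid, while $\boldsymbol{r}_\nu\cdot\mathbb{I}=0$ shows the $\boldsymbol{r}_\nu$ sit in the $(d-1)$-dimensional subspace $\mathbb{I}^\perp$ through the centre. The two short expansions already displayed then give the constant self-length $\abs{\boldsymbol{r}_\nu}^2=(d-1)/(d(d+1))$ and the constant overlap $\boldsymbol{r}_\nu\cdot\boldsymbol{r}_\mu=-1/(d(d+1))$, whence $\abs{\boldsymbol{r}_\nu-\boldsymbol{r}_\mu}^2=2/(d+1)$ is independent of the pair and nonzero.

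The final---and only conceptual---step is to conclude that $d$ mutually equidistant points spanning a $(d-1)$-dimensional affine subspace are precisely the vertices of a regular $(d-1)$-simplex; equivalently, the constancy of $\cos\theta=-1/(d-1)$ expresses the regularity. The one point deserving care is dimensionality: I would check that the $\boldsymbol{r}_\nu$ are genuinely affinely independent by noting that their Gram matrix equals $\tfrac{1}{d(d+1)}(dI-J)$, where $J$ is the all-ones matrix, whose spectrum is $0$ with multiplicity one and $1/(d+1)$ with multiplicity $d-1$; hence the rank is $d-1$ and the $\boldsymbol{r}_\nu$ span all of $\mathbb{I}^\perp$. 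This rules out a degenerate lower-dimensional configuration and pins the simplex to the stated intersection of hyperplane and sphere. Everything else is the routine arithmetic already carried out above.
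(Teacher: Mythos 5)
Your proposal is correct and is essentially the paper's own argument: the paper likewise places the endpoints on the hyperplane and on the sphere of radius $(2/(d+1))^{1/2}$ (noting, just as you do, that the self-overlap value \eqref{eq:dotprod1} follows from the mutual overlaps \eqref{eq:dotprod2} together with unit trace and normalisation), centres the configuration via $\boldsymbol{r}_\nu=\boldsymbol{\lambda}^{k}_\nu-\tfrac1d\mathbb{I}$, and reads off regularity from the constant lengths $\abs{\boldsymbol{r}_\nu}^2=(d-1)/(d(d+1))$ and constant overlaps $\boldsymbol{r}_\nu\cdot\boldsymbol{r}_\mu=-1/(d(d+1))$, i.e.\ $\cos\theta=-1/(d-1)$. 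Your one genuine addition is the spectral computation of the Gram matrix $\tfrac{1}{d(d+1)}(dI-J)$ showing the $\boldsymbol{r}_\nu$ have rank $d-1$ and hence are affinely independent; the paper asserts this non-degeneracy only implicitly ("their endpoints span the vertices of a regular simplex"), so your argument closes a small gap rather than taking a different route.
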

			
Proposition \ref{prop:simplex} provides a geometric interpretation of properties that a set of $d$ positive operators 
with the same eigenbasis must satisfy in order to arise as a  commutative marginal of a SIC associated with a 
$d$-partition. These properties are also used in proving the following theorem. 
			
\begin{theorem}\label{thm:commMUPs<->MUBs}
      Let $\sfe^{k}$ and $\sfe^{\ell}$ be two commutative POVMs whose elements have unit trace and satisfy the overlap property
      \eqref{eq:dotprod2}. If their decompositions in terms of eigenprojections are given as in equation \eqref{eq:E-k-nuspec}, i.e.
    \begin{equation}\label{eq:kl-spec}
	\Eo^{k}(\nu) = \sum_i \lambda^{k}_{\nu,i}\Po^{k}(i), \quad \Eo^{\ell}(\mu) = \sum_j \lambda^{\ell}_{\mu,j}\Po^{\ell}(j), 
    \end{equation}
      then the following equivalence holds:
   \begin{align*}
	&\tr{\Eo^{k}(\nu) \Eo^{\ell}(\mu)} = \frac{1}{d}\; \forall\; \mu,\nu\notag \\
	&\qquad \iff  \tr{\Po^{k}(i)\Po^{\ell}(j)}=\frac{1}{d}\; \forall \; i,j\,.
   \end{align*}
\end{theorem}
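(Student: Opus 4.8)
The plan is to collapse both conditions into a single linear-algebraic identity for the matrix of pairwise projection overlaps. I would introduce the $d\times d$ matrix $T=(T_{ij})$ with entries $T_{ij}=\tr{\Po^{k}(i)\Po^{\ell}(j)}$; its row and column sums equal $1$ because $\sum_i\Po^{k}(i)=\sum_j\Po^{\ell}(j)=\id$. Inserting the spectral decompositions \eqref{eq:kl-spec} into $\tr{\Eo^{k}(\nu)\Eo^{\ell}(\mu)}$ and using linearity of the trace gives $\tr{\Eo^{k}(\nu)\Eo^{\ell}(\mu)}=\sum_{i,j}\lambda^{k}_{\nu,i}\,T_{ij}\,\lambda^{\ell}_{\mu,j}$. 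With the eigenvalue matrices $\Lambda^{k}=(\lambda^{k}_{\nu,i})$ and $\Lambda^{\ell}=(\lambda^{\ell}_{\mu,j})$ this is precisely the $(\nu,\mu)$ entry of $\Lambda^{k}T(\Lambda^{\ell})^{\top}$. Thus the left-hand condition is the matrix equation $\Lambda^{k}T(\Lambda^{\ell})^{\top}=\tfrac1d J$, where $J$ is the all-ones matrix, while the right-hand condition is just $T=\tfrac1d J$. The whole theorem is then the claim that these two matrix statements are equivalent.

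The implication from the MUB condition to the MU-POVM condition is immediate. If $T=\tfrac1d J$, then $\tr{\Eo^{k}(\nu)\Eo^{\ell}(\mu)}=\tfrac1d\bigl(\sum_i\lambda^{k}_{\nu,i}\bigr)\bigl(\sum_j\lambda^{\ell}_{\mu,j}\bigr)=\tfrac1d$, because each eigenvalue vector sums to one by the unit-trace relation \eqref{eq:doubstoch1}.

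For the converse I would show that $\Lambda^{k}$ and $\Lambda^{\ell}$ are invertible, whereupon $T=\tfrac1d(\Lambda^{k})^{-1}J\,((\Lambda^{\ell})^{-1})^{\top}$. Since all row sums are $1$ we have $\Lambda^{k}\mathbb{I}=\mathbb{I}$ (with $\mathbb{I}$ the all-ones column vector), hence $(\Lambda^{k})^{-1}\mathbb{I}=\mathbb{I}$ and likewise for $\ell$; writing $J=\mathbb{I}\,\mathbb{I}^{\top}$ then collapses the product to $(\Lambda^{k})^{-1}\mathbb{I}\,\mathbb{I}^{\top}((\Lambda^{\ell})^{-1})^{\top}=\mathbb{I}\,\mathbb{I}^{\top}=J$, so $T=\tfrac1d J$, i.e. $T_{ij}=1/d$ for all $i,j$.

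The crux, and the step I expect to be the main obstacle, is the invertibility of the eigenvalue matrices, and this is exactly where Proposition~\ref{prop:simplex} enters. Its rows $\boldsymbol{\lambda}^{k}_\nu$ are the vertices of a regular $(d-1)$-simplex, so the centred vectors $\boldsymbol{r}_\nu=\boldsymbol{\lambda}^{k}_\nu-\tfrac1d\mathbb{I}$ lie in and span the hyperplane $\mathbb{I}^{\perp}$ (their Gram matrix, from the overlaps computed just before the proposition, is proportional to $dI-J$ and so has rank $d-1$). To pass from this to linear independence of the rows themselves, I would suppose $\sum_\nu c_\nu\boldsymbol{\lambda}^{k}_\nu=0$: pairing with $\mathbb{I}$ forces $\sum_\nu c_\nu=0$, which reduces the relation to $\sum_\nu c_\nu\boldsymbol{r}_\nu=0$; the only dependence among the $\boldsymbol{r}_\nu$ is the uniform one $\boldsymbol{r}_1+\dots+\boldsymbol{r}_d=0$, and combined with $\sum_\nu c_\nu=0$ this yields $c_\nu=0$. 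Hence $\Lambda^{k}$ (and symmetrically $\Lambda^{\ell}$) is invertible. The only point warranting a careful check is that the simplex is genuinely nondegenerate, which is guaranteed by the strict positivity of $\abs{\boldsymbol{r}_\nu}^2$ recorded in Proposition~\ref{prop:simplex}.
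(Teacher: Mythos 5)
Your proposal is correct and follows essentially the same route as the paper's proof: the same matrix identity $\Lambda^{k}Q^{k\ell}(\Lambda^{\ell})^{T}=\tfrac1d U$, invertibility of the eigenvalue matrices via the regular-simplex structure of their rows (Proposition~\ref{prop:simplex}), and the row-sum-one property of the inverses to collapse the product to the all-ones matrix. The only difference is presentational — you work with explicit inverses and $J=\mathbb{I}\,\mathbb{I}^{\top}$ where the paper introduces matrices $\Gamma^{k},\Gamma^{\ell}$ and shows $\Gamma^{k}U=U$ — and your spelled-out argument for linear independence of the simplex vertices is a welcome elaboration of a step the paper merely asserts.
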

			
\begin{proof}
	Given equation \eqref{eq:kl-spec}, we verify that the condition 
${\rm tr}\bigl[{\Po^{k}(i)\Po^{\ell}(j)}\bigr] = 1/d$ for all $i,j$ leads directly to 
${\rm tr}\bigl[{\Eo^{k}(\nu) \Eo^{\ell}(\mu)}\bigr]=1/d$ for all $\nu,\mu$. Since
\begin{align}
	\tr{\Eo^{n}(\mu)} = \sum_{i=1}^d \lambda^{n}_{\mu,i}\, \tr{\Po^{n}(i)} 
	= \sum_{i=1}^d \lambda^{n}_{\mu,i} =1,\quad n=k,\ell,
\end{align}
we have that 
\begin{align}
	\tr{\Eo^{k}(\mu) \Eo^{\ell}(\nu)}
       &= \sum_{i=1}^d \lambda^{k}_{\mu,i}\sum_{j=1}^d \lambda^{\ell}_{\nu,j}\, \tr{\Po^{k}(i )\Po^{\ell}(j)}\notag \\
	&=\left(\sum_{i=1}^d \lambda^{k}_{\mu,i}\right)\left(\sum_{j=1}^d \lambda^{\ell}_{\nu,j} \right)\frac{1}{d} = \frac{1}{d},
\end{align}
whenever $k \neq \ell$, and so the smeared operators are mutually unbiased. (Note that this implication does not rely on the 
overlap property \eqref{eq:dotprod2}.)

Next, we prove the converse implication.
By making use of equation \eqref{eq:kl-spec}, we see that
\begin{align}
	\tr{\Eo^{k}(\nu) \Eo^{\ell}(\mu)} = \frac{1}{d}
	&=\sum_{i,j} \lambda^{k}_{\nu,i}\, \lambda^{\ell}_{\mu,j}\, {\rm tr}\left[{\Po^{k}(i) \Po^{\ell}(j)}\right] \notag \\
	&=: \sum_{i,j} \lambda^{k}_{\nu,i}\, q^{k \ell}_{ij} \, \lambda^{\ell}_{\mu,j} ,
\end{align}
or equivalently, 
\begin{align}
	\Lambda^{k} Q^{k\ell} (\Lambda^{\ell})^{T} = \frac{1}{d} U,\label{eq:MUPrewrite}
	\end{align}
where $Q^{k\ell} = \bigl(q^{k\ell}_{ij}\bigr) = \bigl({\rm tr}\bigl[{\Po^{k}(i) \Po^{\ell}(j)}\bigr]\bigr)$ and $U_{ij} = 1$ $\forall$ $i,j$. 
 Because the rows of $\Lambda^{k}$ (and likewise for $\Lambda^{\ell}$) correspond to the vertices of a regular simplex, they 
 are linearly independent, and this means that $\Lambda^{k}$ (and $\Lambda^{\ell}$ respectively) is invertible. 
 Therefore, there exist two matrices, $\Gamma^{k}$ and $\Gamma^{\ell}$, such that
\[
\Gamma^{k}\Lambda^{k}=\Gamma^{\ell}\Lambda^{\ell}=I,
\]
\noindent
and so
\[
Q^{k\ell}=\frac{1}{d}\,\Gamma^{k}U(\Gamma^{\ell})^T.
\]  
Note that, thanks to the (row) stochasticity of  $\Lambda^{k}$ and $\Lambda^{\ell}$, the matrices $\Gamma^{k}$ and 
$(\Gamma^{\ell})^T$ are such that the sum of each row in the first case and the sum of each column in the second case 
is one. Indeed, a matrix $A$ is such that  the sum of each row is one if and only if $A\,\mathbb{I}^T=\mathbb{I}^T$ 
where, $\mathbb{I}=(1,1,\dots,1)$. Thanks to the row stochasticity of  $\Lambda^{k}$, we have 
$\mathbb{I}^T=I\,\mathbb{I}^T=(\Gamma^{k}\Lambda^{k})\,\mathbb{I}^T=\Gamma^{k}(\Lambda^{k}\,\mathbb{I}^T)
=\Gamma^{k}\,\mathbb{I}^T$ which proves that the sum of each row in $\Gamma^{k}$ is $1$. The same reasoning 
shows that the sum of each row in $\Gamma^{\ell}$ is $1$ which implies that the sum of each column in $(\Gamma^{\ell})^T$ 
is $1$. We note that, in general, $\Gamma^{k}$ and $\Gamma^{\ell}$ are not stochastic since they are not positive definite. 
It follows that 
\[
\Gamma^{k}U=U;\quad\quad U(\Gamma^{\ell})^T=U
\]
and
\[
Q^{k\ell}=\frac{1}{d}\,\Gamma^{k}U(\Gamma^{\ell})^T=\frac{1}{d}\,U,
\] 
which completes the proof.
\end{proof}
We summarize the implications of the results from this section, using the fact that the conditions of 
Theorem \ref{thm:commMUPs<->MUBs} are satisfied for mutually unbiased POVMs that are marginals of a SIC.

\begin{corollary}
Let $\sfg$ be a SIC which possesses a family of commutative marginals $\sfe^{k}$ generated by $d$-partitions
that share the 1-overlap property (hence they are mutually unbiased). 
Then the orthonormal eigenbases shared by elements of the same marginal are mutually unbiased.
\end{corollary}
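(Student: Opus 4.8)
The plan is to assemble the corollary directly from the two principal results of this section, since the substantive work has already been carried out. First I would invoke Theorem \ref{thm:mup-marg}: because the commutative marginals $\sfe^{k}$ arise from $d$-partitions sharing the 1-overlap property, their effects satisfy the unit-trace condition \eqref{eqn:muptr1}, the self-overlap relations \eqref{eqn:siccompatd-1}--\eqref{eqn:siccompatd-2}, and crucially the mutual-unbiasedness relation $\tr{\Eo^{k}(\nu)\Eo^{\ell}(\mu)}=1/d$ for $k\ne\ell$, namely \eqref{eqn:mupdef}. Commutativity then guarantees the spectral representation \eqref{eq:E-k-nuspec}, $\Eo^{k}(\nu)=\sum_i\lambda^{k}_{\nu,i}\Po^{k}(i)$, exhibiting each $\sfe^{k}$ as a smearing of a PVM $\Po^{k}$ whose range projections span the common eigenbasis of that marginal.

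The second step is to check that the hypotheses of Theorem \ref{thm:commMUPs<->MUBs} are met for every pair $(\sfe^{k},\sfe^{\ell})$ with $k\ne\ell$. Exactly the unit-trace property and the overlap condition \eqref{eq:dotprod2} are required there, and both are supplied by Theorem \ref{thm:mup-marg}, the overlap \eqref{eq:dotprod2} being the eigenvalue reformulation of \eqref{eqn:siccompatd-2}. Applying the equivalence of Theorem \ref{thm:commMUPs<->MUBs} in the direction from the marginals to their eigenprojections, the established identity $\tr{\Eo^{k}(\nu)\Eo^{\ell}(\mu)}=1/d$ yields $\tr{\Po^{k}(i)\Po^{\ell}(j)}=1/d$ for all $i,j$. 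By \eqref{eqn:mu-pvms} this is precisely the statement that the PVMs $\Po^{k}$ are mutually unbiased, and since mutual unbiasedness of rank-one projections is equivalent to mutual unbiasedness of the underlying orthonormal bases (the observation following Definition \ref{def:mub}), the eigenbases of the marginals are mutually unbiased, as claimed.

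There is no genuine obstacle remaining at the level of the corollary itself: it is a bookkeeping step that threads together Theorems \ref{thm:mup-marg} and \ref{thm:commMUPs<->MUBs}. The real difficulty was already dispatched inside Theorem \ref{thm:commMUPs<->MUBs}, in establishing the converse implication that passes from $\tr{\Eo^{k}(\nu)\Eo^{\ell}(\mu)}=1/d$ back to $\tr{\Po^{k}(i)\Po^{\ell}(j)}=1/d$; that step hinges on the invertibility of the doubly stochastic eigenvalue matrices $\Lambda^{k}$, which in turn follows from the regular-simplex geometry of the rows $\boldsymbol{\lambda}^{k}_\nu$ established in Proposition \ref{prop:simplex}. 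The one point I would take care to state explicitly is that the hypotheses of the corollary---commutative marginals generated by 1-overlap $d$-partitions---supply, via Theorem \ref{thm:mup-marg}, precisely the input that Theorem \ref{thm:commMUPs<->MUBs} consumes, so that no additional positivity or rank assumptions on the projections $\Po^{k}(i)$ need be imposed by hand.
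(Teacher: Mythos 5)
Your proposal is correct and follows essentially the same route as the paper, which justifies the corollary in a single sentence by noting that the conditions of Theorem \ref{thm:commMUPs<->MUBs} are satisfied for mutually unbiased POVMs arising as marginals of a SIC. Your explicit bookkeeping---unit trace and the overlap condition \eqref{eq:dotprod2} supplied by Theorem \ref{thm:mup-marg}, the spectral decomposition \eqref{eq:E-k-nuspec} from commutativity, and then the converse direction of Theorem \ref{thm:commMUPs<->MUBs} to pass from $\tr{\Eo^{k}(\nu)\Eo^{\ell}(\mu)}=1/d$ to $\tr{\Po^{k}(i)\Po^{\ell}(j)}=1/d$---is exactly the intended argument.
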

						
It is an open question whether every SIC possesses commutative marginals arising from a $d$-partition.

\section{From MUBs to SICs}\label{sec:MUB->SIC}

We consider next the problem of constructing a SIC from a given collection of MUBs. As a preliminary 
investigation we establish necessary conditions for a family of mutually unbiased POVMs to arise as marginals of a
SIC. In the case where there are $d+1$ mutually unbiased POVMs satisfying certain properties to be
specified below, it is possible to establish a formula for the corresponding  SIC elements. 
This requires us to invert the system of equations \eqref{eq:margeffects},
\[
\Eo^{k}(\nu) = \sum_{\Go(i) \in \cpk_\nu} \Go(i)\quad (\nu=1,\dots, d,\ k=1,\dots,d+1),
\]
bearing in mind that the underlying $d$-partitions satisfy the 1-overlap property. We can take guidance from the construction
shown in the qubit case. We begin with some important combinatorial observations.

\subsection{Combinatorial interlude}

We consider an $(d+1)\times d$ array of points with a system of paths with the property that every path
is of length $d+1$ containing one element from each row and any two paths share the {\em 1-overlap property},
that is, they intersect in exactly one point.

\begin{lemma}\label{lem:d^2paths}
	Let $\mathcal{E}$ be a $(d+1)\times d$ array of points. 
	There are at most $d^2$ paths $p_i$ through $\mathcal{E}$ containing exactly one point from each row
	such that any two paths intersect with each other exactly once.
\end{lemma}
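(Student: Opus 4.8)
The plan is to encode each path by the sequence of columns it visits, and then recognise the 1-overlap property as a combinatorial constraint that caps how many paths can pass through any single point. Since $\mathcal{E}$ has $d+1$ rows and $d$ columns, and each path meets every row in exactly one point, a path $p$ is completely described by the function $r\mapsto c_r(p)\in\{1,\dots,d\}$ recording the column $c_r(p)$ occupied by $p$ in row $r$, for $r=1,\dots,d+1$. Two distinct paths $p,p'$ intersect precisely in those rows $r$ where $c_r(p)=c_r(p')$, so the 1-overlap property says exactly that any two distinct path-functions agree in exactly one of their $d+1$ coordinates.

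The key idea I would use is to group the paths by their behaviour in a single row, say the first. Fix a column $a\in\{1,\dots,d\}$ and consider the family $\mathcal{F}_a$ of all paths with $c_1(p)=a$, i.e.\ those passing through the point $(1,a)$. Any two distinct paths in $\mathcal{F}_a$ already intersect at $(1,a)$; by the 1-overlap property this is their only common point, so they must differ in every other row. In particular, in the second row they occupy pairwise distinct columns: if $p,p'\in\mathcal{F}_a$ had $c_2(p)=c_2(p')$, they would meet at $(2,c_2(p))$ as well as at $(1,a)$, contradicting the 1-overlap property. The assignment $p\mapsto c_2(p)$ is therefore injective on $\mathcal{F}_a$, and since row $2$ has only $d$ columns we obtain $|\mathcal{F}_a|\le d$.

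Finally, every path lies in exactly one family $\mathcal{F}_a$, because it meets row $1$ in a unique point. As $a$ ranges over the $d$ columns of the first row, the total number of paths is bounded by $\sum_{a=1}^{d}|\mathcal{F}_a|\le d\cdot d=d^2$, which is the desired bound. The proof is essentially a double count, so I do not expect a serious obstacle once the right grouping is chosen; the one point requiring care is to invoke the 1-overlap property in its strong form \emph{exactly one common point}, since it is the impossibility of a \emph{second} coincidence (here forced to occur in row $2$) that yields the injectivity and hence the per-point cap of $d$. Any row other than the first could play the role of row $2$ in this step, and the choice is immaterial.
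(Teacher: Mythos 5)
Your proof is correct and follows essentially the same route as the paper's: group paths by the point they occupy in the first row, use the 1-overlap property to show that paths through a common first-row point must occupy distinct columns in the second row (hence at most $d$ per group), and sum over the $d$ first-row points to get the bound $d^2$. The explicit encoding of paths as column functions is a clean formalisation, but the counting argument is identical to the one in the paper.
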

\begin{proof}
			Consider all paths that start, and therefore intersect, on the $j^{th}$ element in 
the first row, i.e., at $\mathcal{E}_{1j}$; from this point on no two of these paths 
may intersect at any further point. Since no point within the second row can appear within 
two paths beginning at $\mathcal{E}_{1j}$, we can have at most $d$ possible paths passing through 
the second row.  Given that we have $d$ elements within the first row, we can therefore 
have at most $d^2$ paths with the required properties.
\end{proof}
	
There is a fundamental connection between the array $\mathcal{E}$ with its path system described above
and the $d\times d$ array used to define $d$-partitions of a SIC with the 1-overlap property.

\begin{proposition}\label{prop:ABequiv}
			A $d\times d$ array $A$ with a complete set of $d+1$ partitions satisfying the 1-overlap property 
			between bins, each of order $d$, of different $d$-partitions is equivalent to a $(d+1)\times d$ 
			array $B$ with a complete set of $d^2$ downwards paths of length $d+1$ 
			that satisfy the 1-overlap property between any pair of paths; the equivalence consists in 
			the correspondence of the points and paths in  $A$ with the paths and points in $B$, respectively. 
\end{proposition}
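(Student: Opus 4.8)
The plan is to exhibit an explicit dictionary between the two structures and verify that it maps the defining incidences and overlap conditions of each onto those of the other, in both directions. The dictionary is the following transpose correspondence. To pass from $A$ to $B$, I would index the $d+1$ rows of $B$ by the $d+1$ partitions of $A$, and populate row $k$ with $d$ points, one for each bin $\cpk_\nu$ of $\cp^{(k)}$; thus the points of $B$ are literally the bins of $A$. To each point $a$ of $A$ I would attach the path $p_a$ that passes, in each row $k$, through the point representing the unique bin of $\cp^{(k)}$ containing $a$ (unique because $\cp^{(k)}$ is a partition). The reverse map is the transpose: points of $B$ become bins of $A$, grouped by row into $d+1$ partitions, and a path $p$ of $B$ becomes the point of $A$ lying in bin $\cpk_\nu$ exactly when $p$ passes through the corresponding point in row $k$.

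First I would verify the direction from $A$ to $B$. By construction each $p_a$ meets every row exactly once, so it is a downwards path of length $d+1$, and there are $d^2$ of them. The only real content is the 1-overlap of paths, which I would reduce to the combinatorial fact that any two distinct points $a,a'$ of $A$ lie in exactly one common bin. This in turn follows by a short counting argument: the $d+1$ bins through a fixed $a$ pairwise meet only in $a$ (they come from distinct partitions, and each contains $a$), so their $(d+1)(d-1)=d^2-1$ other points are distinct and exhaust $A\setminus\{a\}$. Two paths $p_a,p_{a'}$ then meet in row $k$ precisely when $a,a'$ share the bin of $\cp^{(k)}$, so they meet in exactly one point.

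Next I would treat the direction from $B$ to $A$, declaring the bin attached to a point $b$ of $B$ to be the set of paths through $b$. Since each path meets a given row exactly once, each row yields a genuine partition of the $d^2$ paths. The hard part will be showing each such bin has exactly $d$ elements and that bins from different rows 1-overlap. I would argue by double counting: fixing a point $b$ in row $k$ and another row $\ell$, any two paths through $b$ meet only at $b$ and hence strike distinct points of row $\ell$, so at most $d$ paths pass through $b$; summing the incidences over the $d$ points of row $k$ against the total of $d^2$ paths then forces exactly $d$ paths through each point. Consequently the $d$ paths through $b$ hit every point of row $\ell$ exactly once, so for any $b'$ in row $\ell$ there is a unique path through both $b$ and $b'$; this is the required 1-overlap for the bins of $A$.

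Finally, the two constructions are transposes of one another and so are mutually inverse, which gives the asserted equivalence. I expect the crux to be the saturation step in the backward direction: the estimate ``at most $d$ paths through each point'' is elementary, but upgrading it to an exact equality is exactly where the completeness hypothesis enters, through the maximal count of $d^2$ paths furnished by Lemma \ref{lem:d^2paths}. Dually, the forward direction leans on there being the full $d+1$ partitions, without which the $(d+1)(d-1)$ count would not exhaust the remaining points.
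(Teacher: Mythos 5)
Your proposal is correct and takes essentially the same route as the paper's own proof: the same point--path / bin--point duality between $A$ and $B$, the same exhaustion count $(d+1)(d-1)=d^2-1$ showing any two points of $A$ share exactly one bin (hence paths on $B$ meet exactly once), and the same path-counting in the reverse direction. If anything, your explicit double-counting argument that exactly $d$ paths pass through \emph{every} point of $B$ cleanly justifies a step that the paper states only loosely by appeal to Lemma \ref{lem:d^2paths} together with the completeness assumption.
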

		
\begin{proof}
	We will start with the $d\times d$ array $A=\{a_{i,j}\}$, which has partitions satisfying the 1-overlap property 
between bins/paths $\mathcal{P}^{(k)}_\nu$, $\mathcal{P}^{(l)}_\mu$, of different partitions $\mathcal{P}^{(k)}$, $\mathcal{P}^{(l)}$ of $A$. As noted in Section \ref{sec:o-op}, 
$\abs{\mathcal{P}^{(k)}_\nu \cap\mathcal{P}^{(\ell)}_\mu} = 1$ when $k \neq \ell$ and, by Lemma \ref{lem:numer-of-partitions}, there 
exist at most $d+1$ partitions. We assume that $A$ possesses $d+1$ partitions with the 1-overlap property. 
We can construct a $(d+1)\times d$ array $B$ as follows: each point of $B$ 
corresponds to a bin/path, and the $d$ bins from a given partition are arranged in rows. For example, $B$ could be the array $\{b_{k,\nu}\}=\{\mathcal{P}_\nu^{(k)}\}$.

Now, for each point $a_{i,j}$ in $A$ 
we can define a path $p_{i,j}$ on $B$ as follows: $p_{i,j}=\{\mathcal{P}^{(1)}_{r_1(i,j)},\dots,$ $\mathcal{P}^{(d+1)}_{r_{d+1}(i,j)} \}$ where, $a_{i,j}\in\mathcal{P}^{(k)}_{r_k(i,j)}$, $k=1,\dots,d+1$. The paths on $B$ are strictly downwards and this corresponds to the fact that each point in $A$ appears only once in a given partition.

Note that each path $p_{i,j}$ on $B$ contains $d+1$ bins $\mathcal{P}^{(l)}_{r_{l(i,j)}}$, $l=1,\dots,d+1$, each bin $\mathcal{P}^{(l)}_{r_l(i,j)}$ contains the point  $a_{i,j}$ plus $d-1$ points of $A$, and, by the 1-overlap property between the bins $\mathcal{P}^{(l)}_{r_{l(i,j)}}$, the point $a_{i,j}$ is the only one they have in common. Therefore a total of $(d+1)(d-1)+1=d^2$ distinct points (all the points in $A$) are involved in each path on $B$. This means that for any point $a_{i,j}$ and for any path $p_{\nu,\mu}=\{\mathcal{P}^{(1)}_{r_1(\nu,\mu)},\dots,\mathcal{P}^{(d+1)}_{r_{d+1}(\nu,\mu)} \}$, there is an index $k$ such that $a_{i,j}\in\mathcal{P}^{(k)}_{r_{k(\nu,\mu)}}$. Therefore, two arbitrary paths $p_{\nu,\mu}$, $p_{i,j}$ on $B$ must overlap at least once. Indeed, $a_{i,j}\in\mathcal{P}^{(l)}_{r_{l(i,j)}}$, for all $l$, and, by the previous reasoning, there is a $k$ such that $a_{i,j}\in\mathcal{P}^{(k)}_{r_{k(\nu,\mu)}}$. That  implies $\mathcal{P}^{(k)}_{r_{k(i,j)}}\cap\mathcal{P}^{(k)}_{r_{k(\nu,\mu)}}\neq\emptyset$ and then $\mathcal{P}^{(k)}_{r_{k(\nu,\mu)}}=\mathcal{P}^{(k)}_{r_{k(i,j)}}$ since $\mathcal{P}^{(k)}_{r_{k(\nu,\mu)}}$ and $\mathcal{P}^{(k)}_{r_{k(i,j)}}$ belong to the same partition of $A$ and can have a common point only if they coincides.

 The 1-overlap property defined on $A$ means also that the two paths $p_{\nu,\mu}$, $p_{i,j}$ overlap exactly once:  if they overlap more than once then, there are two indices, $k,l$, such that $\mathcal{P}^{(k)}_{r_k(\nu,\mu)}=\mathcal{P}^{(k)}_{r_k(i,j)}$ and $\mathcal{P}^{(l)}_{r_l(\nu,\mu)}=\mathcal{P}^{(l)}_{r_l(i,j)}$ and then, there exist two bins, $\mathcal{P}^{(k)}_{r_k(\nu,\mu)}$, $\mathcal{P}^{(l)}_{r_l(\nu,\mu)}$,  from different partitions of $A$ such that $a_{\nu,\mu},a_{i,j}\in \mathcal{P}^{(k)}_{r_k(\nu,\mu)}$ and $a_{\nu,\mu},a_{i,j}\in \mathcal{P}^{(l)}_{r_l(\nu,\mu)}$. This 
violates the 1-overlap property defined on $A$. Thus we have proved that the 1-overlap property between bins in different partitions of $A$ implies the 1-overlap property between any pair of paths on  $B$.
			
Conversely, let a $(d+1)\times d$ array $B$ be given with the path system as described. 
The 1-overlap property defined on $B$ 
entails that, by Lemma \ref{lem:d^2paths}, we have at most $d^2$ strictly downward paths 
where $d$ paths intersect on any given point. Here we assume that that number is $d^2$. 
We define a $d \times d$ array $A=\{a_{\nu,\mu}\}$ where, 
$a_{\nu,\mu}:=p^{\mu}_\nu$ is the $\mu^{th}$ path on $B$ containing the point $b_{1,\nu}$. With this construction and using the 1-overlap property on $B$, we define the paths on $A$ to correspond to the points in $B$ as follows: to a point $b_{i,j}$, $i\neq 1$, in $B$ there corresponds the path 
on $A$, $\{p^{\mu_1(i,j)}_1,\dots,p^{\mu_d(i,j)}_d\}$, such that  $b_{i,j}\in p^{\mu_k(i,j)}_k$ for each $k=1,\dots,d$, while to a point $b_{1,j}$ there corresponds the path $\{p^1_j,\dots,p^d_j\}$.

Since the paths on $B$ are strictly 
downward, each row of $B$ forms a partition of its paths. Therefore, if we take $d$ paths on $A$, each corresponding to a point in a particular row of $B$, we get a $d$-partition of the points in $A$. That is clearly true for the families of paths corresponding to the points $b_{1,j}$, $j=1,\dots,d$, which define the partition of $A$ in horizontal bins.  It is also true for the family of paths $\{(p^{\mu_1(i,k)}_1,\dots, p^{\mu_d(i,k)}_d)\}$, $k=1,\dots,d$, corresponding to the row $(b_{i,k})$, $i\neq 1$, $k=1,\dots,d$, which cannot overlap. Indeed, if it were 
$p^{\mu_j(i,k)}_j=p^{\mu_j(i,l)}_j$, we would have $b_{i,k}, b_{i,l}\in p^{\mu_j(i,k)}_j$ which means that the path $p^{\mu_j(i,k)}_j$ is not downward. In other words, a set of $d$ 
non-intersecting paths on $A$, which correspond to a row on $B$, form a partition of 
the points on $A$. 

As a consequence of the 1-overlap property on the paths on $B$, any two 
paths on $B$ intersect at only one point, which on $A$ corresponds to the property 
that any two paths from different partitions intersect only once: two paths on $A$ 
intersecting more than once correspond to two points on $B$ which are coincident 
with the same two paths twice, which cannot occur because of the 1-overlap property. For example, if the paths $(p^{\mu_1(i,j)}_1,\dots, p^{\mu_d(i,j)}_d)$, and  $(p^{\mu_1(\nu,\mu)}_1,\dots, p^{\mu_d(\nu,\mu)}_d)$ corresponding to the points $b_{i,j}$ and $b_{\nu,\mu}$, $i,\nu\neq 1$, respectively, overlap twice then, there are two indices $l$, $s$, such that  $p^{\mu_l(i,j)}_l= p^{\mu_l(\nu,\mu)}_l$ and $p^{\mu_s(i,j)}_s= p^{\mu_s(\nu,\mu)}_s$. That implies $b_{i,j},b_{\nu,\mu}\in p^{\mu_l(i,j)}_l$ as well as $b_{i,j},b_{\nu,\mu}\in p^{\mu_s(i,j)}_s$ which contradicts the 1-overlap property of the paths on $B$. The same reasoning applies if one of the two paths is replaced by an horizontal path on $A$. 
So by starting with $B$ and its path system, we have managed to construct $A$ with its own
path system satisfying the required properties.
\end{proof}

\subsection{Construction of SIC elements from mutually unbiased POVMs}
\label{subsec:B}
We assume that a collection of $d+1$  POVMs $\sfe^{k}$ is given, each with $d$ outcomes, and introduce a $(d+1) \times d$ array $\mathcal{E}$ with points 
$\mathcal{E}_{k\nu} = \Eo^{k}(\nu)$: 
\begin{align}\label{eq:mup-array}
\mathcal{E}=
		\begin{matrix}
			\Eo^{1}(1) & \dots & \Eo^{1}(d)\\
			\vdots & \ddots & \vdots \\
			\Eo^{d+1}(1) & \dots & \Eo^{d+1}(d)
		\end{matrix} .
\end{align}

Let us now assume that these  POVMs $\sfe^{k}$ are marginals of a $d^2$-outcome observable
$\sfg=\{\Go(i)\}_{i=0}^{d^2-1}$; moreover, we assume that each marginal is associated with a $d$-partition,
where any two partitions satisfy the 1-overlap property.
This entails that no two 
elements belonging to the same POVM have an operator $\Go(i)$ in common, whilst two elements from 
different POVMs must share just one. In this case, each $\Go(i)$ belongs to one element from each
of the $d+1$ POVMs. On the array $\mathcal{E}$, these relations correspond to paths $\cp_i$, defined as 
subsets of those elements that share $\Go(i)$. From the above constraints on the $d^2$ positive operators 
two properties of these paths are as follows:
\begin{enumerate}
	\item Given that no $\Go(i)$ can occur within two elements of the same POVM $\sfe^{k}$, 
	the paths must be strictly downwards, i.e. they cannot contain any horizontal routes through the array;
	\item Because any two elements from different $\sfe^{k}$ and $\sfe^{\ell}$ share a single operator $\Go(i)$, any two paths 
	through $\mathcal{E}$ must intersect at a single point: if two paths were to intersect twice, then the 
	points where the intersections occurred would correspond to two elements that have two operators, 
	$\Go(i)$ and $\Go(j)$ say, in common, thereby violating the construction requirements.
\end{enumerate}
By virtue of Proposition \ref{prop:ABequiv}, it is ensured that there are $d^2$ downward paths of length $d+1$
in $\mathcal{E}$ such that any two paths share the 1-overlap property. An example of this construction is given in
Fig.~\ref{fig:paths}.

\begin{figure}
\begin{center}
\includegraphics[scale=0.2]{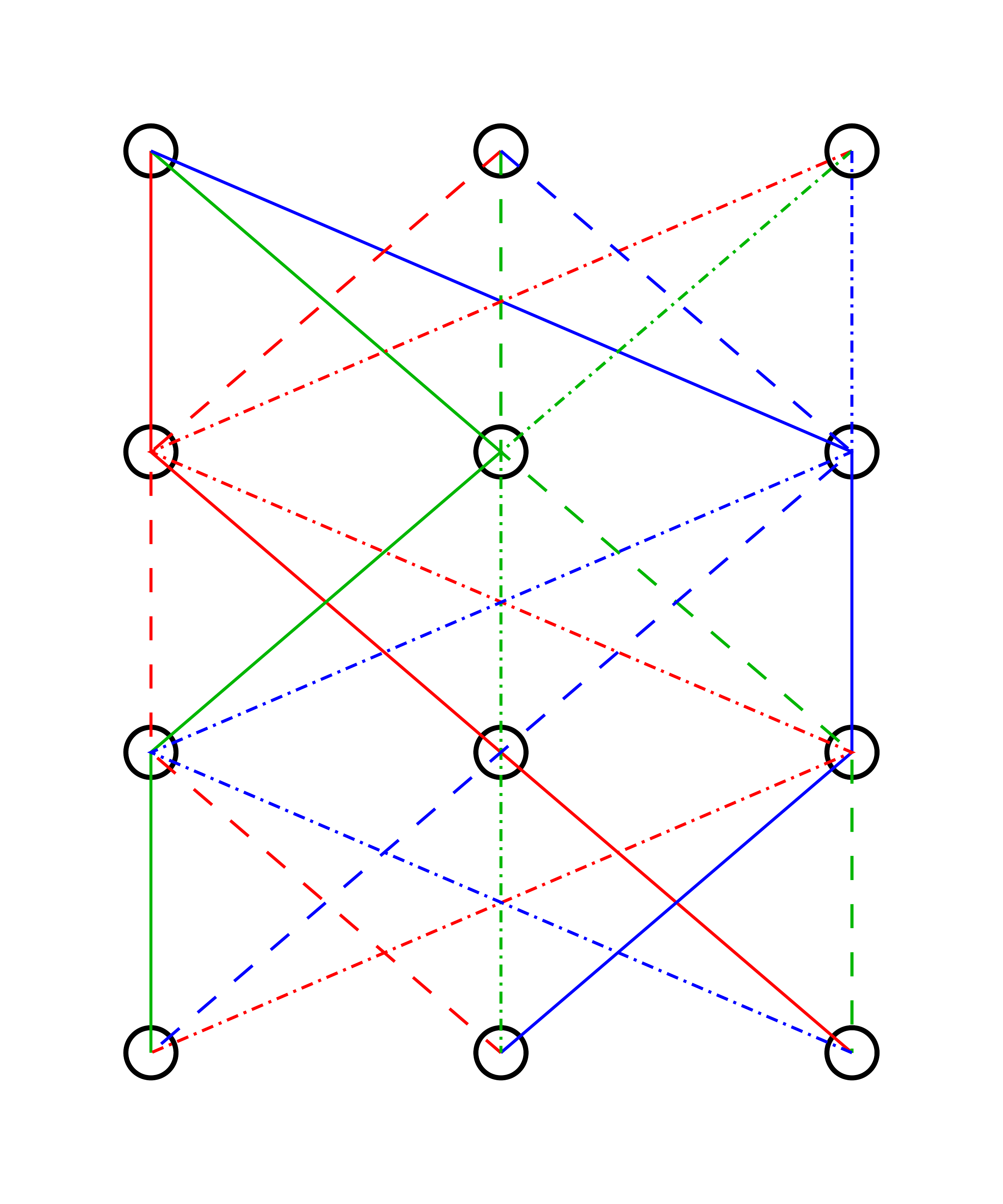}
\end{center}
\caption{$d^2$ Paths with the 1-overlap property for $d=3$.}
\label{fig:paths}
\end{figure}

		We now consider the path $\cp_i$ connecting all points in the array $\mathcal{E}$ 
of marginal POVM elements defined in \eqref{eq:mup-array} which 
contain $\Go(i)$. At this point we define the operators 
\begin{equation}\label{eq:path-ops}
E_i = \sum_k \Eo^{k}\big(\nu(i,k)\big),
\end{equation}
where 
$\Eo^{k}\big(\nu(i,k)\big)$ is the element in the $k$-th POVM $\sfe^{k}$ incident with the path $\cp_i$. 
Note that $E_i$ is the sum of $d+1$ operators, all sharing just $\Go(i)$
 and so all other operators $\Go(j)$  included within the path can occur at most once. 
Each of the $d+1$ operators  $\Eo^{k}\big(\nu(i,k)\big)$ 
is the sum of $d$ SIC elements, one of which is $\Go(i)$, and the remaining $d-1$ elements $\Go(j)$
must be unique to $\Eo^{k}\big(\nu(i,k)\big)$. Hence a total of $(d+1)(d-1)=d^2-1$  elements of $\Go$
other than $\Go(i)$ are required.
 Their sum is equal to $I - \Go(i)$ since $\sum_{j=0}^{d^2-1} \Go(j) = I$. Therefore,
 \begin{align*}
	E_i&= \sum_k \Eo^{k}\big(\nu(i,k)\big) = (d+1) \Go(i) + (I- \Go(i))\\
	& = d \Go(i) + I.
\end{align*}
Hence we obtain the following.

\begin{proposition}\label{prop:magicform}
Assume  a $d^2$-outcome POVM $\sfg$ has $d+1$  marginals $\sfe^{k}$ associated with a complete system of
$d-$partitions satisfying the 1-overlap property. 
Then the elements $\Go(k)$ can be recovered from the $\sfe^k$ via
\begin{align}
	\Go(k) = \frac{1}{d} (E_k - \id)\label{eq:magicform}
\end{align}
where the operators $E_k$ are defined in equation \eqref{eq:path-ops}.
\end{proposition}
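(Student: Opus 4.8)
The plan is to fix a single SIC element $\Go(i)$ and run a counting argument over the $d+1$ marginal operators that contain it. First I would note that, since every marginal $\sfe^{k}$ arises from a $d$-partition $\cpk$, the element $\Go(i)$ belongs to exactly one bin $\cpk_{\nu(i,k)}$ of each partition, and hence to exactly one operator $\Eo^{k}(\nu(i,k))$ in each of the $d+1$ rows of the array $\mathcal{E}$. Consequently the path $\cp_i$ meets every row precisely once, so that $E_i$ as defined in \eqref{eq:path-ops} is a sum of exactly $d+1$ operators; the fact that these paths are genuinely well defined (length $d+1$, pairwise 1-overlapping) is already secured by Proposition \ref{prop:ABequiv}.

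Next I would expand each summand using \eqref{eq:margeffects}, writing $\Eo^{k}(\nu(i,k)) = \sum_{\Go(j)\in\cpk_{\nu(i,k)}} \Go(j)$. Every one of these $d+1$ bins contains $\Go(i)$ together with $d-1$ further SIC elements, so $\Go(i)$ is counted $d+1$ times in $E_i$, while the remaining contributions fill $(d+1)(d-1)=d^2-1$ slots. The decisive step --- and where the 1-overlap hypothesis does the real work --- is to show that no $\Go(j)$ with $j\neq i$ occupies two of these slots. If it did, some $\Go(j)$ would lie in two bins $\cpk_{\nu(i,k)}$ and $\cpl_{\nu(i,\ell)}$ drawn from distinct partitions $k\neq\ell$; but then both bins would contain the two distinct elements $\Go(i)$ and $\Go(j)$, violating $|\cpk_{\nu(i,k)}\cap\cpl_{\nu(i,\ell)}|=1$.

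Once repetitions are excluded, a pigeonhole count closes the argument: there are exactly $d^2-1$ SIC elements other than $\Go(i)$ and exactly $d^2-1$ available slots, each filled at most once, so each $\Go(j)$ with $j\neq i$ appears exactly once. Invoking the normalisation $\sum_j \Go(j)=\id$ then gives
\begin{align*}
E_i = (d+1)\Go(i) + \sum_{j\neq i}\Go(j) = (d+1)\Go(i) + (\id - \Go(i)) = d\,\Go(i) + \id,
\end{align*}
and solving for $\Go(i)$ yields $\Go(i)=\frac1d(E_i-\id)$, which is \eqref{eq:magicform}. I expect the only point requiring care to be the no-repetition claim; the rest is bookkeeping. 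It is also worth flagging that the derivation uses nothing about positivity or the precise SIC overlap values --- only the combinatorics of $d$-partitions with the 1-overlap property together with $\sum_j\Go(j)=\id$ --- so the same recovery formula holds verbatim whenever the $\Go(j)$ form a SIC system rather than a genuine SIC.
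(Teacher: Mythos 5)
Your proof is correct and takes essentially the same route as the paper's: sum the $d+1$ bins along the path through $\Go(i)$, use the 1-overlap property to rule out any repeated element $\Go(j)$ among the $(d+1)(d-1)=d^2-1$ remaining slots, and invoke $\sum_j \Go(j)=\id$ to identify their sum with $\id-\Go(i)$, giving $E_i = d\,\Go(i)+\id$. Your explicit no-repetition and pigeonhole steps, and the closing observation that neither positivity nor the SIC overlap values are needed, simply spell out what the paper states more tersely (consistent with the proposition being phrased for an arbitrary $d^2$-outcome POVM).
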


We are now ready to prove the main result of this section.

\begin{theorem}\label{thm:magic}
Consider a family of $d+1$ POVMs $\sfe^{k}$ which are mutually unbiased, i.e.,
\[
\tr{\Eo^{k}(\mu) \Eo^{\ell}(\nu)}=\frac1{d},\ k,\ell=1,\dots,d+1,\ \nu,\mu=1,\dots d,
\]
 and assume that the elements also satisfy
\[
\tr{\Eo^{k}(\mu) \Eo^{k}(\nu)}=\frac1{d+1},\ k=1,\dots,d+1,\ \nu\neq\mu.
\]
Further assume that there are $d^2$ sets $\cp_i$, $i=0,1,\dots,d^2-1$, consisting of $d+1$ elements, one chosen from each of  the $\sfe^{k}$,
such that the 1-overlap property is fulfilled. Let operator $E_i$ be the sum of the elements in  $\cp_i$.
Then the $d^2$ operators $\Go(i):=\frac1d(E_i-I)$ form a SIC if and only if they are positive. 
\end{theorem}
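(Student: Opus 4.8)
The plan is to prove that the operators $\Go(i)=\frac1d(E_i-\id)$ always satisfy all the defining conditions of a SIC \emph{system}---self-adjointness, the normalisation $\sum_i\Go(i)=\id$, and the trace identities \eqref{eqn:SIC1}--\eqref{eqn:SIC2}---using only the stated hypotheses and \emph{without} invoking positivity. Once this is established the claimed equivalence is immediate: a SIC is by definition a POVM and hence consists of positive operators, so ``$\Go$ is a SIC $\Rightarrow$ positive'' is trivial, while the converse ``positive $\Rightarrow$ SIC'' holds because positivity is the only SIC property still missing once the SIC-system conditions are in hand.

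Self-adjointness is clear, since each $E_i$ is a sum of the self-adjoint effects $\Eo^k(\nu(i,k))$. For normalisation I would first record two trace values implied by the hypotheses together with $\sum_\nu\Eo^k(\nu)=\id$: mutual unbiasedness gives $\tr{\Eo^k(\mu)}=\tr{\Eo^k(\mu)\sum_\nu\Eo^\ell(\nu)}=\sum_\nu\frac1d=1$ for any $\ell\ne k$ (as in Remark~\ref{rem:sicmupprops}), and then the self-overlap hypothesis yields $\tr{(\Eo^k(\nu))^2}=1-(d-1)\frac1{d+1}=\frac2{d+1}$. Next I would count how many of the $d^2$ paths pass through a fixed point $\Eo^k(\nu)$ of the array $\mathcal{E}$: two paths through a common point meet only there by the 1-overlap property, so in any other row they occupy distinct columns, giving at most $d$ paths through $\Eo^k(\nu)$; since each path meets row $k$ exactly once and there are $d$ columns, each point lies on \emph{exactly} $d$ paths. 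Hence $\sum_i E_i=\sum_k\sum_i\Eo^k(\nu(i,k))=\sum_k d\sum_\nu\Eo^k(\nu)=d(d+1)\id$, and therefore $\sum_i\Go(i)=\frac1d\bigl(d(d+1)\id-d^2\id\bigr)=\id$.

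The core of the argument is the evaluation of $\tr{E_iE_j}$, after which the trace conditions follow from $\tr{E_i}=d+1$ and the identity $\tr{\Go(i)\Go(j)}=\frac1{d^2}\bigl(\tr{E_iE_j}-\tr{E_i}-\tr{E_j}+d\bigr)$. Writing $E_i=\sum_k\Eo^k(\nu(i,k))$, I would split $\tr{E_iE_j}=\sum_{k,\ell}\tr{\Eo^k(\nu(i,k))\Eo^\ell(\nu(j,\ell))}$ into its $k\ne\ell$ and $k=\ell$ parts. All $(d+1)d$ off-diagonal terms equal $\frac1d$ by mutual unbiasedness, contributing $d+1$. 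For the diagonal part the combinatorics is decisive: when $i=j$ every diagonal term is a self-overlap $\frac2{d+1}$, giving $2$ and hence $\tr{E_i^2}=d+3$; when $i\ne j$ the 1-overlap property forces the paths $\cp_i,\cp_j$ to coincide in exactly one row, so that single term equals $\frac2{d+1}$ while the remaining $d$ diagonal terms (where $\nu(i,k)\ne\nu(j,k)$) are each $\frac1{d+1}$, giving $\frac{d+2}{d+1}$. Substituting then yields $\tr{\Go(i)^2}=\frac1{d^2}(d+3-2(d+1)+d)=\frac1{d^2}$ and, for $i\ne j$, $\tr{\Go(i)\Go(j)}=\frac1{d^2}\bigl(\frac{d^2+3d+3}{d+1}-(d+2)\bigr)=\frac1{d^2(d+1)}$, which are precisely \eqref{eqn:SIC2} and \eqref{eqn:SIC1}.

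The main obstacle is not analytic but combinatorial bookkeeping: the whole computation hinges on converting the 1-overlap property of the path system $\{\cp_i\}$ into the two quantitative facts that (i) each effect $\Eo^k(\nu)$ lies on exactly $d$ paths, needed for normalisation, and (ii) two distinct paths coincide in precisely one row, needed to isolate the unique self-overlap term in $\tr{E_iE_j}$. Both are underwritten by the path structure established in Lemma~\ref{lem:d^2paths} and Proposition~\ref{prop:ABequiv}; once they are invoked, the remainder is a direct substitution into the trace identities above.
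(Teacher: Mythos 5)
Your proposal is correct and follows essentially the same route as the paper's proof: derive the unit-trace and self-overlap values $\tr{\Eo^{k}(\mu)}=1$, $\tr{\Eo^{k}(\mu)^2}=\tfrac{2}{d+1}$ from the hypotheses, split $\tr{E_iE_j}$ into the $d(d+1)$ cross-POVM terms (each $\tfrac1d$) and the $d+1$ same-POVM terms (one self-overlap $\tfrac{2}{d+1}$ isolated by the 1-overlap property, the rest $\tfrac1{d+1}$), and observe that positivity is then the only SIC property not automatic. Your counting argument that each effect lies on exactly $d$ paths (at most $d$ by 1-overlap, exactly $d$ since the $d^2$ paths each meet row $k$ once) is a welcome justification of a step the paper merely asserts in its normalisation computation.
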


\begin{proof} 
We will be using the observation made in 
Remark \ref{rem:sicmupprops}, which entails that under the conditions of the theorem the POVMs also satisfy
\[
\tr{\Eo^{k}(\mu)}=1,\quad \tr{\Eo^{k}(\mu)^2}=\frac2{d+1}.
\]
We verify the SIC conditions \eqref{eqn:SIC1}-\eqref{eqn:SIC2}. 
First,
\begin{align}
	\tr{\Go(i)}= \frac{1}{d}\tr{E_i - I} = \frac{1}{d}(d+1) - 1 = \frac{1}{d}\label{eq:SICpi-i},
\end{align}	
where we have used ${\rm tr}\bigl[{E_i}\bigr]= \sum_k {\rm tr}\bigl[{\Eo^{k}\big(\nu(i,k)\big)}\bigr] = d+1$. 
Secondly,
\begin{align*}
	\tr{\Go(i)^2} &= \frac{1}{d^2} \tr{E_i^2 -2 E_i + I}= \frac{1}{d^2}(\tr{E_i^2} - d - 2),
\end{align*}
Note that $\tr{E_i^2}={\rm tr}\left[{\bigl(\sum_k \Eo^{k}\big(\nu(i,k)\big)\bigr)^2}\right]$ contains $d+1$ 
terms of the form ${\rm tr}\left[{\bigl(\Eo^{k}(\nu)\bigr)^2}\right]=2/(d+1)$ and $d(d+1)$ terms of the form 
$\tr{\Eo^{k}(\mu) \Eo^{\ell}(\nu)}=1/d$ since any two distinct elements in  a given $E_i$ must 
belong to different POVM. 
Therefore,
\[
\tr{(E_i)^2}=2(d+1)/(d+1) + d(d+1)/d = d+3,
\] 
and so
\begin{align}
	\tr{\Go(i)^2} = \frac{1}{d^2}(\tr{E_i^2} - d - 2) = \frac{1}{d^2}.\label{eq:SICpi-ipi-i}
\end{align}
This coincides with the required value for this quantity.

We are ready to determine the trace of $\Go(i) \Go(j)$ for $i\ne j$:
\begin{equation}
\begin{split}
	\tr{\Go(i) \Go(j)}&= \frac{1}{d^2}\tr{E_i E_j - E_i - E_j + I} \\
	&=\frac{1}{d^2}(\tr{E_iE_j} - d -2).\label{eq:trpi-ipi-j}
\end{split}
\end{equation}
The product $E_i E_j$ consists of $(d+1)^2$ terms: $d+1$ will correspond to the product of two elements from 
the same POVM, i.e. $\Eo^{k}\big(\nu(i,k)\big)\Eo^{k}\big(\nu(j,k)\big)$, and $d(d+1)$ will be the product of elements from 
different POVMs, that is, $\Eo^{k}\big(\nu(i,k)\big)\Eo^{\ell}\big(\nu(j,\ell)\big)$. Because the intersection of 
$\cp_i$ and $\cp_j$ contains only one element, it follows that one of the $d+1$ terms 
$\Eo^{k}\big(\nu(i,k)\big)\Eo^{k}\big(\nu(j,k)\big)$ 
will have trace $2/(d+1)$, whilst the remaining $d$ terms will have trace $1/(d+1)$, and the $d(d+1)$ 
terms of the form $\Eo^{k}\big(\nu(i,k)\big)\Eo^{\ell}\big(\nu(j,\ell)\big)$ will have trace $1/d$. Using these values we have that 
\begin{align*}
\tr{E_i E_j}&=\frac{2}{d+1}+\frac{d}{d+1}+\frac{1}{d}(d(d+1))\\
&= \frac{1}{d+1} + d+ 2,
\end{align*} 
and so equation \eqref{eq:trpi-ipi-j} becomes
\begin{align}
	\tr{\Go(i)\Go(j)} = \frac{1}{d^2}(\tr{E_iE_j}-d-2) = \frac{1}{d^2(d+1)}. \label{eq:SICpi-ipi-j}
\end{align}
Finally we verify the normalisation condition:
\begin{align*}
\sum_{i=0}^{d^2-1}\Go(i)=\frac1d\sum_{i=0}^{d^2-1}E_i-dI.
\end{align*}
Each $E_i$ contains $d+1$ POVM elements, each of which is from a different $\sfe^{k}$. Each POVM element
$\Eo^{k}(\mu)$ is shared by exactly $d$ of the operators $E_i$, hence contributes $d$ occurrences to  $\sum_iE_i$.
The sum of all  $d(d+1)$ POVM elements is $(d+1)I$. This gives:
\[
\sum_{i=0}^{d^2-1}\Go(i)=\frac1d d(d+1)I-dI =I.
\]
\end{proof}
This theorem shows that the $\Go$ has all the properties required of a SIC, except, possibly,
the positivity of all $\Go(i)$.
In other words, $\Go$ forms a SIC system and we can rephrase Theorem \ref{thm:magic} as follows.

\begin{corollary}
Under the assumptions of Theorem \ref{thm:magic}, the $d^2$ operators $\Go(i)$ constructed from the mutually unbiased POVMs
form a SIC system.
\end{corollary}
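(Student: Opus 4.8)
The plan is to observe that the entire content of this corollary is already contained in the proof of Theorem \ref{thm:magic}, once we note that the definition of a SIC system omits the positivity requirement. Recall that a SIC system is a family $\{\Go(i)\}_{i=0}^{d^2-1}$ of \emph{selfadjoint} operators obeying \eqref{eqn:SIC1}--\eqref{eqn:SIC2} together with the normalisation $\sum_i\Go(i)=\id$; crucially, positivity of the $\Go(i)$ is explicitly \emph{not} demanded.

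First I would check selfadjointness. Each operator $E_i=\sum_k\Eo^{k}\big(\nu(i,k)\big)$ is a sum of POVM effects, hence a sum of positive (in particular selfadjoint) operators, so $E_i$ is selfadjoint; consequently $\Go(i)=\frac1d(E_i-\id)$ is selfadjoint as well.

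Next I would simply quote the trace computations carried out in the proof of Theorem \ref{thm:magic}. Those computations establish, using only the mutual unbiasedness, the self-overlap relation $\tr{\Eo^{k}(\mu)\Eo^{k}(\nu)}=1/(d+1)$ for $\mu\neq\nu$, and the 1-overlap structure of the paths $\cp_i$---and never invoking positivity---that
\[
\tr{\Go(i)^2}=\frac1{d^2},\qquad \tr{\Go(i)\Go(j)}=\frac1{d^2(d+1)}\quad(i\neq j),
\]
which are precisely conditions \eqref{eqn:SIC2} and \eqref{eqn:SIC1}, and moreover that the normalisation $\sum_i\Go(i)=\id$ holds.

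Hence all the defining properties of a SIC system are satisfied, and the conclusion follows. I do not expect any genuine obstacle here: the whole point is that Theorem \ref{thm:magic} derives the algebraic trace identities and the normalisation \emph{without} ever using positivity, and since a SIC system is by definition a SIC with the positivity clause removed, the operators $\Go(i)$ constitute a SIC system irrespective of whether they happen to be positive.
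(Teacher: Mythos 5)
Your proposal is correct and matches the paper's own reasoning: the paper states this corollary as an immediate rephrasing of Theorem \ref{thm:magic}, noting that its proof establishes the trace identities \eqref{eqn:SIC1}--\eqref{eqn:SIC2} and the normalisation without ever invoking positivity, which is exactly your argument. Your explicit check of selfadjointness of the $\Go(i)$ is a small addition the paper leaves implicit, but it is the same proof.
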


Taking up the construction of commutative POVMs as smearings of mutually unbiased PVMs, we can complete the route from a  family of  $d+1$ MUBs to a SIC system.

Suppose a collection of  MUBs $\{ \varphi_i^{k} \}_{i=1}^d$ ($k=1,\dots,n$) are given,  so that $\Po^{k}(i) = \big|{\varphi_i^{k}}\big\rangle\big\langle{\varphi_i^{k}}\big|$ are mutually unbiased PVMs.
Applying a smearing of the form \eqref{eq:E-k-nuspec} with a stochastic matrix $\Lambda^{k}$ to the PVMs $\Po^{k}$ yields  $n$ POVMs $\sfe^{k}$ as in \eqref{eq:E-k-nuspec}
\[
	\Eo^{k}(\nu) = \sum_{i=1}^d \lambda^{k}_{\nu,i} \Po^{k}(i), \ k=1,\dots,n,\ \nu = 1, \dots, d 
\]

\begin{proposition}\label{prop:traceone}
	POVMs $\Eo^{k}$, $k=1,\dots,n$ obtained as smearings of $n$ mutually unbiased 
	PVMs $\Po^{k}$ via \eqref{eq:MUB-E-k} are mutually 
	unbiased if and only if all elements  $\Eo^{k}(\nu)$ are unit trace operators (that is, the matrices $\Lambda^{k}$
	are doubly stochastic). 	
\end{proposition}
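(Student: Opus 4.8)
The plan is to reduce mutual unbiasedness of the smeared POVMs to a single numerical identity among the row sums of the smearing matrices $\Lambda^{k}$, and then to exploit the column-stochasticity that the POVM normalisation $\sum_\nu \Eo^{k}(\nu)=\id$ imposes on each $\Lambda^{k}$. First I would compute the pairwise trace directly: inserting the spectral form \eqref{eq:E-k-nuspec} and using mutual unbiasedness of the PVMs, $\tr{\Po^{k}(i)\Po^{\ell}(j)}=1/d$ for $k\neq\ell$, yields
\[
\tr{\Eo^{k}(\mu)\Eo^{\ell}(\nu)}=\frac1d\Big(\sum_i \lambda^{k}_{\mu,i}\Big)\Big(\sum_j \lambda^{\ell}_{\nu,j}\Big)=\frac1d\,R^{k}_\mu R^{\ell}_\nu ,
\]
where $R^{k}_\mu:=\sum_i \lambda^{k}_{\mu,i}=\tr{\Eo^{k}(\mu)}$ is the $\mu$-th row sum of $\Lambda^{k}$. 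This factorisation is the heart of the matter: the $\Eo^{k}$ are mutually unbiased if and only if $R^{k}_\mu R^{\ell}_\nu=1$ for all $k\neq\ell$ and all $\mu,\nu$. The ``if'' direction is then immediate and coincides with the forward implication already established in the first half of Theorem \ref{thm:commMUPs<->MUBs}: if every $\Eo^{k}(\mu)$ has unit trace then $R^{k}_\mu=1$ throughout, the product equals $1$, and mutual unbiasedness follows.

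For the converse I would argue as follows. Assume the $\Eo^{k}$ are mutually unbiased. Each $R^{k}_\mu$ is then strictly positive, since a vanishing row sum would force $\Eo^{k}(\mu)=0$ and hence an overlap $0\neq 1/d$. Fixing $k$ and choosing any $\ell\neq k$ (available because $n\geq 2$), the equalities $R^{k}_\mu R^{\ell}_\nu=1=R^{k}_{\mu'}R^{\ell}_\nu$ at a common $\nu$ give $R^{k}_\mu=R^{k}_{\mu'}$, so all rows of $\Lambda^{k}$ share one common sum $r_k$. The decisive step is then to invoke column-stochasticity: since $\sum_\nu \Eo^{k}(\nu)=\id$ forces $\sum_\nu \lambda^{k}_{\nu,i}=1$ for each $i$, the sum of all entries of $\Lambda^{k}$ equals $d$; but this total is also $\sum_\mu R^{k}_\mu=d\,r_k$, whence $r_k=1$. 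Thus every element is unit trace, and each $\Lambda^{k}$, being both row- and column-stochastic, is doubly stochastic.

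I expect this last step to be the genuine obstacle, and the point most easily overlooked. Mutual unbiasedness alone only constrains the row sums to be constant within each partition and reciprocal across partitions ($r_k r_\ell=1$), leaving a scaling degeneracy that the overlaps cannot remove: for $n=2$ any reciprocal pair $r_k=c$, $r_\ell=1/c$ satisfies every overlap equation. It is the POVM normalisation, i.e.\ the column-stochasticity of $\Lambda^{k}$, that breaks this degeneracy and pins the common row sum to $1$. I would therefore state explicitly, as standing hypotheses for the converse, that $n\geq 2$ and that the smearing matrices are column-stochastic by construction, since both are needed.
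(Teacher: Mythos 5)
Your proof is correct, and the forward direction coincides with the paper's (both defer to the computation in the first half of Theorem \ref{thm:commMUPs<->MUBs}). For the converse, however, you take a genuinely different and noticeably longer route than the paper. The paper's argument is a single line performed at the operator level: summing the unbiasedness condition $\tr{\Eo^{k}(\mu)\Eo^{\ell}(\nu)}=1/d$ over $\nu$ and using $\sum_\nu \Eo^{\ell}(\nu)=\id$ gives $\tr{\Eo^{k}(\mu)}=\sum_\nu \tfrac1d=1$ directly; note that this never invokes the spectral decomposition, the factorisation into row sums, or even the mutual unbiasedness of the underlying PVMs, so it applies verbatim to \emph{any} collection of mutually unbiased POVMs. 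Your argument instead works at the level of the smearing matrices: you use the factorisation $\tr{\Eo^{k}(\mu)\Eo^{\ell}(\nu)}=\tfrac1d R^{k}_\mu R^{\ell}_\nu$ (which does require the PVMs to be mutually unbiased), deduce that the row sums are constant within each $\Lambda^{k}$ and reciprocal across $k\neq\ell$, and then break the residual scaling degeneracy $r_k\mapsto c\,r_k$, $r_\ell\mapsto r_\ell/c$ by the column-stochasticity forced by $\sum_\nu\Eo^{k}(\nu)=\id$. What your route buys is a structural insight the paper leaves implicit: mutual unbiasedness alone cannot pin the traces (your $n=2$ reciprocal-pair example shows this cleanly), and it is precisely the POVM normalisation that removes the freedom. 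What the paper's route buys is brevity and generality: the same two ingredients (unbiasedness plus normalisation) are combined in one trace identity, with no need for the eigenvalue matrices at all. Your explicit flagging of $n\geq 2$ and of column-stochasticity as the load-bearing hypotheses is accurate, though both are already contained in the proposition's standing assumptions (the $\Eo^{k}$ are POVMs, and mutual unbiasedness only constrains pairs $k\neq\ell$).
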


\begin{proof}
The statement that the mutual unbiasedness of the POVMs $\Eo^{k}$ follows from the mutual unbiasedness of the corresponding PVMs
together with unit trace property of the operators $\Eo^{k}(\mu)$ has already been shown in  
Theorem \ref{thm:commMUPs<->MUBs}.

Conversely, assume that the operators $\Eo^{k}(\mu)$ and $\Eo^{\ell}(\nu)$ are mutually unbiased, that is, ${\rm tr}\bigl[{\Eo^{k}(\mu) \Eo^{\ell}(\nu)}\bigr]=\frac1d$ whenever $k\neq \ell$, then summing
over $\nu$ and using the normalisation property $\sum_\nu \Eo^{\ell}(\nu)=I$ immediately gives ${\rm tr}\bigl[ \Eo^{k}(\mu) \bigr]=1$. 
\end{proof}

Combining Theorem \ref{thm:magic} and Proposition \ref{prop:traceone} we have obtained a construction of a SIC system from
a complete family of MUBs.

\section{Comparison with the construction of Wootters}

Wootters \cite{Wootters2006} describes a geometric  structure of dual systems of points and lines that 
turns out to be isomorphic
to the two equivalent combinatorial structures composed of arrays and paths presented here. He begins by 
considering a set of 
$d^2$ points and, upon them, defines \emph{mutually unbiased striations} as follows: a striation is a partitioning 
of the points into $d$ \emph{parallel lines} of $d$ points each, that is, no point is contained in more than one line, 
and two striations are mutually unbiased if any two lines from different striations are coincident at a single point. 

As we have argued before, in this language we are restricted to having at most $d+1$ mutually unbiased striations. 
Assuming that these $d^2$ points are placed into a $d\times d$ array, one begins by defining the horizontal 
and vertical striation in the logical way, and so any further mutually unbiased striation will correspond to a 
Latin square because, as in our construction, we cannot allow for any line to contain two points from the same 
row or column. 
Furthermore, any additional mutually unbiased striations will correspond to Latin squares that are orthogonal
to the first Latin square, and each other, as is the case in our scheme. The difference lies in the association of points 
and lines:  Wootters ascribes to each point $\alpha$ a selfadjoint operator $A_\alpha/d$, and to each 
line $\lambda$ a rank-1 projection $\Po(\lambda)$, satisfying the following properties:
	
	\begin{enumerate}
		\item $\tr{A_\alpha/d}=1/d$;
		\item $\tr{(A_\alpha/d)(A_\beta/d)}=(1/d)\delta_{\alpha\beta}$;
		\item $\sum_{\alpha\in\lambda}A_\alpha/d= \Po(\lambda)$.
	\end{enumerate}
As a consequence of these conditions, a striation corresponds to a basis for a $d$-dimensional Hilbert space -- 
we have $d$ normalized rank-1 projections that are mutually orthogonal -- and two mutually unbiased striations 
correspond to a pair of mutually unbiased bases. 
	
However, as is pointed out in his paper, it is not clear how one constructs the operators $A_\alpha$ so that we 
may find these MUBs without in fact already having a complete set of MUBs to work backwards from. By contrast, 
our method provides us with the necessary operators corresponding to each point -- the elements of a SIC POVM -- 
and by going along a slightly more elongated path -- by constructing striations corresponding to commutative mutually unbiased POVMs 
and then finding the joint eigenbasis for each mutually unbiased POVM -- we are able in some cases to construct a set of MUBs (which 
will not always be complete).
	
Before considering the case for SIC POVMs, we note that the set of $d^2$ points and $d+1$ mutually 
unbiased striations to be considered correspond to an \emph{affine plane of order $d$}. An affine plane of order $d$ is a set of 
$d^2$ points and $d(d+1)$ lines satisfying the following conditions:
	
	\begin{enumerate}
		\item For any two points there exists exactly one line that is coincident with both;
		\item for any point $\alpha$ and line $\lambda$ not containing $\alpha$, there exists a single line that is parallel (i.e. non-intersecting) 
		with $\lambda$ that is coincident with $\alpha$;
		\item there exist three points that are non-collinear.
	\end{enumerate}

It is known that affine planes 
exist if the order is a prime power. The situation for non-prime powers is not very well explored, although it is known that an
affine plane of order 6 does not exist.
	
Following the discussion on MUBs, Wootters proceeds to consider the analogous geometric problem for constructing 
a SIC POVM. Again, this is considered in terms of lines passing through points in such a way that each line corresponds 
to a rank-1 projection $\Po(i) = d \Go(i)$, where $\Go(i)$ is an element of our desired SIC POVM. In order to determine the 
number of points that lie on a given line, and indeed the total number of points to be consider, Wootters adopts 
an idea of  Zauner \cite{Zauner1999}  that provides a connection between the cardinality of a set of points and the 
trace of its corresponding operator. If $M$ is an operator, then let $S_M$ denote the set of points corresponding to it 
within the given geometric construct (so in the previous paragraphs we would have that $S_{A_\alpha} = \{ \alpha\}$, etc.). 
We now look for relations of the form 
	\begin{equation}
		\abs{S_M} = k\; \tr{M} \ \mathrm{and} \ \abs{S_{M_1} \cap S_{M_2}}= k\; \tr{M_1 M_2}
	\end{equation}
for some constant $k$. Since the rank-1 projection $\Po(\lambda)$ associated with the line $\lambda$ 
must be trace one,  it follows that the line contains $k$ points within it. Furthermore, 
$\tr{\Po(\lambda) \Po(\mu)}=1/(d+1)$ entails that distinct  lines $\lambda$ and $\mu$ intersect at $k/(d+1)$ 
points. This overlap must be an integer, hence $k$ must be a multiple of $d+1$. For simplicity, one can choose 
$k=d+1$, and so any two lines intersect once. If one assumes that the operator associated with the entire set is 
the identity, and that each point lies on the same number of lines as any other point, then we have that there 
must exist $k\; \tr{I} = d(d+1)$ points and that each must lie on $d$ lines. We note that the value of $k$ 
was chosen for simplicity, whilst in our work, the value of $k$, i.e. the number of points each line must contain, is 
fixed. 
	
Collecting all the information for the SIC POVM case, we see that one is  considering $d(d+1)$ points and $d^2$ lines 
containing $d+1$ points each, where each point is contained in $d$ lines and any two lines intersect at 
a single point. By performing a relabelling between the points and lines, we see that this construction yields
an affine plane of order $d$. Note that this is similar to the situation for our construction, except that in our case the 
equivalence between the two systems  of points and lines are in fact immediate, rather than the result of choosing a 
particular value of $k$, and that we are again reliant on the dimension being one that allows for an affine plane
to be constructed. 
	
Assuming that an affine plane can be constructed, Wootters  identifies the task of finding a set of selfadjoint 
operators $B_\alpha$ associated 
with each point $\alpha$ such that the line $i$ corresponds to their sum, which is equal to $\Po(i) = d\Go(i)$. In order 
to satisfy the necessary trace conditions, he requires that

	\begin{enumerate}
		\item $\tr{B_\alpha^2}=d/(d+1)^2$;
		\item $\tr{B_\alpha B_\beta}=1/d(d+1)^2$ if $\alpha\neq\beta$ and they share a line;
		\item $\tr{B_\alpha B_\beta}=-1/(d+1)^2$ if $\alpha \neq \beta$ but they don't share a line.
	\end{enumerate}
However, as in the case for MUBs, it is not immediately obvious how these operators are constructed, without 
working backwards (and thereby defeating the purpose). Again, in our work we have the benefit that the operators 
associated with each point -- elements of derived mutually unbiased POVMs -- can be constructed readily.

\section{From covariant SICs to MUBs in odd prime power dimensions}\label{sec:WH-cov}

After this comparison between our combinatorial scheme and Wootters' geometric scheme, we are now free to use geometric language, which is natural here.
In particular, this section is devoted to developing a concrete example of the previous general setting. Given a finite covariant phase space observable acting on a Hilbert space of prime power dimension $d=p^n$ $(p\neq 2)$, we construct $d+1$ marginals that are mutually unbiased POVMs. These marginals turn out to be  smearings of  $d+1$ MUBs, and hence are  commutative (compare with the discussion at the end of \ref{subsec:comm}). The $d+1$ MUBs are then explicitly exhibited.
Rather interestingly, such a construction holds for every finite covariant phase space observable (SIC or otherwise). Finally we  characterize
SIC covariant phase space observables using the results of \ref{subsec:B}.

In this section  $\F$ denotes a finite field with characteristic $p$, different from $2$. Denote by ${\rm Tr} : \F \to \Z_p$ the trace of $\F$ over the cyclic field $\Z_p$, and let $\omega$ be a primitive $p$-root of unit in $\C$. A reader not familiar with finite fields may wish to restrict to the case $\F = \Z_p$  with $p$ an odd prime; in this case the trace ${\rm Tr}$ becomes the identity. This restriction does not simplify any proofs that follow and imposes undue restrictions on the dimensionality of the Hilbert space, hence we shall just consider a generic finite field $\F$.

Consider the  vector space $V=\F\times\F$, endowed with the nondegenerate symplectic form 
\[
\sym{\vv}{\vw}=(v_2w_1-v_1w_2)\,,  
\]
for all $\vv=(v_1,v_2)\,,\, \vw=(w_1,w_2)$ in $V$.
One can think of $V$ as a discrete phase space of a quantum system with finite degrees of freedom \cite{Gibbons2004}, analogous to the usual phase space $\R^2$ of a quantum system with continuous degrees of freedom.

For all $\vv\in V$, the map
$$
\dual{\vv}{\cdot} : V \to \T \qquad \dual{\vv}{\vw} = \omega^{\Tr{\sym{\vv}{\vw}}}
$$
defines a character of the additive abelian group $V$. Moreover, the map $\vv\mapsto \dual{\vv}{\cdot}$ is a homomorphism of the group $V$ into character group $\hat{V}$. Clearly, $\dual{\vv}{\vv} = 1$ and $\dual{\vv}{\vw} = \overline{\dual{\vw}{\vv}}$ for all $\vv,\vw\in V$. The {\em symplectic Fourier transform} of $V$
$$
\ff : \ell^2(V) \to \ell^2(V) \, , \qquad \ff f(\vv) = \frac{1}{|\F|} \sum_{\vu\in V} \dual{\vv}{\vu} f(\vu)
$$
is an invertible linear map with inverse
$$
\ff^{-1} \hat{f}(\vu) = \frac{1}{|\F|} \sum_{\vv\in V} \overline{\dual{\vv}{\vu}} \hat{f}(\vv) \, .
$$

We denote by $\sas$ the set of $1$-dimensional lines of $V$, i.e.,
\begin{equation*}
\sas = \{ L\subset V \mid \F L = L \, , \, \dim_\F L = 1 \} \, .
\end{equation*}
For all $\alpha\in\F$, let
$$
L_\alpha = \F(1,  \alpha)  = \{(u,\alpha u) \mid u\in\F\}
$$
and
$$
L_\infty = \F(0,1) = \{(0,u) \mid u\in\F\} \, .
$$
The following facts are easy to verify.
\begin{enumerate}[(a)]
\item $L_\alpha \cap L_\beta = \{ \vnull \}$ for all $\alpha,\beta\in\F\cup \{\infty\}$ with $\alpha\neq\beta$;
\item $V=\bigcup_{\alpha\in\F\cup \{\infty\}} L_\alpha$;
\item $\sas = \{L_\alpha \mid \alpha\in\F\cup \{\infty\} \}$; 
\item $|\sas| = |\F|+1$.
\end{enumerate}
Hence, each $L\in\sas$ determines an $|\F|$-partition of $V$, i.e.~the set of affine lines $V/L = \{\vv+L \mid \vv\in V\}$ which are parallel to $L$. The $d+1$ partitions $V/L$ associated with the $d+1$ lines $L\in\sas$ share the $1$-overlap property. This is due to the fact that, for $L\neq L'$, the two affine lines $\vv+L$ and $\vu+L'$ meet exactly at one point of $V$.

Let $\hi\coloneqq \ell^2(\F)$ be the $d$-dimensional Hilbert space of complex functions on $\F$.
For each $f\in \hi$ and $\vv=(v_1,v_2)\in V$, define the linear operator $W(\vv)\in\lh$:
\begin{align*}
[W(\vv)f](x)\coloneqq \omega^{\Tr{[v_2(x-2^{-1}v_1)]}} f(x-v_1)\qquad \forall x\in \F\,,
\end{align*}
where $2^{-1}$ is the inverse of $2$ in the field $\F$ and the action is understood to be on the left.
It is an easy calculation to show that
\begin{align*}
 W(\vv+\vw) & =\pair{\vw}{2^{-1}\vv}W(\vv)W(\vw) \qquad \forall \vv,\vw\in V\,.
\end{align*}
Hence, the correspondence
\(
\vv  \mapsto W(\vv)
\)
is a projective representation of $V$ into $\lh$. It is called the  {\em Weyl-Heisenberg representation} of $V$.
The  Stone-von Neumann theorem asserts that the Weyl-Heisenberg representation  is essentially unique (see \cite{Mackey1949}). The following {\em commutation relations} are immediate
\begin{align*}
W(\vv)W(\vw)  & = \dual{\vv}{\vw} W(\vw)W(\vv) \qquad \forall \vv,\vw\in V \, .
\end{align*}

For all $L\in\sas$, we define $|V/L| = |\F|$ operators by formula
\begin{equation}\label{eq:defP}
\Po^L(\vv+L) = \frac{1}{|\F|} \sum_{\vl\in L} \dual{\vv}{\vl} W(\vl) \qquad \forall \vv+L\in V/L \, .
\end{equation}
We then have the following facts.

\begin{proposition}\label{prop:propP}
\begin{enumerate}[{\rm (i)}]
\item For all $\vl\in L$,
\begin{align}
\label{eq:eqW}
W(\vl) = \sum_{\vv+L\in V/L} \overline{\dual{\vv}{\vl}} \Po^L(\vv+L) \, .
\end{align}
\item For all $\vv,\vw\in V$,
\begin{align}
\label{eq:eqW1}
W(\vw) \Po^L(\vv+L) W(\vw)^\ast = \Po^L(\vv+\vw+L) \, .
\end{align}
\item The map $\Po^L : V/L \to \lh$ is a PVM with ${\rm rank}\,{\Po^L(\vv+L)}=1$ for all $\vv\in V$.
\item The PVMs $\{\Po^L\mid  L\in\sas\}$, are mutually unbiased.
\end{enumerate}
\end{proposition}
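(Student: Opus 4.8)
The whole proposition rests on one structural observation: since $L$ is a one-dimensional subspace it is isotropic, i.e.\ $\sym{\vl}{\vl'}=0$ for all $\vl,\vl'\in L$. Two consequences follow immediately. First, $\dual{\vv}{\vl}$ depends only on the coset $\vv+L$ (because $\dual{\vl'}{\vl}=1$ for $\vl'\in L$), so the operators $\Po^L(\vv+L)$ in \eqref{eq:defP} are well defined on $V/L$. Second, the projective cocycle trivialises on $L$: from $W(\vl)W(\vl')=\overline{\pair{\vl'}{2^{-1}\vl}}\,W(\vl+\vl')$ and $\sym{\vl'}{\vl}=0$ we get $W(\vl)W(\vl')=W(\vl+\vl')$, so $W|_L$ is an honest unitary representation of the abelian group $(L,+)$. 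The two computational workhorses I will use throughout are the character orthogonality relation on the quotient, $\sum_{\vv+L\in V/L}\dual{\vv}{\vm}=|\F|\,\delta_{\vm,\vnull}$ for $\vm\in L$ (which uses nondegeneracy of the symplectic form together with the fact that $c\mapsto\omega^{\Tr{ca}}$ is a nontrivial character of $\F$ unless $a=0$), and the trace formula $\tr{W(\vv)}=|\F|\,\delta_{\vv,\vnull}$ (obtained by evaluating $W(\vv)$ on the canonical position basis: a nonzero shift $v_1$ kills the diagonal, and when $v_1=0$ one is left with the character sum $\sum_{x}\omega^{\Tr{v_2 x}}$).

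With these in hand the four parts are short. For (i) I substitute \eqref{eq:defP} into the right-hand side, interchange the two sums, and collapse the inner coset sum $\sum_{\vv+L}\overline{\dual{\vv}{\vl}}\dual{\vv}{\vl'}=\sum_{\vv+L}\dual{\vv}{\vl'-\vl}=|\F|\,\delta_{\vl,\vl'}$ by character orthogonality, recovering $W(\vl)$. Part (ii) is pure covariance: since each $W(\vw)$ is unitary, the commutation relation gives $W(\vw)W(\vl)W(\vw)^\ast=\dual{\vw}{\vl}\,W(\vl)$, and inserting this into \eqref{eq:defP} together with bilinearity $\dual{\vv}{\vl}\dual{\vw}{\vl}=\dual{\vv+\vw}{\vl}$ turns $\Po^L(\vv+L)$ into $\Po^L(\vv+\vw+L)$.

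For (iii) I verify the three PVM requirements in turn. Self-adjointness follows from $W(\vl)^\ast=W(-\vl)$ and the reindexing $\vl\mapsto-\vl$, using $\overline{\dual{\vv}{-\vl}}=\dual{\vv}{\vl}$. Completeness $\sum_{\vv+L}\Po^L(\vv+L)=I$ drops out of the coset orthogonality relation, which leaves only the $\vl=\vnull$ term $W(\vnull)=I$. The crux is the product $\Po^L(\vv+L)\Po^L(\vu+L)$: using $W(\vl)W(\vl')=W(\vl+\vl')$ I substitute $\vl''=\vl+\vl'$, factor out the sum over $\vl''$, and am left with the inner character sum $\sum_{\vl\in L}\dual{\vv-\vu}{\vl}=|\F|\,\delta_{[\vv],[\vu]}$ (here $[\cdot]$ denotes the coset, and I use $L^\perp=L$ in the two-dimensional symplectic space). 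This shows at once that the operators are idempotent and mutually orthogonal. Finally, rank one follows because $\tr{\Po^L(\vv+L)}=1$ by the trace formula, so these $d=|\F|$ orthogonal projections on $\hi$ each have rank exactly one.

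Part (iv) is the same product computation across two different lines. For $L\neq L'$ I expand $\tr{\Po^L(\vv+L)\Po^{L'}(\vu+L')}$ using $\tr{W(\vl)W(\vl')}=\overline{\pair{\vl'}{2^{-1}\vl}}\,|\F|\,\delta_{\vl+\vl',\vnull}$; the constraint $\vl+\vl'=\vnull$ together with $L\cap L'=\{\vnull\}$ forces $\vl=\vl'=\vnull$, leaving a single surviving term equal to $|\F|^{-2}\cdot|\F|=1/d$. I expect no real obstacle beyond careful bookkeeping: the only genuinely substantive inputs are the trace formula $\tr{W(\vv)}=|\F|\,\delta_{\vv,\vnull}$ and the isotropy of $L$, and once those are isolated every identity reduces to an orthogonality-of-characters argument. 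The single place demanding care is the reindexing in the product $\Po^L(\vv+L)\Po^L(\vu+L)$ of part (iii), where one must keep the surviving sum over $\vl''$ cleanly separated from the collapsing sum over $\vl$.
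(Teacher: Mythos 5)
Your proposal is correct, but there is nothing in the paper to compare it against: the authors explicitly omit the proof of this proposition, stating that it ``though elementary, requires some machinery'' and deferring it to a forthcoming paper. So your argument stands on its own merits, and on those merits it holds up. The two preliminary facts you isolate are exactly the right ones, and both are established soundly: the trace formula $\tr{W(\vv)}=|\F|\,\delta_{\vv,\vnull}$ follows from evaluation in the position basis, and the coset character orthogonality is legitimate because isotropy of $L$ (which holds since $\sym{\vl}{\vl'}=0$ for any two elements of a one-dimensional subspace) makes $\dual{\vv}{\vl}$ constant on cosets $\vv+L$, so the sum over $V/L$ is well defined. The verifications then go through as you describe: part (i) collapses under $\sum_{\vv+L}\dual{\vv}{\vl'-\vl}=|\F|\,\delta_{\vl,\vl'}$; part (ii) needs only $W(\vw)W(\vl)W(\vw)^\ast=\dual{\vw}{\vl}W(\vl)$ and additivity of the pairing in its first argument; in part (iii) the trivialised cocycle $W(\vl)W(\vl')=W(\vl+\vl')$ on $L$ (valid because $\pair{\vl'}{2^{-1}\vl}=1$ by isotropy) together with the identification of $\{\vw : \Tr{\sym{\vw}{\vl}}=0 \ \forall\vl\in L\}$ with $L$ itself---which uses nondegeneracy of the trace form, as you note---yields idempotence, mutual orthogonality, and then rank one from $\tr{\Po^L(\vv+L)}=1$; and in part (iv) the constraint $\vl+\vl'=\vnull$ with $\vl\in L$, $\vl'\in L'$, $L\cap L'=\{\vnull\}$ kills all terms except $\vl=\vl'=\vnull$, leaving exactly $1/|\F|$. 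The one point deserving the care you flag---keeping the surviving sum over $\vl''=\vl+\vl'$ separate from the collapsing inner sum in the product computation---is handled correctly, since the reindexed phase $\dual{\vu}{\vl''-\vl}$ factors as $\dual{\vu}{\vl''}\overline{\dual{\vu}{\vl}}$ by additivity in the second argument. In short, your proof fills a gap the paper leaves open, and it does so with less machinery than the authors' remark suggests is needed.
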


The proof of the above proposition, though elementary, requires some machinery. In order to keep the exposition short we refer to a forthcoming paper for its proof.

We are considering the vector space $V$ as the phase space for a quantum system.
From this perspective any POVM defined on $V$ can be called a phase space observable.
However, the most important phase space observables are covariant under the Weyl-Heisenberg representation.
We say that a POVM $\Go : V \to \lh$ is a {\em covariant phase space observable} if 
\begin{equation*}
\label{eq:phasespaceobs}
W(\vw)\Go(\vv)W(\vw)^\ast = \Go(\vv+\vw)
\end{equation*}
for all $\vv,\vw\in V$. 
It is well known \cite{Holevo1979} and easy to check in this finite dimensional case that every covariant phase space observable has the form
$$
\Go(\vv) \equiv \Go_T (\vv) := \frac{1}{|\F|} W(\vv)TW(\vv)^\ast
$$
for a unique operator $T\geq 0$ with $\tr{T} = 1$.

For any line $L\in\sas$, we define the \emph{$L$-marginal of $\Go_T$} by 
\begin{equation}\label{eq:defGTL}
\Go_T^L (\vv+L) = \sum_{\vl\in L} \Go_T (\vv+\vl) \qquad \forall \vv+L \in V/L \, .
\end{equation}
It is clear that $\Go_T^L (\vv+L)\geq 0$ and $\sum_{\vv+L \in V/L}\Go_T^L (\vv+L)=\id$, hence $\Go_T^L$ is a POVM on $V/L$.
We further observe that $\Go_T^L$ satisfies the same covariance properties as $\Po^L$, i.e.,
$$
W(\vw)\Go_T^L (\vv+L)W(\vw)^\ast = \Go_T^L (\vv+\vw+L) 
$$
for all $\vv,\vw\in V$. 
In particular, 
\begin{equation}\label{eq:com}
[\Go_T^L (\vv+L) \, , \, W(\vl)] = 0
\end{equation}
for all $\vv\in V$ and $\vl\in L$.

\begin{theorem}\label{prop:propGTL}
Suppose $T\geq 0$ with $\tr{T} = 1$, and let $\Go_T$ be the corresponding covariant phase space observable.
\begin{enumerate}[{\rm (a)}]
\item For each $L\in\sas$, the $L$-marginal $\Go_T^L$ of $\Go_T$ is a smearing of $\Po^L$.
More precisely, for all $\vv\in V$,
\begin{equation}\label{eq:convGTL}
\Go_T^L (\vv+L) = \sum_{\vw+L\in V/L} \Lambda_T^L (\vv-\vw+L) \Po^L (\vw+L) \, ,
\end{equation}
where $\Lambda_T^L : V/L \to [0,1]$ is the probability measure
\begin{equation}\label{eq:defLam}
\Lambda_T^L(\vv+L) = \tr{T\Po^L (-\vv+L)} \, .
\end{equation}
\item The $|\F|+1$ POVMs $\{\Go_T^L \mid L\in\sas\}$ are mutually unbiased.
\end{enumerate}
\end{theorem}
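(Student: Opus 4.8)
The plan is to establish part (a) first, because the smearing formula it produces makes part (b) almost immediate. The central observation for (a) is that the commutation relation \eqref{eq:com} forces each marginal element $\Go_T^L(\vv+L)$ to lie in the commutant of the family $\{W(\vl)\mid \vl\in L\}$. By the inversion formula \eqref{eq:eqW}, this family spans the same abelian algebra as the rank-1 projections $\{\Po^L(\vw+L)\}_{\vw+L\in V/L}$ from Proposition \ref{prop:propP}(iii). Hence $\Go_T^L(\vv+L)$ is diagonal with respect to the PVM $\Po^L$ and admits an expansion $\Go_T^L(\vv+L)=\sum_{\vw+L} c_{\vv,\vw}\,\Po^L(\vw+L)$. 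Since the $\Po^L(\vw+L)$ are mutually orthogonal rank-1 projections, the coefficients are recovered by $c_{\vv,\vw}=\tr{\Go_T^L(\vv+L)\,\Po^L(\vw+L)}$.

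The main computation---and the step I expect to be the crux---is evaluating this trace. Writing $\Go_T^L(\vv+L)=\tfrac{1}{|\F|}\sum_{\vl\in L}W(\vv+\vl)\,T\,W(\vv+\vl)^\ast$ via \eqref{eq:defGTL}, then using cyclicity of the trace together with the covariance relation \eqref{eq:eqW1}, one finds $W(\vv+\vl)^\ast\,\Po^L(\vw+L)\,W(\vv+\vl)=\Po^L(\vw-\vv+L)$. The key point is that the result is independent of $\vl$, because $\vl\in L$ shifts the affine line $\vw-\vv+L$ trivially; the sum over the $|\F|$ elements of $L$ therefore exactly cancels the prefactor $1/|\F|$. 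This yields $c_{\vv,\vw}=\tr{T\,\Po^L(\vw-\vv+L)}=\Lambda_T^L(\vv-\vw+L)$ by the definition \eqref{eq:defLam}, establishing the smearing formula \eqref{eq:convGTL}.

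For part (b) I would invoke the smearing formula from (a) directly. For $L\neq L'$ and arbitrary cosets,
\begin{align*}
\tr{\Go_T^L(\vv+L)\,\Go_T^{L'}(\vw+L')}
&= \sum_{\vu,\vu'} \Lambda_T^L(\vv-\vu+L)\,\Lambda_T^{L'}(\vw-\vu'+L')\,\tr{\Po^L(\vu+L)\,\Po^{L'}(\vu'+L')}.
\end{align*}
By Proposition \ref{prop:propP}(iv) the PVMs $\Po^L$ and $\Po^{L'}$ are mutually unbiased, so each factor $\tr{\Po^L(\vu+L)\Po^{L'}(\vu'+L')}$ equals $1/|\F|$. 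The two weight sums then factor out, and each equals $1$, since $\Lambda_T^L$ is a probability measure: reindexing and using that $\Po^L$ sums to the identity gives $\sum_{\vu}\Lambda_T^L(\vv-\vu+L)=\tr{T\sum_{\vu}\Po^L(\vu+L)}=\tr{T}=1$. Hence the product equals $1/|\F|=1/d$, the mutual-unbiasedness condition. This argument is precisely the easy direction of Theorem \ref{thm:commMUPs<->MUBs}---the implication that does not use the overlap property---specialised to the present covariant setting, so one may alternatively cite that theorem once the unit-trace property $\tr{\Go_T^L(\vv+L)}=1$ has been verified.
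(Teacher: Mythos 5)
Your proposal is correct and follows essentially the same route as the paper's own proof: for (a), commutation with the operators $W(\vl)$, $\vl\in L$, yields the diagonal expansion of $\Go_T^L(\vv+L)$ in the rank-1 PVM $\Po^L$, and the coefficients are evaluated by the trace computation using the covariance relation \eqref{eq:eqW1} and the fact that the resulting coset $\vw-\vv+L$ is independent of $\vl$, cancelling the $1/|\F|$ prefactor; for (b), the smearing formula combined with item (iv) of Proposition \ref{prop:propP} and the normalisation of the probability measures $\Lambda_T^L$ gives the value $1/|\F|$. There is no gap, and your closing remark about citing the easy direction of Theorem \ref{thm:commMUPs<->MUBs} is a harmless alternative phrasing of the same computation.
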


\begin{proof}
(a) By \eqref{eq:com} and the definition of $\Po^L$, we have
$$
[\Go_T^L(\vv+L) \, , \, \Po^L(\vw+L)] = 0 \qquad \forall \vw\in V \, ,
$$
hence, in view of item (iii) of Proposition \ref{prop:propP},
$$
\Go_T^L(\vv+L) = \sum_{\vw+L\in V/L} K(\vv+L,\vw+L) \Po^L(\vw+L) \quad \forall \vv\in V
$$
for some function $K:V/L\times V/L \to \C$. 
Next, we have:
\begin{widetext}
\begin{align*}
K(\vv+L,\vw+L)  = \tr{\Go_T^L(\vv+L) \Po^L(\vw+L)} 
 = \frac{1}{|\F|} \sum_{\vl\in L} \tr{TW(\vv+\vl)^\ast \Po^L(\vw+L) W(\vv+\vl)} .
\end{align*}
Using \eqref{eq:eqW1} this reduces to
\begin{align*}
K(\vv+L,\vw+L)  
=\tr{T\Po^L(\vw-\vv+L)} \equiv \Lambda_T^L (\vv-\vw+L) \, .
\end{align*}

(b) If $L_1,L_2\in\sas$, with $L_1\neq L_2$, then, for all $\vv_1,\vv_2\in V$,
\begin{align*}
 \tr{\Go_T^{L_1}(\vv_1+L_1) \Go_T^{L_2}(\vv_2+L_2)} 
= \frac{1}{|\F|} \sum_{\vw_1+L_1\in V/L_1} \sum_{\vw_2+L_2\in V/L_2} \Lambda_T^{L_1} (\vv_1-\vw_1+L_1) \Lambda_T^{L_2} (\vv_2-\vw_2+L_2)  = \frac{1}{|\F|} \, ,
\end{align*}
\end{widetext}
where the first equality arises from the use of item (iv) of Proposition \ref{prop:propP}, and the final equality is due to $\Lambda_T^{L_i}$ $(i=1,2)$ being probability measures.
\end{proof}

At this point, we recall the qubit example discussed in Section \ref{sec:qubit}.
If we start from three mutually unbiased POVMs $\Po^{k}(\pm) = \frac{1}{2}(I \pm \boldsymbol{n}_k\cdot\boldsymbol{\sigma})$, where the vectors $\boldsymbol{n}_1$, $\boldsymbol{n}_2$ and $\boldsymbol{n}_3$ are orthogonal unit vectors, then it is straightforward to verify that a covariant phase space observable $\Go_T$ is a SIC if the generating operator $T$, now thought of as a state, is rank-1 and satisfies
\begin{align}
\tr{T \boldsymbol{n}_1\cdot\boldsymbol{\sigma}} = \tr{T \boldsymbol{n}_2\cdot\boldsymbol{\sigma}} = \tr{T \boldsymbol{n}_3\cdot\boldsymbol{\sigma}} \, .
\end{align}
This condition simply means that the expectations values of the state $T$ in the measurements of $\Po^{1}$, $\Po^{2}$ and $\Po^{3}$ are equal.
 
We are aiming for a comparable condition in the odd prime power dimensions, but now have to look for more complicated measurements than just $\Po^L$. 
For all $L\in\sas$, we define the PVM $\Qo^L : V/L \to \elle{\hi\otimes\hi}$ by
\begin{equation*}
\Qo^L (\vu+L) = \sum_{\vw\in V/L} \Po^L (\vu+\vw+L) \otimes \Po^L (\vw+L) \, .
\end{equation*}
The measurement outcome distribution of $\Qo^L$ gives the difference of two $\Po^L$-measurements performed on two identical systems.

The following is the main result of the section.

\begin{theorem}
\label{theo:opcharct}
Suppose $T\geq 0$ with $\tr{T} = 1$, and let $\Go_T$ be the corresponding covariant phase space observable.
The following facts are equivalent:
\begin{enumerate}[{\rm (i)}]
\item $\Go_T$ is SIC;
\item $\tr{\Go_T^L (\vv+L) \Go_T^L (\vw+L)} = (1+\delta_{\vv+L,\vw+L}) / (|\F|+1)$ for all $L\in\sas$ and $\vv,\vw\in V$;
\item $\tr{(T\otimes T) \Qo^L (\vv+L)} = (1+\delta_{\vv+L,\vnull+L}) / (|\F|+1)$ for all $L\in\sas$ and $\vv\in V$;
\item ${|\tr{TW(\vv)}|^2} = (1+|\F|\delta_{\vv,\vnull}) / (|\F|+1)$ for all $\vv\in V$.
\end{enumerate}
\end{theorem}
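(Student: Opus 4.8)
The plan is to prove the four equivalences by establishing a cycle of implications rather than proving each pair directly, exploiting the fact that (ii) is nothing but the SIC-compatibility conditions \eqref{eqn:siccompatd-1}--\eqref{eqn:siccompatd-2} specialised to the marginals $\Go_T^L$, and that (iii) and (iv) are successive ``unsmearings'' of (ii) back through the PVM structure. Concretely I would show $(i)\Rightarrow(ii)$, then $(ii)\Leftrightarrow(iii)$, then $(iii)\Leftrightarrow(iv)$, and finally close the loop with $(iv)\Rightarrow(i)$.

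For $(i)\Rightarrow(ii)$ I would invoke the structural results already in hand: the $|\F|+1$ partitions $V/L$ share the $1$-overlap property (stated just before Proposition \ref{prop:propP}), so if $\Go_T$ is a SIC then the $\Go_T^L$ are its marginals associated with $d$-partitions having the $1$-overlap property, and Theorem \ref{thm:mup-marg}, equations \eqref{eqn:siccompatd-1}--\eqref{eqn:siccompatd-2}, yields exactly (ii). The equivalence $(ii)\Leftrightarrow(iii)$ is essentially a bookkeeping step: using the definition of $\Qo^L$ and linearity of the trace, $\tr{(T\otimes T)\Qo^L(\vv+L)}=\sum_{\vw}\tr{T\Po^L(\vv+\vw+L)}\tr{T\Po^L(\vw+L)}$, and by the smearing formula \eqref{eq:convGTL} together with \eqref{eq:defLam} this sum is precisely $\tr{\Go_T^L(\vv+L)\Go_T^L(\vnull+L)}$; covariance under $W$ then lets one translate the reference point $\vnull$ to an arbitrary $\vw$, so (iii) for all $\vv$ is equivalent to (ii) for all $\vv,\vw$.

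The key analytic step is $(iii)\Leftrightarrow(iv)$, where I would pass from the line-by-line phase-space statement to the single Weyl-operator statement by Fourier inversion. Expanding $\Po^L$ via \eqref{eq:defP} and using the commutation/covariance identities of Proposition \ref{prop:propP}, one can rewrite $\tr{(T\otimes T)\Qo^L(\vv+L)}$ in terms of the quantities $|\tr{TW(\vl)}|^2$ with $\vl$ ranging over the line $L$; summing the $L$-indexed identities over all $L\in\sas$ and using that the lines $L_\alpha$ partition $V$ (facts (a)--(b) above, with only $\vnull$ shared) lets one isolate $|\tr{TW(\vv)}|^2$ for each individual $\vv$. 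I expect this inversion --- keeping careful track of the $\delta$-terms and of the single common point $\vnull$ so that the $|\F|$-weighting in (iv) emerges correctly --- to be the main obstacle, since it is where the discrete symplectic Fourier transform $\ff$ and the orthogonality of characters do the real work.

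Finally, for $(iv)\Rightarrow(i)$ I would compute $\tr{\Go_T(\vv)\Go_T(\vw)}$ directly from $\Go_T(\vv)=|\F|^{-1}W(\vv)TW(\vv)^\ast$: expanding both factors and using the commutation relation $W(\vv)W(\vw)=\dual{\vv}{\vw}W(\vw)W(\vv)$, this trace reduces to an expression in the functions $\vu\mapsto|\tr{TW(\vu)}|^2$ evaluated at $\vu=\vv-\vw$, whereupon (iv) forces $\tr{\Go_T(\vv)\Go_T(\vw)}$ to take the values $1/d^2$ for $\vv=\vw$ and $1/(d^2(d+1))$ otherwise, which are exactly the SIC conditions \eqref{eqn:SIC1}--\eqref{eqn:SIC2}. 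This last implication is essentially a covariant Parseval computation and should be routine once the character orthogonality relations are set up, so the burden of the proof really rests on the Fourier step in $(iii)\Leftrightarrow(iv)$.
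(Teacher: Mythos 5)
Your proposal is correct, and its first three steps coincide with the paper's proof: (i)$\Rightarrow$(ii) via Theorem \ref{thm:mup-marg} applied to the $1$-overlap partitions $V/L$; (ii)$\Leftrightarrow$(iii) via the convolution identity $\tr{\Go_T^L(\vv+L)\,\Go_T^L(\vw+L)}=\tr{(T\otimes T)\,\Qo^L(\vv-\vw+L)}$; and (iii)$\Leftrightarrow$(iv) via symplectic Fourier inversion along the line $L=\F\vv$ (the paper's version of this is the identity $\tfrac{1}{|\F|}\sum_{\vu\in V}\dual{\vv}{\vu}\Qo^L(\vu+L)=W(\vv)\otimes W(\vv)^*$; no summation over all lines is actually needed, since for each $\vv$ one only inverts along the line containing it). Where you genuinely depart from the paper is in closing the equivalence. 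The paper proves (ii)$\Rightarrow$(i) by invoking the general reconstruction machinery of Sections \ref{sec:SIC->MUB} and \ref{sec:MUB->SIC}: item (b) of Theorem \ref{prop:propGTL} plus item (ii) put the marginals $\Go_T^L$ under the hypotheses of Theorem \ref{thm:magic}, and Proposition \ref{prop:magicform} identifies the reconstructed SIC system with $\Go_T$ itself, positivity being automatic since $\Go_T$ is a POVM. You instead close the cycle with a self-contained (iv)$\Rightarrow$(i) Parseval-type computation. That step does work, with one correction to your phrasing: writing $\tr{\Go_T(\vv)\Go_T(\vw)}=|\F|^{-2}\tr{TW(\vu)TW(\vu)^*}$ with $\vu=\vw-\vv$ and expanding $T$ in the orthogonal Weyl basis, one gets
\begin{equation*}
\tr{\Go_T(\vv)\Go_T(\vw)}=\frac{1}{|\F|^3}\sum_{\vx\in V}\overline{\dual{\vu}{\vx}}\,\abs{\tr{TW(\vx)}}^2 ,
\end{equation*}
i.e.\ the \emph{symplectic Fourier transform} of $\vx\mapsto|\tr{TW(\vx)}|^2$ evaluated at $\vu$, not that function ``evaluated at $\vu=\vv-\vw$'' as you state; substituting the values in (iv) and using $\sum_{\vx\in V}\overline{\dual{\vu}{\vx}}=|\F|^2\delta_{\vu,\vnull}$ then yields exactly $1/d^2$ and $1/(d^2(d+1))$. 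Your own label ``covariant Parseval computation'' indicates you intend precisely this, so I regard it as an imprecision rather than a gap. As for what each route buys: the paper's choice deliberately exhibits the theorem as an application of its general MUB-to-SIC bridge, which is a stated aim of that section; your route is more elementary and self-contained, using only character orthogonality and the Weyl-operator basis, and it bypasses the combinatorial apparatus (Theorem \ref{thm:magic} and Proposition \ref{prop:magicform}) entirely in the converse direction.
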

\begin{proof}

(i) $\Leftrightarrow$ (ii)
If $\Go_T$ is SIC, item (ii) follows at once from equations \eqref{eqn:siccompatd-1} and \eqref{eqn:siccompatd-2} in Theorem \ref{thm:mup-marg}. \\
For the converse implication notice that, by item (b) of  Theorem \ref{prop:propGTL}, the $d+1$ marginals  $\Go_T^L$ of a generic phase space observable $\Go_T$ are mutually unbiased.
If $\Go_T$ also satisfies item (ii),  the hypothesis of Theorem \ref{thm:magic} are met. Hence the operators
\[
\frac{1}{d} \left(
\sum_{L\in \sas} \Go_T^L(\vv+L)-I
\right)
\]
form a SIC system that, by Proposition \ref{prop:magicform}, coincides with the phase space observable $\Go_T$. 

(ii) $\Leftrightarrow$ (iii) This is a consequence of
\begin{align*}
&\tr{\Go_T^L (\vv+L) \Go_T^L (\vw+L)}  \\
&\qquad= \sum_{\vu+L\in V/L} \Lambda^L_T (\vu+\vw-\vv+L) \Lambda^L_T (\vu+L) \\
& \qquad= \tr{(T\otimes T) \Qo^L (\vv-\vw+L)} \, .
\end{align*}

(iii) $\Leftrightarrow$ (iv) For all $\vv\in V$, setting $L=\F\vv$, we have
\begin{align*}
& \frac{1}{|\F|} \sum_{\vu\in V} \dual{\vv}{\vu} \Qo^L (\vu+L) = \sum_{\vu+L \in V/L} \dual{\vv}{\vu} \Qo^L (\vu+L) \\
& \;  = \sum_{\vu+L ,\, \vw+L \in V/L} \dual{\vv}{\vu} \Po^L (\vu+\vw+L) \otimes \Po^L (\vw+L)\\
& \; \textrm{(changing $\vu+L \to \vu-\vw+L$)}\\
& \; = \sum_{\vu+L ,\, \vw+L \in V/L} \dual{\vv}{\vu} \overline{\dual{\vv}{\vw}} \Po^L (\vu+L) \otimes \Po^L (\vw+L)\\
& \; \textrm{(by \eqref{eq:eqW})}\\
& \; = W(\vv)\otimes W(\vv)^* \, .
\end{align*}
Therefore,
$$
\frac{1}{|\F|} \sum_{\vu\in V} \dual{\vv}{\vu} \tr{(T\otimes T)\Qo^L (\vu+L)} = |\tr{TW(\vv)}|^2 \, ,
$$
and the equivalence of (iii) and (iv) follows by symplectic Fourier transform for the group $V$.
\end{proof}

A condition equivalent to (iii) in Theorem \ref{theo:opcharct} was already established in \cite{Appleby2007} (see 
equation (17) there).
The part of this condition that we would have anticipated after the qubit example is that the numerical values of the probabilities $\tr{(T\otimes T) \Qo^L (\cdot)}$ are the same for all $L\in\sas$.
It would be interesting to find a physical explanation for the mathematical fact that the correct values are the ones given in (iii) in Theorem \ref{theo:opcharct}.

\section{Conclusion}\label{sec:outlook}
		
		We have obtained an interesting, if incomplete, connection 
between SICs and MUBs that is based on an operational feature, namely joint measurability. 
The results of sections \ref{sec:SIC->MUB} and \ref{sec:MUB->SIC} 
can be summarized as follows: 
\begin{itemize}	
	\item[(a)] For a SIC $\sfg$ we may form $n$ $d$-partitions 
	of $\sfg$ that satisfy the 1-overlap property; the number $n$ is equal to 2 + the 
	maximum number of mutually orthogonal Latin squares of order $d$. The partitions 
	induce $n$ marginal POVMs which are mutually unbiased with respect to each other. 
	
	If the elements of each marginal POVM commute with each other,
	the marginals are smearings of spectral measures whose (rank-1) projections correspond to a system of $n$ MUBs.

	\item[(b)] From a complete set of $d+1$ MUBs one can construct $d+1$ commutative POVMs by smearing the corresponding PVMs, and via suitable
	smearings these POVMs are mutually unbiased. 
	By assuming that each mutually unbiased POVM element is the sum of $d$ operators 
	$\Go(i)$ out of a set $\{\Go(i)\}_{i=0}^{d^2-1}$ such that no two elements from the same POVM share an operator, 
	and any two elements from different POVMs share only one, one can show that the operators 
	$\Go(i)$ form a SIC system, that is they satisfy all properties of a SIC with the possible exception 
	of positivity, which does not follow naturally.
\end{itemize}

With these findings we have established the beginnings of a correspondence
between a SIC and a complete set of $d+1$ MUBs. The bridge providing this connection is
provided by the concept of a complete system of mutually unbiased POVMs that will be marginals of the SIC.
There are, however, obstructions to the existence of this bridge. 

First, the number of mutually unbiased POVMs that
can be obtained as marginals of a SIC is no less than 3 but also no greater than 2 plus the maximal
number of orthogonal Latin squares of order $d$. The maximal number of mutually unbiased POVMs will thus be $d+1$ 
when $d$ is a prime power, but it is unknown whether this number can be obtained for any other
values of $d$. 

Second, the problem remains of extracting MUBs from such a system of mutually unbiased POVMs; we have obtained
a solution only in the case where all the POVMs are commutative.

Third, starting with a system of $d+1$ MUBs (where they exist), one can obtain $d+1$ mutually unbiased POVMs by
applying an appropriate smearing by application of a doubly stochastic matrix to the spectral
measures associated with the MUBs. However, in order to reconstruct a SIC via the formula
\eqref{eq:magicform}, it is required that for the underlying $(d+1)\times d$ array there
exist $d^2$ downward paths of length $d$ which satisfy the 1-overlap property -- this was found again to be
equivalent to the requirement that there are $d-2$ orthogonal Latin squares of order $d$.

To illustrate the significance of these limitations, we revisit the case of a 6-dimensional Hilbert space.  
In this case there do not exist two mutually orthogonal Latin squares of order 6, and so there are no 
systems of mutually unbiased marginals of a SIC with  more than 3 elements.
As a consequence of Proposition \ref{prop:ABequiv}, since there do not exist seven $6$-partitions with the 1-overlap
property in a $6 \times 6$ array, there cannot exist 36 downward paths with the 1-overlap property in a $7 \times 6$ array. 
So even if we hypothesized the existence of 7 MUBs for $d=6$, it is impossible
to construct a SIC from them via formula \eqref{eq:magicform}, although SICs are known to exist for 
$d=6$ \cite{Renes2004}.

To summarize: The existence in $\hi_d$ of a bridge between a complete system of MUBs and 
a SIC via mutually unbiased POVMs and the formula \eqref{eq:magicform}  requires the existence of the 
maximal number $d-1$ of mutually orthogonal Latin squares. 

But even when the conditions of the recovery of a SIC system from a system of $d+1$ mutually unbiased POVMs
are fulfilled, the problem remains of ensuring the positivity of 
the operators $\Go(i)$, $i=1,\dots,d^2$. By construction, these operators 
satisfy $\tr{d\Go(i)}=1$ and $\tr{(d\Go(i))^2}=1$. Using an observation made in \cite{JoLi05}, the operators $d\Go(i)$ are
positive (and in fact rank-1 projections) if, in addition, $\tr{(d\Go(i))^3}=1$.
\\\indent
The positivity of the SIC system elements depend on the doubly stochastic matrices $\Lambda^k$ used in 
smearing our mutually unbiased spectral measures, but these pose their own set of problems. By demanding 
that their coefficients are positive (and hence real) and less than or equal to 1, and that their rows satisfy 
conditions \eqref{eq:dotprod1} and \eqref{eq:dotprod2}, we are able to drastically reduce the number of degrees 
of freedom. Indeed, if we simplify the situation by further requiring that the rows of a given doubly stochastic 
matrix $\Lambda^k$, whose rows satisfy \eqref{eq:dotprod1} and \eqref{eq:dotprod2}, are given as cyclic 
permutations of the elements in the first row, then the entire matrix is determined by a single element of this row, 
with all other elements of the matrix arising as a result. 
However, the value of this variable lies within a continuous interval, and no particular value need immediately stand out as significant. Further to this, it is not known beforehand how many different doubly stochastic matrices need to be used - that is, how many spectral measures are smeared in the same way and how many require a different smearing - in order to retrieve a SIC POVM (in the case of a SIC system, a single matrix - that is, all spectral measures are smeared in the same way - will suffice). The issue of the number of paths does not need discussion here; given that there are $d^{d+1}$ possible downwards paths through a $(d+1)\times d$ array, as described in section \ref{sec:MUB->SIC} (excluding the 1-overlap property here), it is computationally simple to perform a check of all possible paths to see if they form positive operators. 
\\\indent
As an example of these issues, in the case of $d=3$, where we begin with 4 mutually unbiased spectral measures, there are several possible doubly stochastic matrices, of which we choose one, that can be used on 3 of the spectral measures, and then the fourth measure must be smeared by the doubly stochastic matrix whose first row is $(1/2,1/2,0)$ and whose remaining rows are cyclic permutations of this. It is also the case that this doubly stochastic matrix can be used on all four mutually unbiased spectral measures and will still lead to a SIC POVM. 
\\\indent
The SIC--MUB bridge constructed here is based on two combinatorial schemes that were noted to be equivalent;
we showed that this dual combinatorial structure corresponds one-to-one to a dual geometric structure of affine planes
that was posited in a study of Wootters \cite{Wootters2006} in which he laid out analogous schemes for 
constructing MUBs and SICs. The operator systems needed to start Wootters' schemes are not given {\em a priori} 
while in our approach these operators were given and the geometric/combinatorial structures appeared by necessity.
\\\indent
Finally, in Section \ref{sec:WH-cov} we showed that all these facts find a natural application to finite covariant phase space observables in prime power dimension $d=p^n$ $(p\neq 2)$. Indeed, it turns out that a covariant phase space observable on a $d$-dimensional Hilbert space generates $d+1$ mutually unbiased and commutative marginal POVMs, which in turn are smearings of $d+1$ MUBs. Such marginal POVMs are associated to each $d$-partition of the finite phase space into parallel affine lines. Moreover, in Theorem \ref{theo:opcharct}, we  provided various characterizations of SIC covariant phase space observables; in particular, item (iii) establishes an operational link between the SIC property of the covariant phase space observable $\Go_T$ and the $d+1$ MUBs associated with the $d+1$ mutually unbiased marginal POVMs of $\Go_T$.

\section*{Acknowledgements}
T.B. and P.B. gratefully acknowledge support through the White Rose Studentship Network 
{\em Optimising Quantum Processes and Quantum Devices for future Digital Economy Applications}, which provided funds for
a studentship and mutual visits between the project partners. T.H. acknowledges support from the Academy of Finland (grant no.138135). A.~T.~gratefully acknowledges the financial support of the Italian Ministry of Education, University and Research (FIRB project RBFR10COAQ).
					
\bibliographystyle{unsrt}

\begin{thebibliography}{10}

\bibitem{Grassl2004}
M.~Grassl,
\newblock On {SIC}-{POVM}s and {MUB}s in dimension 6.
\newblock {\em arXiv:quant-ph/0406175} (2004).

\bibitem{Wootters2006}
W.K.~Wootters.
\newblock Quantum measurements and finite geometry.
\newblock {\em Found.~Phys.} {\bf 36},112--126 (2006).

\bibitem{Gurevich2008}
S.~Gurevich, R.~Hadani, N.~Sochen,
\newblock The finite harmonic oscillator and its applications to  sequences, communication, and radar.
\newblock {\em IEEE Trans. Inform. Theory} {\bf 54}, 4239--4253 (2008), 


\bibitem{Zauner1999}
G.~Zauner,
\newblock Quantum designs--foundations of a noncommutative design theory.
\newblock {\em Int.~J.~Quantum~Inf.} {\bf 9}, 445--507 (2011), 
originally published in German in 1999 as Ph.D.~Thesis, University of Vienna.

\bibitem{Bandyopadhyay2002}
S.~Bandyopadhyay, P.O.~Boykin, V.~Roychowdhury, F.~Vatan, 
\newblock A new proof for the existence of mutually unbiased bases.
\newblock {\em Algorithmica} {\bf 34}, 512--528 (2002).

\bibitem{Gibbons2004}
K.S.~Gibbons, M.J.~Hoffman,  W.K.~Wootters, 
\newblock Discrete phase space based on finite fields.
\newblock {\em Phys.~Rev.~A} {\bf 70}, 062101 (2004).

\bibitem{Renes2004}
J.M.~Renes, R.~Blume-Kohout, A.J.~Scott, C.M.~Caves,
\newblock Symmetric informationally complete quantum measurements.
\newblock {\em J.~Math.~Phys.} {\bf 45}, 2171--2180 (2004).

\bibitem{Saniga2004}
M.~Saniga, M.~Planat, H.~Rosu,
\newblock Mutually unbiased bases and finite projective planes.
\newblock {\em J.~Optics B:~Quant.~Semiclass.~Opt.} {\bf 6} ,L19 (2004).
	
\bibitem{Howe2005}
R.~Howe, 
\newblock Nice error bases, mutually unbiased bases, induced representations,
  the {H}eisenberg group and finite geometries.
\newblock {\em Indag. Math. (N.S.)} {\bf 16}, 553--583 (2005).

\bibitem{Appleby2009}
D.M.~Appleby,
\newblock {SIC}-{POVM}s and {MUB}s: geometric relationships in prime
  dimension.
\newblock {\em arXiv:0905.1428} (2009).

\bibitem{Paterek2009}
T.~Paterek, B.~Daki{\'c}, {\v{C}}.~Brukner,
\newblock Mutually unbiased bases, orthogonal latin squares, and
  hidden-variable models.
\newblock {\em Phys.~Rev.~A} {\bf 79}, 012109 (2009).

\bibitem{Durt2010}
T.~Durt, B.-G.~Englert, I.~Bengtsson,  K.~{\.Z}yczkowski,
\newblock On mutually unbiased bases.
\newblock {\em Int.~J.~Quantum~Inf.} {\bf 8}, 535--640 (2010).

\bibitem{ScottGrassl2010}
A.~J. Scott and M. Grassl,
\newblock Symmetric informationally complete positive-operator-valued measures:
  A new computer study.
\newblock {\em J.~Math.~Phys.} {\bf 51}, 042203 (2010).

\bibitem{Revzen2012}
M.~Revzen, 
\newblock Radon transform in finite Hilbert space
\newblock {\em Europhys.~Lett} {\bf 98}, 10001 (2012).

\bibitem{KlappRot2004}
A.~Klappenecker, M.~R{\"{o}}tteler,
\newblock Constructions of mutually unbiased bases.
\newblock In: G.L.~Mullen, A.~Poli, and H.~Stichtenoth, editors, {\em Finite
Fields and Applications}, volume 2948 of {\em Lecture Notes in Computer
Science}, pp.~137--144. Springer Berlin Heidelberg, 2004.

\bibitem{Wootters1989}
W.K.~Wootters, B.D.~Fields,
\newblock Optimal state-determination by mutually unbiased measurements.
\newblock {\em Ann.~Phys.~(N.Y.)} {\bf 191}, 363--381 (1989).

\bibitem{Butterley2007}
P.~Butterley, W.~Hall,
\newblock Numerical evidence for the maximum number of mutually unbiased bases
  in dimension six.
\newblock {\em Phys.~Lett.~A} {\bf 369}, 5--8 (2007).

\bibitem{ColCorbDurtGross2005}
S.~Colin, J.~Corbett, T.~Durt,  D.~Gross,
\newblock About {SIC} {POVM}s and discrete {W}igner distributions and discrete
  {W}igner distributions.
\newblock {\em J.~Optics B:~Quant.~Semiclass.~Opt.}  {\bf 7}, S778 (2005).

\bibitem{Appleby2012a} D.M.~Appleby, I.~Bengtsson, S.~Brierley, M.~Grassl, D.~Gross, J.-\AA.~Larsson,
The monomial representations of the Clifford group.
 {\em Quant.~Inf.~Comput.} {\bf 12}, 0404--0431 (2012) ({\em arXiv:1102.1268v2}).

\bibitem{Appleby2012b} D.M.~Appleby, I.~Bengtsson, S.~Brierley, \AA.~Ericsson, M.~Grassl,
J.-\AA.~Larsson
Systems of Imprimitivity for the Clifford Group.
{\em arXiv:1210.1055} (2012).
  
\bibitem{Hartmann2012} M.~Hartmann, {\em private communication} (2012).

\bibitem{Stinson1984}
D.R.~Stinson,
\newblock A short proof of the nonexistence of a pair of orthogonal latin
  squares of order six.
\newblock {\em J.~Combin.~Th.,~Ser.~A} {\bf 36}, 373 -- 376 (1984).
  



\bibitem{Mackey1949}
G.W.~Mackey,
\newblock A theorem of {S}tone and von {N}eumann.
\newblock {\em Duke Math. J.} {\bf 16}, 313--326, 1949.



\bibitem{Holevo1979}
A.S.~Holevo,
\newblock Covariant measurements and uncertainty relations.
\newblock {\em Rep. Math. Phys.} {\bf 16}, 385--400 (1979).
  
  
\bibitem{Appleby2007}
D.~M. Appleby, H.~B. Dang, C.~A. Fuchs,
\newblock Symmetric informationally-complete quantum states as analogues to orthonormal bases and minimum-uncertainty states.
\newblock{\em arXiv:0707.2071} (2007).  
  



\bibitem{JoLi05}
N.S.~Jones, N.~Linden,
\newblock Parts of quantum states.
\newblock {\em Phys.~Rev.~A} {\bf 71}, 012324 (2005).




\end{thebibliography}

\end{document}